\documentclass[
  aps,
  prl,
  a4paper,
  english,
  reprint,
  superscriptaddress
]{revtex4-1}
\usepackage[T1]{fontenc}
\usepackage[utf8]{inputenc}
\usepackage{graphicx}
\usepackage[american]{babel}
\usepackage{amsmath}
\usepackage{amssymb}
\usepackage{amsthm}
\usepackage[
  caption=false
]{subfig}
\usepackage{refstyle}
\usepackage{algorithm}
\usepackage{listliketab}
\usepackage{algpseudocode}
\usepackage[
  citecolor=blue,
  colorlinks,
  linkcolor=blue,
  urlcolor=blue,
]{hyperref}
\usepackage{xcolor}
\usepackage[babel=true]{csquotes}

\newtheorem{definition}{Definition}
\newtheorem{theorem}{Theorem}
\newtheorem{corollary}{Corollary}[theorem]
\newtheorem{lemma}[theorem]{Lemma}

\begin{document}

\global\long\def\pgr{\mathcal{P}_{\text{gr}}}
\global\long\def\pdb{\mathcal{P}_{\text{db}}}
\global\long\def\pov{\mathcal{P}_{\text{ov}}}
\global\long\def\pn{\mathcal{P}_{0}}
\global\long\def\df{d_{\text{f}}}

\title{Hysteretic percolation from locally optimal individual decisions}

\author{Malte Schr\"oder}
\email{malte.schroeder@tu-dresden.de}
\affiliation{
  Chair for Network Dynamics,
  Center for Advancing Electronics Dresden (cfaed) and Institute for Theoretical Physics, Technical University of Dresden
  01069 Dresden, Germany
}
\affiliation{
  Network Dynamics,
  Max Planck Institute for Dynamics and Self-Organization (MPIDS),
  37077 G\"ottingen, Germany
}

\author{Jan Nagler}
\email{jnagler@ethz.ch}
\affiliation{
   ETH Z\"urich, Wolfgang-Pauli-Strasse 27, CH-8093 Z\"urich, Switzerland
}
\affiliation{
Computational Social Science, Department of Humanities, Social and Political Sciences,
ETH Zurich, Clausiusstrasse 50, CH-8092 Zurich, Switzerland
}
\author{Marc Timme}
\email{marc.timme@tu-dresden.de}
\affiliation{
  Network Dynamics,
  Max Planck Institute for Dynamics and Self-Organization (MPIDS),
  37077 G\"ottingen, Germany
}
\affiliation{
  Chair for Network Dynamics,
  Center for Advancing Electronics Dresden (cfaed) and Institute for Theoretical Physics, Technical University of Dresden
  01069 Dresden, Germany
}

\author{Dirk Witthaut}
\email{d.witthaut@fz-juelich.de}
\affiliation{Forschungszentrum J\"ulich, Institute for Energy and Climate Research - Systems Analysis and Technology Evaluation (IEK-STE), 52428 J\"ulich, Germany }
\affiliation{Institute for Theoretical Physics, University of Cologne, 50937 K\"oln, Germany }

\begin{abstract}
\noindent The emergence of large-scale connectivity underlies the proper functioning of many networked systems, ranging from social networks and technological infrastructure to global trade networks. Percolation theory characterizes network formation following stochastic local rules, while optimization models of network formation assume a single controlling authority or one global objective function. In socio-economic networks, however, network formation is often driven by individual, locally optimal decisions. How such decisions impact connectivity is only poorly understood to date. Here, we study how large-scale connectivity emerges from decisions made by rational agents that individually minimize costs for satisfying their demand. We establish that the solution of the resulting nonlinear optimization model is exactly given by the final state of a local percolation process. This allows us to systematically analyze how locally optimal decisions on the micro-level define the structure of networks on the macroscopic scale. 
\end{abstract}

\maketitle

The proper functioning of networked systems fundamentally relies on their established large-scale connectivity. The global connectivity of social, economic and technological networks, such as the internet, trade and transportation networks, enables global communication and exchange, but also the rapid spreading of diseases \cite{albert00_attack_tolerance, satorras01_epidemic, hufnagel04_forecast, brockmann06_travel, brockmann13_spreading, albert02_network_review, newman03_network_review}. 
The loss of connectivity, or even of a single connection, may cause catastrophic effects such as the collapse of ecological networks, blackouts of power grids and other infrastructures, or even a global economic crisis \cite{sole01_ecological, schweitzer09_economic, buldyrev10_interdependent, havlin12_challenges, elliott14_financial, witthaut16_critical, ronellenfitsch17_power, nagler11_single_links}. Understanding how global connectivity emerges thus constitutes a key challenge in the field of network science.

Two major theoretical approaches have been established for revealing core properties of the emergence of large-scale connectivity. First, the theory of percolation provides fundamental insights about network formation processes by assuming that new links are established stochastically according to some local rule \cite{stauffer_92_percolation_book,grimmett99_percolation_book}.
For such percolation models, a variety of distinct structure-forming phenomena have been observed, where diverse network topologies emerge even for simple link formation rules \cite{newman03_network_review, achlioptas09_explosive, riordan11_explosive, schroder13_crackling, schroeder16_supercritical, dsouza15_review, watts98_smallworld, verma16_emergence, dsouza15_review}. Second, global optimization models explain network formation controlled by a 
central authority or driven by a single global objective function. These models have been studied to construct and understand aspects of the structure of various man-made or biological networks \cite{bertsekas98_optimization, gastner06_distribution, bohn07_optimaltransport, katifori10_optimal, katifori16_optimal, chklovskii02_volumeoptimized, memmesheimer06_designingNetworks, memmesheimer06_designingNeural}.

The formation of many socio-economic networks, however, is driven by local agents making individual decisions based on optimizing their own goals. 
Such settings result in networks constrained by many individual, yet interacting optimization problems. A similar motivation underlies game-theoretic models of network formation \cite{jackson96_strategic, watts01_dynamic, jackson02_network_evolution, jackson08_network_book, easley10_networks_crowds_markets, bala00_noncooperative, konig12_efficiency, even07_smallWorldGameTheory, atabati15_strategic}. These models allow a more detailed analysis of the formation process and the stability of the resulting network. Unsurprisingly, however, they are often hard, if not impossible, to solve, especially for larger networks, which limits mechanistic insights.

In this Letter we study network formation processes based on rational agents that individually optimize their own local objective function. 
Given costs for production and transaction (including transport) in an underlying transport network, each agent satisfies its own demand at minimal costs \cite{krugman91_geography,krugman91b_geography}. 
We establish an exact mapping between the solutions of the resulting nonlinear optimization problems and the states of a local percolation process. This enables us to systematically investigate optimal collective network formation and to reveal discontinuous transitions of the network structure and hysteresis. These effects are independent of the network topology or specific choices of the cost functions. The proposed framework thus bridges (deterministic) network optimization and stochastic local percolation.\\

\textit{From optimization to percolation} --- 
We analyze a network formation model based on the following fundamental network supply problem. 
Consider an underlying network of $N$ nodes and $M$ links, describing agents and \emph{potential} transportation routes, where each agent must satisfy its demand. We study the network of trades that actually evolves between the nodes, similar to bond percolation on an underlying network or random graph \cite{stauffer_92_percolation_book}.

\begin{figure}
\centering
\includegraphics[width=0.5\textwidth]{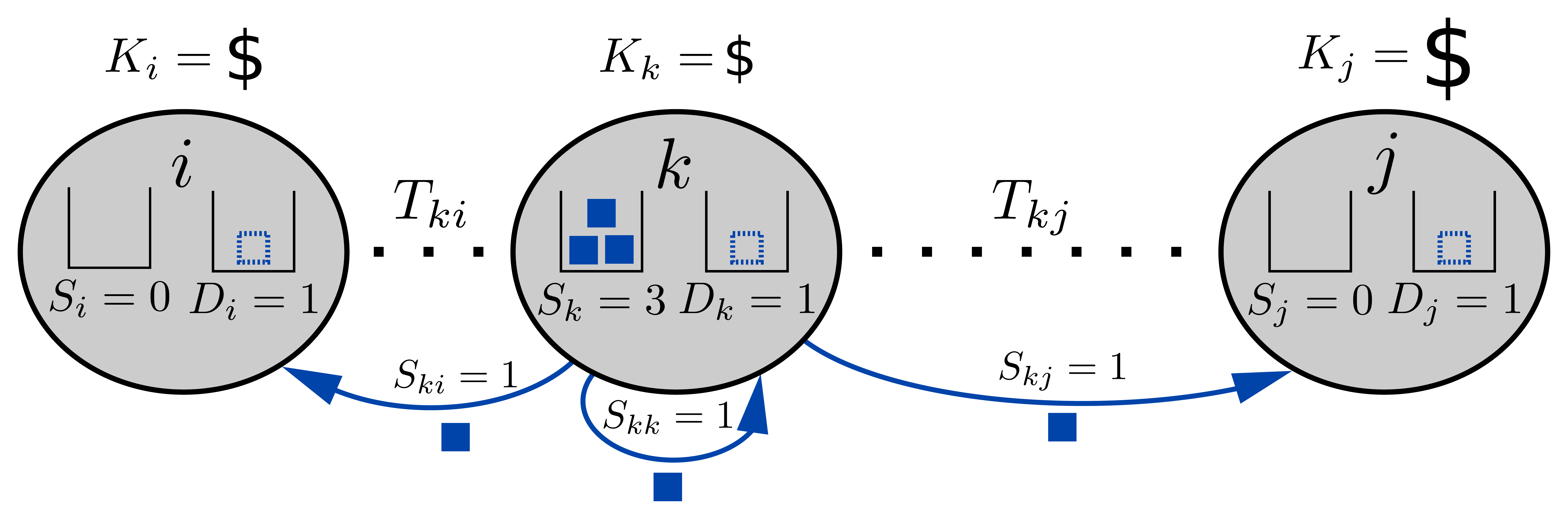}
\caption{
\textbf{Network Supply Problem.} Each agent $i$ satisfies its demand $D_i$ by purchasing supplies $S_{ki}$ from nodes $k$ at minimum costs via the available transport routes (dashed lines). The costs $K_{i} = \sum \nolimits_{k} K^P_{ki} + K^T_{ki}$ for these purchases include production costs $K^P_{ki}$ at node $k$ and transaction costs $K^T_{ki}$ between the nodes [Eq.~(\ref{eq:opt})]. The production costs depend on the total production $S_k = \sum_i S_{ki}$ at node $k$ [Eq.~(\ref{eq:prod_cost})]. The transaction costs are proportional to the effective distance $T_{ki}$ of the transport route between the nodes [Eq.~(\ref{eq:trans_cost})].
}
\label{fig:model_intro}
\end{figure}

Specifically, we assume each node \mbox{$i \in \left\{1,\ldots,N\right\}$} is an agent with a fixed demand $D_i$. The agent satisfies this demand by purchasing supplies $S_{ki} \ge 0$ from any nodes $k$, including itself, under the constraint that $\sum_k S_{ki} = D_i$. (Throughout the manuscript sums run over all nodes, here \mbox{$k \in \left\{1,\ldots,N\right\}$}, unless otherwise noted.) Each agent tries to achieve this with minimal cost
\begin{equation}
	K_i = \sum \nolimits_{k} K^P_{ki} + K^T_{ki} \,,	\label{eq:opt}
\end{equation}
including both production costs $K^P_{ki}$ at node $k$ as well as transaction costs $K^T_{ki}$ between the two nodes (see Fig.~\ref{fig:model_intro}).

The production costs depend nonlinearly on the purchases $S_{ki}$
\begin{equation}
	K^P_{ki} = p_k(S_k) S_{ki} \, \label{eq:prod_cost}
\end{equation}
since the costs per unit $p_k(S_k)$ typically depend on the total production $S_k = \sum \nolimits_j S_{kj}$ at node $k$. Often, the production costs per unit are decreasing, $\mathrm{d} p_k / \mathrm{d}S_k < 0$, accounting for, e.g., increased efficiency with increased production, commonly referred to as economies of scale.

The transaction costs are proportional to the amount of transported goods $S_{ki}$ and to the effective distance $T_{ki}$ between the two nodes,
\begin{equation}
	K^T_{ki} = p_T S_{ki} T_{ki} \,, 	\label{eq:trans_cost}
\end{equation}
where $T_{ki} = \sum_e t_e$ is given as the sum of the distances $t_e$ of all edges $e$ along the (shortest) path between $k$ and $i$ in the underlying transport network. The factor $p_T$ denotes the transaction costs per unit good and unit distance, describing effects of fuel costs or delivery times.

All agents solve their individual nonlinear optimization problem [Eq.(\ref{eq:opt})] simultaneously, defining the network of optimal purchases $S_{ki}$. The resulting state of this network then corresponds to a Nash-equilibrium \cite{osborne94_gameTheory_book}, where no agent can reduce its cost by changing its purchases given that all other purchases remain constant.\\

\begin{figure}
\centering
\includegraphics[width=0.5\textwidth]{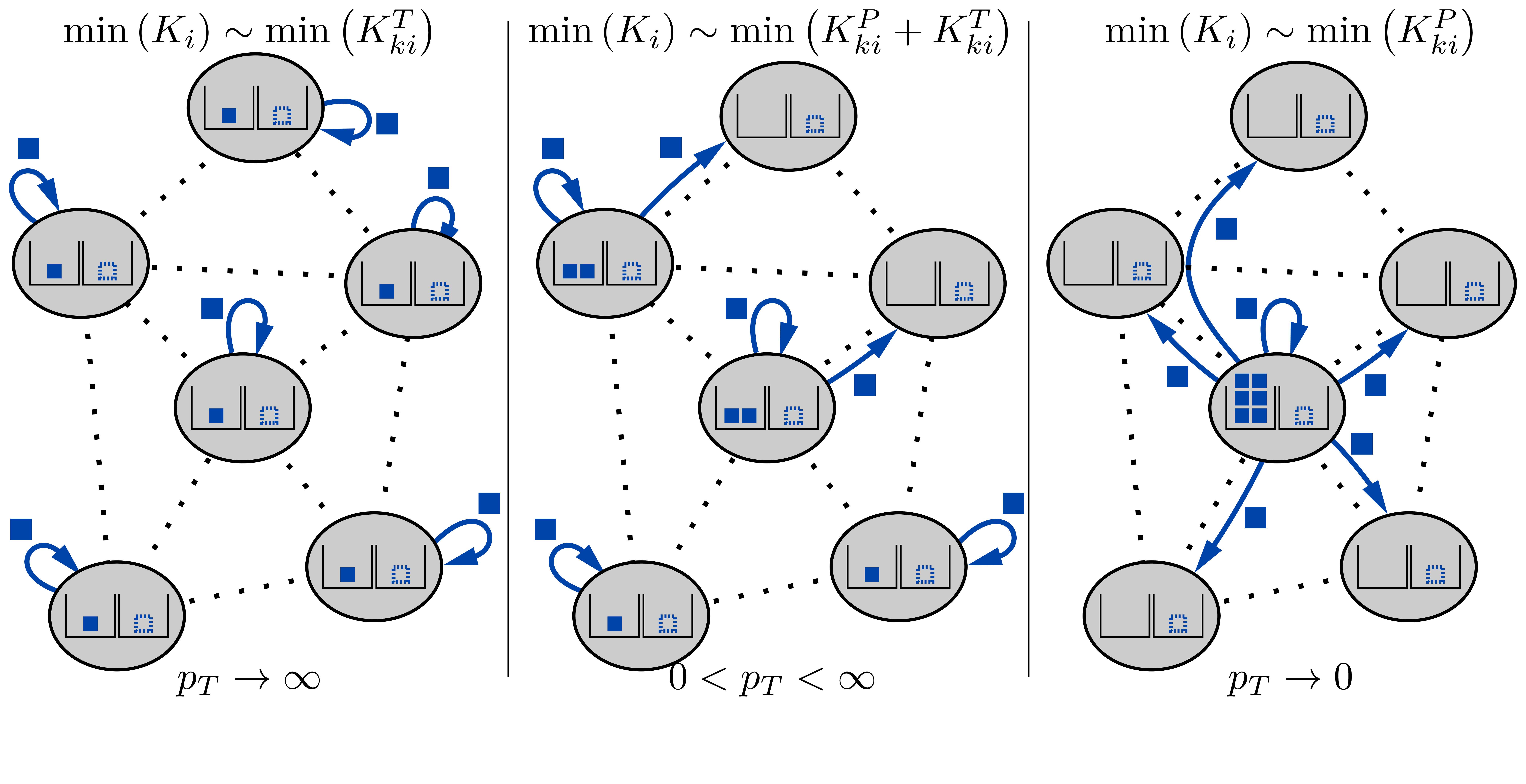}
\caption{
\textbf{Local percolation induced by optimization.} Every agent chooses a single optimal supplier to satisfy its demand if the production costs per unit $p_k(S_k)$ decrease with increasing production (economies of scale). When the transaction costs per unit $p_T$ are large, all agents make only local purchases from their own node. As $p_T$ decreases, the transaction costs decrease and agents start to purchase from other nodes. Some potential transportation routes (dashed lines) become active trade links (arrows). This trade network grows along the shortest paths in the network as transaction costs become smaller. As the transaction costs disappear ($p_T \rightarrow 0$), the network will become globally connected. All agents then share the same supplier, minimizing their production costs. If the demand of all agents is identical $D_i \equiv D$, the solution to the supply problem can be found locally and is given by the final state of a local percolation model.
}
\label{fig:model_schematic}
\end{figure}

\textit{Results} --- 
A simple, yet efficient solution to this problem can be found for non-increasing production costs per unit $p_k$. 
In this case, we find that any agent $i$ chooses a single supplier $i^*$, such that $S_{i^* i} = D_i$ and $S_{ki} = 0$ for $k \neq i^*$. In general each agent would have to check each node in the network to find its optimal supplier. Interestingly, if the demand of all agents is identical, $D_i \equiv D$, this optimal supplier can be found \emph{locally}: An agent $i$ just has to query its direct neighbors about their current suppliers to find its optimal supplier $i^*$. 

Here, we provide a brief intuitive argument: Any purchase of agent $i$ has to be transported via one of its neighbors $j$. Since transaction costs are additive over the transport links and all agents have identical demand, agent $j$ effectively solves the same optimization problem as agent $i$ (minus the transaction costs from $j$ to $i$). When $j$ finds its optimal supplier, this supplier is also a potential optimal supplier of agent $i$ when transporting via $j$. Thus, agent $i$ simply compares the suppliers of all its neighbors (all potentially optimal suppliers, one for each possible path of transport). One of these must then be the optimal supplier for agent $i$ (see Supplemental Material Sec.~I and II for a rigorous proof and details of the simulation \cite{supplement}). 

\begin{figure*}
\centering
\includegraphics[width=0.85\textwidth]{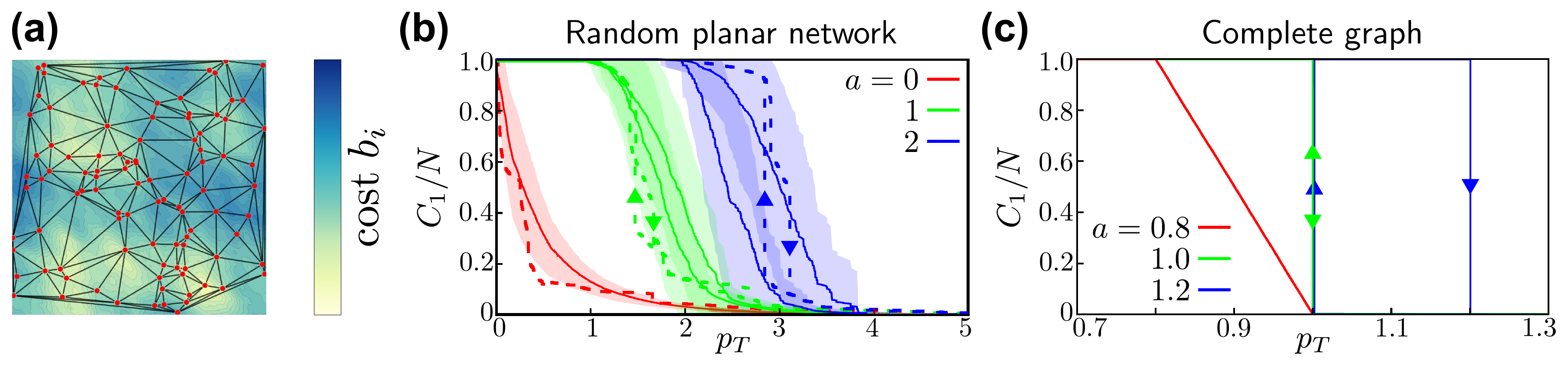}
\caption{
\textbf{From continuous transitions to discontinuous transitions and hysteresis.} The type of the transition to a single globally connected component changes depending on the strength of the economies of scale in the cost function. (a) Example of a small random network embedded in the unit square. The parameters $b_i \in [0,1]$ are given by the real part of a smooth random function $b(x,y)$ generated from $2^{10} \times 2^{10}$ discrete Fourier modes with mean amplitude $0$ and variance $S(\omega_x, \omega_y) = (\omega_x^2 + \omega_y^2)^{-2}$ (see also Supplemental Material Sec.~III \cite{supplement}). (b) Single realizations (dashed lines) of the evolution of the relative size of the largest cluster for a random planar network with $N=10^4$ nodes and average and standard deviation over 100 random realizations of the $b_i$ (solid lines and shading). (c) The predicted behavior for a completely connected network in the mean field limit $N \rightarrow \infty$ (see text). Weak economies of scale ($a < a_c$) lead to a continuous growth of the largest cluster. Sufficiently strong economies of scale ($a > a_c$) lead to a discontinuous transition. Reversing the process, i.e., increasing the transaction costs, leads to a direct reversal for weak economies of scale, but hysteresis is observed for strong economies of scale.
}
\label{fig:network_transition}
\end{figure*}

We investigate this local percolation model starting with large transaction costs, $p_T = \infty$, and, correspondingly, only internal production $i^* = i$ and $S_{ii} = D_i$. As $p_T$ decreases, the transaction costs decrease and agents minimize their total costs by establishing external purchases $S_{ki}$ from nodes with lower production costs. Finally, transaction costs disappear at $p_T = 0$ and all agents will have the same supplier minimizing their production costs (Fig.~\ref{fig:model_schematic}). 
We study the size $C(i^*)$ of the connected components (clusters) in the network defined by these purchases, this means the number of agents $\left\{i_1,i_2,\dots\right\}$ with the same supplier $i^*$. As for standard percolation we record the size $C_1(p_T)$ of the currently largest cluster (and $C_2$ for the second largest cluster and so on).
In the following examples we consider linearly decreasing production costs per unit $p_k(S_k) = b_k - a S_k$, where $a \ge 0$ directly quantifies the strength of the economies of scale. The results are qualitatively unchanged for all forms of decreasing $p_k$ (see Supplemental Material \cite{supplement}).\\

\textit{Discontinuous percolation and hysteresis} --- 
We illustrate the emergence of connectivity in a random spatially embedded network in  Fig.~\ref{fig:network_transition}(a,b), revealing the importance of economies of scale. Weak economies of scale (small $a$) lead to a continuous growth of the largest cluster. Sufficiently strong economies of scale lead to a discontinuous evolution of the size of the largest cluster in the network. A microscopic decrease of the transaction costs triggers a cascade of decisions: As the cluster size increases, the production costs of its supplier decrease and a large fraction of agents join this connected component. In the language of percolation, a giant connected component emerges in a continuous (weak economies of scale) or discontinuous (strong economies of scale) phase transition. 

Moreover, multiple stable states exist for sufficiently strong economies of scale. In an intermediate interval of transaction costs $p_T$ the network settles on one of the 
possible structures, depending on the previous state of the network: hysteresis emerges. 
Thus, a large cluster may remain stable after it has emerged for decreasing $p_T$, even when $p_T$ is increased again [Fig. \ref{fig:network_transition}(b,c)].\\

\textit{Underlying mechanism} --- 
To understand the mechanism underlying these different transitions, we analyze a mathematically tractable system in detail. We consider a network of all-to-all coupled units with demand $D_i = 1/N$ separated by effective distances $t_e = 1$ for all edges $e$. We take  $b_i = i/N$ for $i \in \left\{1,\dots,N\right\}$, approximating the uniform distribution $b_i \in \left[0,1\right]$ in the limit of large system size $N$.

We now track individual decisions by considering the cost per unit $K_i(k)$ agent $i$ pays for purchases at node $k$. Since transaction costs across all links are identical, the first link to be established will be between the node with the highest production cost (node $N$) and the one with the smallest 
(node $1$). This happens when the cost per unit $K_N(1)$ for agent $N$ to import from node $1$ become smaller than the cost $K_N(N)$ to buy internally: $K_N(1) = 1/N - 2 a/N + p_T < 1 - a / N = K_N(N)$, that is for $p_T < p_T^N = 1-1/N+a/N$. Similarly, we can calculate when the next link between agent $N-1$ and node $1$ is established: $K_{N-1}(1) = 1/N - 3 a/N + p_T < (N-1)/N - a / N = K_{N-1}(N-1)$, that is for $p_T < p_T^{N-1} = 1-2/N+2a/N$. The other agents follow the same pattern.

Considering the two links, we now have to distinguish two cases: if $a < a_c = 1$, then $p_T^{N-1} < p_T^{N}$ and the agents $N$ and $N-1$ will establish their links sequentially at different values of $p_T$. The largest cluster will grow continuously with a slope of $(1/N) / (p_T^{N-1} - p_T^N) = 1/(a-1)$. However, if the economies of scale are stronger ($a \ge a_c = 1$), the cost at node $1$ decrease sufficiently for the next link to be established immediately since $p_T^{N-1} \ge p_T^N$. The cluster grows discontinuously in a single cascade. 
If the economies of scale are even stronger ($a > a_c = 1$), the cluster is stable with respect to single agents changing their supplier for larger values of $p_T$, causing hysteresis when increasing $p_T$. This qualitative behavior is independent of the network topology in the sense that for sufficiently large economies of scale the transition will always become discontinuous (see Supplemetal Material \cite{supplement}).\\

\textit{Impact of network topology} --- 
Besides changes in production costs, the growth of the trade network is also determined by changes in transaction costs, that means by the underlying physical transportation network. 
For the same economies of scale we find different routes of network formation depending on the structure of the network. If the network diameter (the longest shortest path between any two nodes) is small, the paths in the network are short and only one cluster emerges. If the diameter is large, multiple large clusters appear. This difference is already evident when comparing the spatially embedded (large diameter) and complete network (small diameter) for $a = 1$ [compare Fig.~\ref{fig:network_transition}(b) and \ref{fig:network_transition}(c)].

\begin{figure}
\centering
\includegraphics[width=0.5\textwidth]{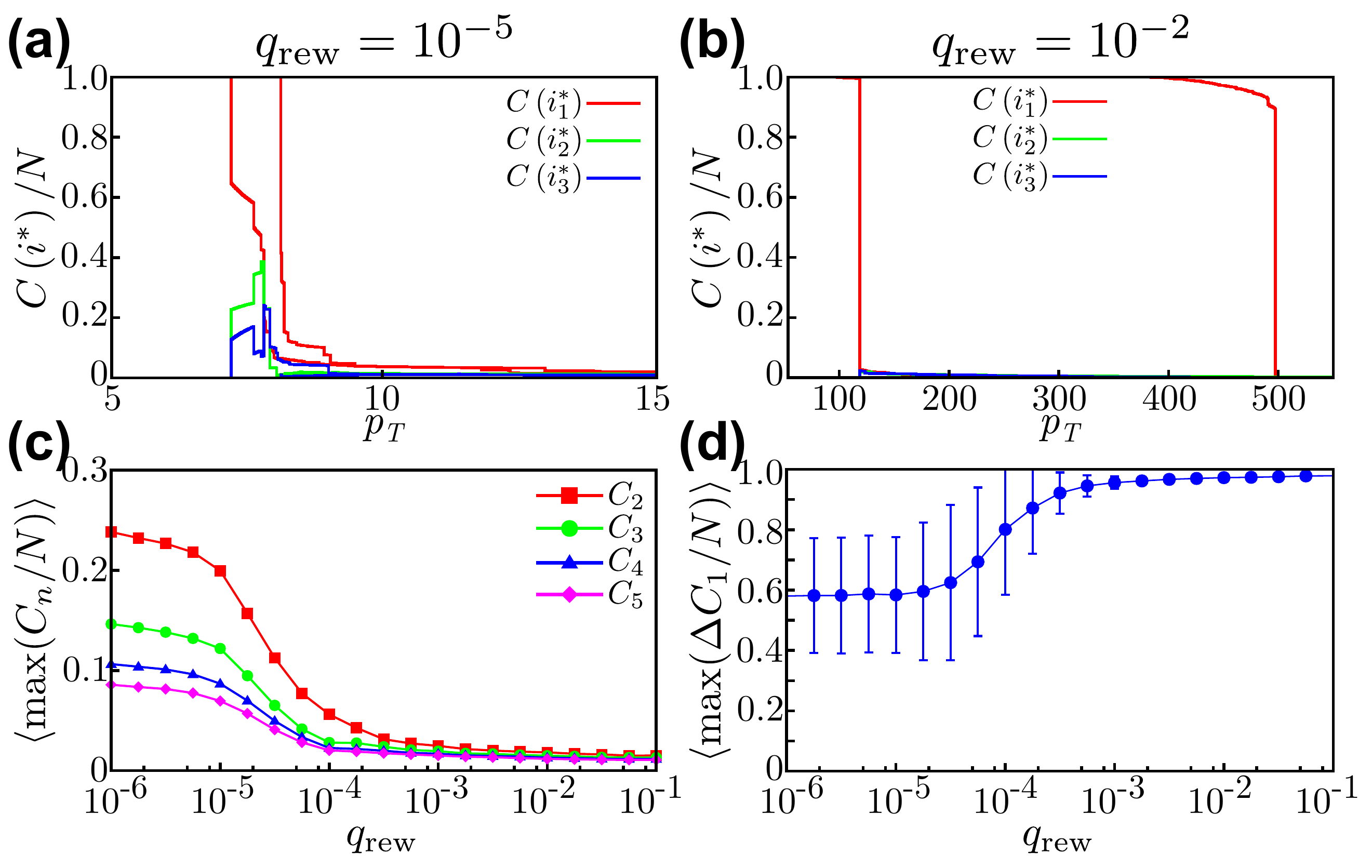}
\caption{\textbf{Impact of the network structure on the percolation transition.}
Size and discontinuities of large connected components for an underlying Watts-Strogatz small world network ($N = 10^4$, $k=8$, see Supplemental Material Sec.~III and V \cite{supplement}). (a),(b)~Single realizations for the size of the components $C(i^*)$ of three specific large suppliers. In a network with large diameter [panel (a), $q_\mathrm{rew} = 10^{-5}$] multiple clusters grow simultaneously and merge for small $p_T$.
In a network with small diameter [panel (b), $q_\mathrm{rew} = 10^{-2}$] one large cluster emerges in a single cascade.
(c)~Maximum size of the $n$-th largest cluster as a function of the topological randomness $q_\mathrm{rew}$ (error bars omitted for visibility). (d)~Largest change of the size of the largest cluster (error bars indicate the standard deviation). Averages are taken over $100$ realizations.
}
\label{fig:small_world}
\end{figure}

\begin{figure*}
\centering
\includegraphics[width=0.9\textwidth]{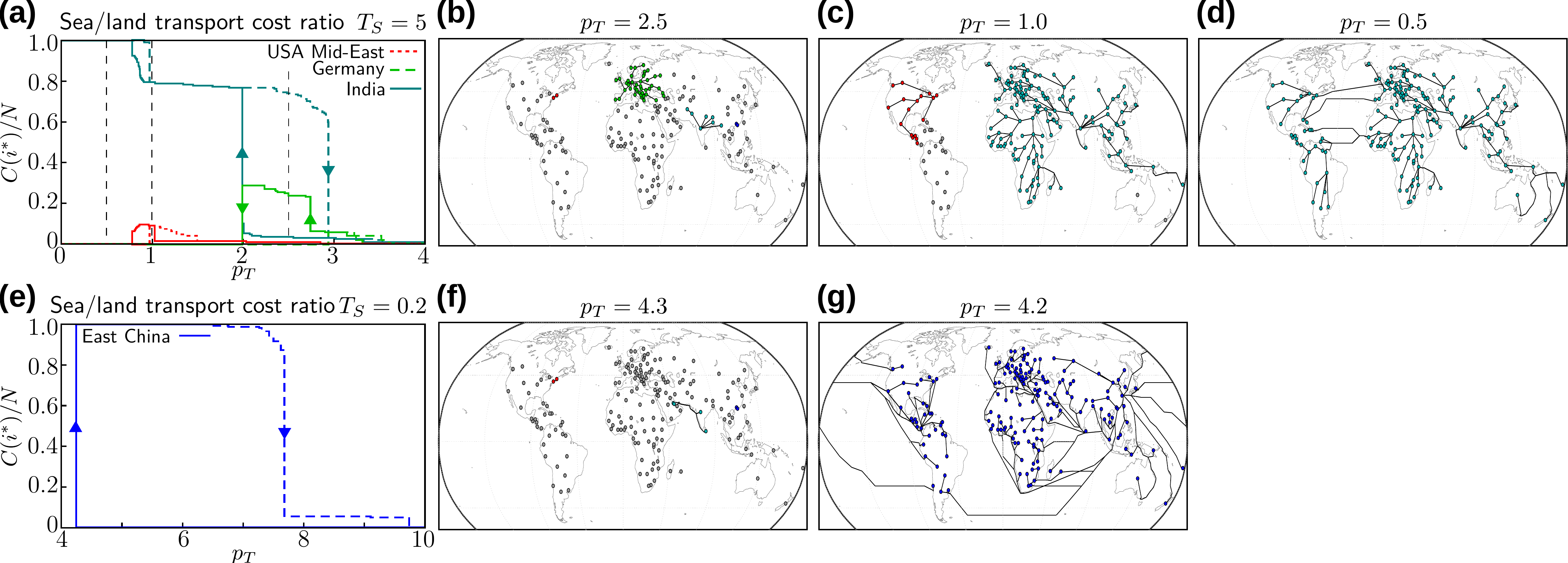}
\caption{\textbf{Preferred modes of transport change network evolution.} 
Global connectivity induced by locally optimal decisions in a model of a world transport network (see Supplemental Material Sec.~VII for details \cite{supplement}). (a)~Evolution of the size $C(i^*)$ of specific clusters identified by a supplier $i^*$ when transport via sea is more expensive than transport via land (by a factor $T_S=5$). Due to the high costs of sea transport, the network diameter is large and multiple large clusters can emerge in different parts of the world.  (b)-(d)~Network structure and active trade links for different values of the transactions costs $p_T$. Land routes are preferred to transport via sea.
(e)~Evolution of the size of the emerging cluster for small sea transport costs ($T_S=0.2$). Due to sea routes connecting most countries cheaply, the network diameter is small and only one large component emerges in a single large cascade. This transition happens for larger $p_T$ as the overall diameter of the network is much smaller. (f),(g)~The state of the network immediately before and after the transition.}
\label{fig:worldmap}
\end{figure*}

To systematically study this effect we fix the economies of scale $a=1$ and consider a network class introduced by Watts and Strogatz \cite{watts98_smallworld}. Starting from a regular ring network with a large diameter where each node is connected to its $k$ neighbors, each link is randomly rewired with probability $q_\mathrm{rew}$. This introduces shortcuts and reduces the diameter of the network. If the diameter is large [$q_\mathrm{rew}$ small, Fig.~\ref{fig:small_world}\,(a)], different suppliers can attract large clusters of agents from 
their local part of the network when $p_T$ decreases. Fig.~\ref{fig:small_world}\,(c) illustrates the maximum size of the $n$-th largest cluster, showing that multiple large clusters emerge for $q_\mathrm{rew} \le 10^{-5}$ (less than one shortcut per node). Only for small values of $p_T$ do these clusters interact and finally merge in small cascades to a single giant cluster. If the diameter is small [$q_\mathrm{rew}$ large, Fig.~\ref{fig:small_world}\,(b)], only a single cluster emerges, attracting nodes from all parts of the network. The largest cluster then grows in a single cascade until it fills the entire network [Fig.~\ref{fig:small_world}\,(c),(d)].

To illustrate this effect with a realistic network topology, we consider an elementary model of a world transport network (Fig.~\ref{fig:worldmap}). The nodes of the network represent individual countries or regions and links represent transport routes via land between neighboring countries or shipping routes via sea (for a detailed description of the model parameters and all data see Supplemental Material Sec.~VII \cite{supplement}). Similarly to the small-world network, we explore different network structures by varying the costs for different modes of transportation, modifying the effective distances of transport via land and sea. If transport via sea is expensive the network has a large diameter, similar to the random planar network [Fig.~\ref{fig:network_transition}(a,b)]. Multiple large clusters appear in different regions of the world, merging when $p_T$ becomes small [Fig.~\ref{fig:worldmap}\,(a)-(d)]. Conversely, if transport via sea is cheap, the network becomes densely connected with a small diameter, similar to the complete graph [Fig.~\ref{fig:network_transition}(c)], and a single largest cluster grows in a sudden cascade [Fig.~\ref{fig:worldmap}\,(e)-(g)].\\

\textit{Discussion} --- 
Taken together, we have proposed a class of network formation models that demonstrates how fundamental aspects of local economic decisions impact global network formation. In contrast to random link addition extensively studied before \cite{stauffer_92_percolation_book, dsouza15_review, bollobas98_random}, link addition in this model is driven by individual decisions: each node minimizes its own costs to satisfy a fixed demand by establishing a trade network across a given network of (potential) transport routes. 
The model class is general in the sense that the illustrated phenomena are independent of both the details of the network topology as well as of the details of the cost functions. Specifically, whereas we have analyzed the model with linearly decreasing production costs per unit, linearity is not required. Any decreasing cost function yields qualitatively the same results (see Supplemental Material \cite{supplement}).

In an analytically solvable limit we showed that the solution of the resulting interacting optimization problems exactly maps to the final state of a local percolation process, thereby enabling a systematic analysis of the transition from localized trading to a macroscopically connected trade network. Importantly, the network evolution exhibits hysteresis, that means decreasing an external factor (e.g. the transaction cost per unit) may induce global interactions but increasing the same factor does not directly reverse this system-scale impact. 

These results directly link deterministic, optimization-based approaches of network formation \cite{gastner06_distribution, bohn07_optimaltransport, katifori10_optimal, katifori16_optimal, chklovskii02_volumeoptimized, memmesheimer06_designingNeural, jackson96_strategic, watts01_dynamic, jackson02_network_evolution, bala00_noncooperative, konig12_efficiency, even07_smallWorldGameTheory, atabati15_strategic} and prototypical percolation processes \cite{newman03_network_review, achlioptas09_explosive, riordan11_explosive, schroder13_crackling, schroeder16_supercritical, dsouza15_review, watts98_smallworld, verma16_emergence, dsouza15_review} based on purely random link addition. Specifically, the model illustrates the connection of (discontinuous) percolation to geographically distributed trade networks \cite{krugman91_geography, krugman91b_geography}. More generally, our framework shows how economic factors and the actual connectivity may shape the structure of socio-economic networks through individual locally optimal decisions.\\

We thank J. Többen  and S. Klipp for helpful discussions.
We gratefully acknowledge support from the G\"ottingen Graduate School for Neurosciences and Molecular Biosciences (DFG Grant GSC 226/2), from the Helmholtz association (grant no. VH-NG-1025), the German Ministry for Education and Research (BMBF grants no. 03SF0472A-F), the German Science Foundation (DFG) by a grant towards the Cluster of Excellence \emph{Center for Advancing Electronics Dresden} (cfaed), the ETH Risk Center (RC SP 08-15) and SNF Grant {\em The Anatomy of systemic financial risk}, No. 162776.\\

%

\bibliography{manuscript_individualDecisions}

\begin{thebibliography}{47}%
\makeatletter
\providecommand \@ifxundefined [1]{%
 \@ifx{#1\undefined}
}%
\providecommand \@ifnum [1]{%
 \ifnum #1\expandafter \@firstoftwo
 \else \expandafter \@secondoftwo
 \fi
}%
\providecommand \@ifx [1]{%
 \ifx #1\expandafter \@firstoftwo
 \else \expandafter \@secondoftwo
 \fi
}%
\providecommand \natexlab [1]{#1}%
\providecommand \enquote  [1]{``#1''}%
\providecommand \bibnamefont  [1]{#1}%
\providecommand \bibfnamefont [1]{#1}%
\providecommand \citenamefont [1]{#1}%
\providecommand \href@noop [0]{\@secondoftwo}%
\providecommand \href [0]{\begingroup \@sanitize@url \@href}%
\providecommand \@href[1]{\@@startlink{#1}\@@href}%
\providecommand \@@href[1]{\endgroup#1\@@endlink}%
\providecommand \@sanitize@url [0]{\catcode `\\12\catcode `\$12\catcode
  `\&12\catcode `\#12\catcode `\^12\catcode `\_12\catcode `\%12\relax}%
\providecommand \@@startlink[1]{}%
\providecommand \@@endlink[0]{}%
\providecommand \url  [0]{\begingroup\@sanitize@url \@url }%
\providecommand \@url [1]{\endgroup\@href {#1}{\urlprefix }}%
\providecommand \urlprefix  [0]{URL }%
\providecommand \Eprint [0]{\href }%
\providecommand \doibase [0]{http://dx.doi.org/}%
\providecommand \selectlanguage [0]{\@gobble}%
\providecommand \bibinfo  [0]{\@secondoftwo}%
\providecommand \bibfield  [0]{\@secondoftwo}%
\providecommand \translation [1]{[#1]}%
\providecommand \BibitemOpen [0]{}%
\providecommand \bibitemStop [0]{}%
\providecommand \bibitemNoStop [0]{.\EOS\space}%
\providecommand \EOS [0]{\spacefactor3000\relax}%
\providecommand \BibitemShut  [1]{\csname bibitem#1\endcsname}%
\let\auto@bib@innerbib\@empty
\bibitem [{\citenamefont {Albert}\ \emph {et~al.}(2000)\citenamefont {Albert},
  \citenamefont {Jeong},\ and\ \citenamefont
  {Barab{\'a}si}}]{albert00_attack_tolerance}%
  \BibitemOpen
  \bibfield  {author} {\bibinfo {author} {\bibfnamefont {R.}~\bibnamefont
  {Albert}}, \bibinfo {author} {\bibfnamefont {H.}~\bibnamefont {Jeong}}, \
  and\ \bibinfo {author} {\bibfnamefont {A.-L.}\ \bibnamefont {Barab{\'a}si}},\
  }\href@noop {} {\bibfield  {journal} {\bibinfo  {journal} {Nature}\ }\textbf
  {\bibinfo {volume} {406}},\ \bibinfo {pages} {378} (\bibinfo {year}
  {2000})}\BibitemShut {NoStop}%
\bibitem [{\citenamefont {Pastor-Satorras}\ and\ \citenamefont
  {Vespignani}(2001)}]{satorras01_epidemic}%
  \BibitemOpen
  \bibfield  {author} {\bibinfo {author} {\bibfnamefont {R.}~\bibnamefont
  {Pastor-Satorras}}\ and\ \bibinfo {author} {\bibfnamefont {A.}~\bibnamefont
  {Vespignani}},\ }\href@noop {} {\bibfield  {journal} {\bibinfo  {journal}
  {Phys. Rev. Lett.}\ }\textbf {\bibinfo {volume} {86}},\ \bibinfo {pages}
  {3200} (\bibinfo {year} {2001})}\BibitemShut {NoStop}%
\bibitem [{\citenamefont {Hufnagel}\ \emph {et~al.}(2004)\citenamefont
  {Hufnagel}, \citenamefont {Brockmann},\ and\ \citenamefont
  {Geisel}}]{hufnagel04_forecast}%
  \BibitemOpen
  \bibfield  {author} {\bibinfo {author} {\bibfnamefont {L.}~\bibnamefont
  {Hufnagel}}, \bibinfo {author} {\bibfnamefont {D.}~\bibnamefont {Brockmann}},
  \ and\ \bibinfo {author} {\bibfnamefont {T.}~\bibnamefont {Geisel}},\
  }\href@noop {} {\bibfield  {journal} {\bibinfo  {journal} {Proc. Natl. Acad.
  Sci. U.S.A.}\ }\textbf {\bibinfo {volume} {101}},\ \bibinfo {pages} {15124}
  (\bibinfo {year} {2004})}\BibitemShut {NoStop}%
\bibitem [{\citenamefont {Brockmann}\ \emph {et~al.}(2006)\citenamefont
  {Brockmann}, \citenamefont {Hufnagel},\ and\ \citenamefont
  {Geisel}}]{brockmann06_travel}%
  \BibitemOpen
  \bibfield  {author} {\bibinfo {author} {\bibfnamefont {D.}~\bibnamefont
  {Brockmann}}, \bibinfo {author} {\bibfnamefont {L.}~\bibnamefont {Hufnagel}},
  \ and\ \bibinfo {author} {\bibfnamefont {T.}~\bibnamefont {Geisel}},\
  }\href@noop {} {\bibfield  {journal} {\bibinfo  {journal} {Nature}\ }\textbf
  {\bibinfo {volume} {439}},\ \bibinfo {pages} {462} (\bibinfo {year}
  {2006})}\BibitemShut {NoStop}%
\bibitem [{\citenamefont {Brockmann}\ and\ \citenamefont
  {Helbing}(2013)}]{brockmann13_spreading}%
  \BibitemOpen
  \bibfield  {author} {\bibinfo {author} {\bibfnamefont {D.}~\bibnamefont
  {Brockmann}}\ and\ \bibinfo {author} {\bibfnamefont {D.}~\bibnamefont
  {Helbing}},\ }\href@noop {} {\bibfield  {journal} {\bibinfo  {journal}
  {Science}\ }\textbf {\bibinfo {volume} {342}},\ \bibinfo {pages} {1337}
  (\bibinfo {year} {2013})}\BibitemShut {NoStop}%
\bibitem [{\citenamefont {Albert}\ and\ \citenamefont
  {Barab{\'a}si}(2002)}]{albert02_network_review}%
  \BibitemOpen
  \bibfield  {author} {\bibinfo {author} {\bibfnamefont {R.}~\bibnamefont
  {Albert}}\ and\ \bibinfo {author} {\bibfnamefont {A.-L.}\ \bibnamefont
  {Barab{\'a}si}},\ }\href@noop {} {\bibfield  {journal} {\bibinfo  {journal}
  {Rev. Mod. Phys.}\ }\textbf {\bibinfo {volume} {74}},\ \bibinfo {pages} {47}
  (\bibinfo {year} {2002})}\BibitemShut {NoStop}%
\bibitem [{\citenamefont {Newman}(2003)}]{newman03_network_review}%
  \BibitemOpen
  \bibfield  {author} {\bibinfo {author} {\bibfnamefont {M.}~\bibnamefont
  {Newman}},\ }\href@noop {} {\bibfield  {journal} {\bibinfo  {journal} {SIAM
  Review}\ }\textbf {\bibinfo {volume} {45}},\ \bibinfo {pages} {167} (\bibinfo
  {year} {2003})}\BibitemShut {NoStop}%
\bibitem [{\citenamefont {Sole}\ and\ \citenamefont
  {Montoya}(2001)}]{sole01_ecological}%
  \BibitemOpen
  \bibfield  {author} {\bibinfo {author} {\bibfnamefont {R.~V.}\ \bibnamefont
  {Sole}}\ and\ \bibinfo {author} {\bibfnamefont {M.}~\bibnamefont {Montoya}},\
  }\href@noop {} {\bibfield  {journal} {\bibinfo  {journal} {Proc. Roy. Soc.
  London Ser. B}\ }\textbf {\bibinfo {volume} {268}},\ \bibinfo {pages} {2039}
  (\bibinfo {year} {2001})}\BibitemShut {NoStop}%
\bibitem [{\citenamefont {Schweitzer}\ \emph {et~al.}(2009)\citenamefont
  {Schweitzer}, \citenamefont {Fagiolo}, \citenamefont {Sornette},
  \citenamefont {Vega-Redondo}, \citenamefont {Vespignani},\ and\ \citenamefont
  {White}}]{schweitzer09_economic}%
  \BibitemOpen
  \bibfield  {author} {\bibinfo {author} {\bibfnamefont {F.}~\bibnamefont
  {Schweitzer}}, \bibinfo {author} {\bibfnamefont {G.}~\bibnamefont {Fagiolo}},
  \bibinfo {author} {\bibfnamefont {D.}~\bibnamefont {Sornette}}, \bibinfo
  {author} {\bibfnamefont {F.}~\bibnamefont {Vega-Redondo}}, \bibinfo {author}
  {\bibfnamefont {A.}~\bibnamefont {Vespignani}}, \ and\ \bibinfo {author}
  {\bibfnamefont {D.~R.}\ \bibnamefont {White}},\ }\href@noop {} {\bibfield
  {journal} {\bibinfo  {journal} {Science}\ }\textbf {\bibinfo {volume}
  {325}},\ \bibinfo {pages} {422} (\bibinfo {year} {2009})}\BibitemShut
  {NoStop}%
\bibitem [{\citenamefont {Buldyrev}\ \emph {et~al.}(2010)\citenamefont
  {Buldyrev}, \citenamefont {Parshani}, \citenamefont {Paul}, \citenamefont
  {Stanley},\ and\ \citenamefont {Havlin}}]{buldyrev10_interdependent}%
  \BibitemOpen
  \bibfield  {author} {\bibinfo {author} {\bibfnamefont {S.}~\bibnamefont
  {Buldyrev}}, \bibinfo {author} {\bibfnamefont {R.}~\bibnamefont {Parshani}},
  \bibinfo {author} {\bibfnamefont {G.}~\bibnamefont {Paul}}, \bibinfo {author}
  {\bibfnamefont {H.~E.}\ \bibnamefont {Stanley}}, \ and\ \bibinfo {author}
  {\bibfnamefont {S.}~\bibnamefont {Havlin}},\ }\href@noop {} {\bibfield
  {journal} {\bibinfo  {journal} {Nature}\ }\textbf {\bibinfo {volume} {464}},\
  \bibinfo {pages} {1025} (\bibinfo {year} {2010})}\BibitemShut {NoStop}%
\bibitem [{\citenamefont {Havlin}\ \emph {et~al.}(2012)\citenamefont {Havlin},
  \citenamefont {Kenett}, \citenamefont {Ben-Jacob}, \citenamefont {Bunde},
  \citenamefont {Cohen}, \citenamefont {Hermann}, \citenamefont {Kantelhardt},
  \citenamefont {Kert{\'e}sz}, \citenamefont {Kirkpatrick}, \citenamefont
  {Kurths}, \citenamefont {Portugali},\ and\ \citenamefont
  {Solomon}}]{havlin12_challenges}%
  \BibitemOpen
  \bibfield  {author} {\bibinfo {author} {\bibfnamefont {S.}~\bibnamefont
  {Havlin}}, \bibinfo {author} {\bibfnamefont {D.~Y.}\ \bibnamefont {Kenett}},
  \bibinfo {author} {\bibfnamefont {E.}~\bibnamefont {Ben-Jacob}}, \bibinfo
  {author} {\bibfnamefont {A.}~\bibnamefont {Bunde}}, \bibinfo {author}
  {\bibfnamefont {R.}~\bibnamefont {Cohen}}, \bibinfo {author} {\bibfnamefont
  {H.}~\bibnamefont {Hermann}}, \bibinfo {author} {\bibfnamefont
  {J.}~\bibnamefont {Kantelhardt}}, \bibinfo {author} {\bibfnamefont
  {J.}~\bibnamefont {Kert{\'e}sz}}, \bibinfo {author} {\bibfnamefont
  {S.}~\bibnamefont {Kirkpatrick}}, \bibinfo {author} {\bibfnamefont
  {J.}~\bibnamefont {Kurths}}, \bibinfo {author} {\bibfnamefont
  {J.}~\bibnamefont {Portugali}}, \ and\ \bibinfo {author} {\bibfnamefont
  {S.}~\bibnamefont {Solomon}},\ }\href@noop {} {\bibfield  {journal} {\bibinfo
   {journal} {Eur. Phys. J. Special Topics}\ }\textbf {\bibinfo {volume}
  {214}},\ \bibinfo {pages} {273} (\bibinfo {year} {2012})}\BibitemShut
  {NoStop}%
\bibitem [{\citenamefont {Elliott}\ \emph {et~al.}(2014)\citenamefont
  {Elliott}, \citenamefont {Golub},\ and\ \citenamefont
  {Jackson}}]{elliott14_financial}%
  \BibitemOpen
  \bibfield  {author} {\bibinfo {author} {\bibfnamefont {M.}~\bibnamefont
  {Elliott}}, \bibinfo {author} {\bibfnamefont {B.}~\bibnamefont {Golub}}, \
  and\ \bibinfo {author} {\bibfnamefont {M.~O.}\ \bibnamefont {Jackson}},\
  }\href@noop {} {\bibfield  {journal} {\bibinfo  {journal} {Am. Econ. Rev.}\
  }\textbf {\bibinfo {volume} {104}},\ \bibinfo {pages} {3115} (\bibinfo {year}
  {2014})}\BibitemShut {NoStop}%
\bibitem [{\citenamefont {Witthaut}\ \emph {et~al.}(2016)\citenamefont
  {Witthaut}, \citenamefont {Rohden}, \citenamefont {Zhang}, \citenamefont
  {Hallerberg},\ and\ \citenamefont {Timme}}]{witthaut16_critical}%
  \BibitemOpen
  \bibfield  {author} {\bibinfo {author} {\bibfnamefont {D.}~\bibnamefont
  {Witthaut}}, \bibinfo {author} {\bibfnamefont {M.}~\bibnamefont {Rohden}},
  \bibinfo {author} {\bibfnamefont {X.}~\bibnamefont {Zhang}}, \bibinfo
  {author} {\bibfnamefont {S.}~\bibnamefont {Hallerberg}}, \ and\ \bibinfo
  {author} {\bibfnamefont {M.}~\bibnamefont {Timme}},\ }\href@noop {}
  {\bibfield  {journal} {\bibinfo  {journal} {Phys. Rev. Lett.}\ }\textbf
  {\bibinfo {volume} {116}},\ \bibinfo {pages} {138701} (\bibinfo {year}
  {2016})}\BibitemShut {NoStop}%
\bibitem [{\citenamefont {Ronellenfitsch}\ \emph {et~al.}(2017)\citenamefont
  {Ronellenfitsch}, \citenamefont {Manik}, \citenamefont {Horsch},
  \citenamefont {Brown},\ and\ \citenamefont
  {Witthaut}}]{ronellenfitsch17_power}%
  \BibitemOpen
  \bibfield  {author} {\bibinfo {author} {\bibfnamefont {H.}~\bibnamefont
  {Ronellenfitsch}}, \bibinfo {author} {\bibfnamefont {D.}~\bibnamefont
  {Manik}}, \bibinfo {author} {\bibfnamefont {J.}~\bibnamefont {Horsch}},
  \bibinfo {author} {\bibfnamefont {T.}~\bibnamefont {Brown}}, \ and\ \bibinfo
  {author} {\bibfnamefont {D.}~\bibnamefont {Witthaut}},\ }\href@noop {}
  {\bibfield  {journal} {\bibinfo  {journal} {IEEE Trans. Power Syst.}\
  }\textbf {\bibinfo {volume} {32}},\ \bibinfo {pages} {4060} (\bibinfo {year}
  {2017})}\BibitemShut {NoStop}%
\bibitem [{\citenamefont {Nagler}\ \emph {et~al.}(2011)\citenamefont {Nagler},
  \citenamefont {Levina},\ and\ \citenamefont {Timme}}]{nagler11_single_links}%
  \BibitemOpen
  \bibfield  {author} {\bibinfo {author} {\bibfnamefont {J.}~\bibnamefont
  {Nagler}}, \bibinfo {author} {\bibfnamefont {A.}~\bibnamefont {Levina}}, \
  and\ \bibinfo {author} {\bibfnamefont {M.}~\bibnamefont {Timme}},\
  }\href@noop {} {\bibfield  {journal} {\bibinfo  {journal} {Nat. Phys.}\
  }\textbf {\bibinfo {volume} {7}},\ \bibinfo {pages} {265} (\bibinfo {year}
  {2011})}\BibitemShut {NoStop}%
\bibitem [{\citenamefont {Stauffer}\ and\ \citenamefont
  {Aharony}(1992)}]{stauffer_92_percolation_book}%
  \BibitemOpen
  \bibfield  {author} {\bibinfo {author} {\bibfnamefont {D.}~\bibnamefont
  {Stauffer}}\ and\ \bibinfo {author} {\bibfnamefont {A.}~\bibnamefont
  {Aharony}},\ }\href@noop {} {\emph {\bibinfo {title} {Introduction To
  Percolation Theory}}}\ (\bibinfo  {publisher} {Taylor \& Francis, London},\
  \bibinfo {year} {1992})\BibitemShut {NoStop}%
\bibitem [{\citenamefont {Grimmett}(1999)}]{grimmett99_percolation_book}%
  \BibitemOpen
  \bibfield  {author} {\bibinfo {author} {\bibfnamefont {G.}~\bibnamefont
  {Grimmett}},\ }\href@noop {} {\emph {\bibinfo {title} {Percolation}}}\
  (\bibinfo  {publisher} {Springer-Verlag},\ \bibinfo {year}
  {1999})\BibitemShut {NoStop}%
\bibitem [{\citenamefont {Achlioptas}\ \emph {et~al.}(2009)\citenamefont
  {Achlioptas}, \citenamefont {D'Souza},\ and\ \citenamefont
  {Spencer}}]{achlioptas09_explosive}%
  \BibitemOpen
  \bibfield  {author} {\bibinfo {author} {\bibfnamefont {D.}~\bibnamefont
  {Achlioptas}}, \bibinfo {author} {\bibfnamefont {R.}~\bibnamefont {D'Souza}},
  \ and\ \bibinfo {author} {\bibfnamefont {J.}~\bibnamefont {Spencer}},\
  }\href@noop {} {\bibfield  {journal} {\bibinfo  {journal} {Science}\ }\textbf
  {\bibinfo {volume} {323}},\ \bibinfo {pages} {1453} (\bibinfo {year}
  {2009})}\BibitemShut {NoStop}%
\bibitem [{\citenamefont {Riordan}\ and\ \citenamefont
  {Warnke}(2011)}]{riordan11_explosive}%
  \BibitemOpen
  \bibfield  {author} {\bibinfo {author} {\bibfnamefont {O.}~\bibnamefont
  {Riordan}}\ and\ \bibinfo {author} {\bibfnamefont {L.}~\bibnamefont
  {Warnke}},\ }\href@noop {} {\bibfield  {journal} {\bibinfo  {journal}
  {Science}\ }\textbf {\bibinfo {volume} {333}},\ \bibinfo {pages} {322}
  (\bibinfo {year} {2011})}\BibitemShut {NoStop}%
\bibitem [{\citenamefont {Schr{\"o}der}\ \emph {et~al.}(2013)\citenamefont
  {Schr{\"o}der}, \citenamefont {Rahbari},\ and\ \citenamefont
  {Nagler}}]{schroder13_crackling}%
  \BibitemOpen
  \bibfield  {author} {\bibinfo {author} {\bibfnamefont {M.}~\bibnamefont
  {Schr{\"o}der}}, \bibinfo {author} {\bibfnamefont {S.~E.}\ \bibnamefont
  {Rahbari}}, \ and\ \bibinfo {author} {\bibfnamefont {J.}~\bibnamefont
  {Nagler}},\ }\href@noop {} {\bibfield  {journal} {\bibinfo  {journal} {Nat.
  Commun.}\ }\textbf {\bibinfo {volume} {4}},\ \bibinfo {pages} {2222}
  (\bibinfo {year} {2013})}\BibitemShut {NoStop}%
\bibitem [{\citenamefont {Schröder}\ \emph {et~al.}(2016)\citenamefont
  {Schröder}, \citenamefont {Chen},\ and\ \citenamefont
  {Nagler}}]{schroeder16_supercritical}%
  \BibitemOpen
  \bibfield  {author} {\bibinfo {author} {\bibfnamefont {M.}~\bibnamefont
  {Schröder}}, \bibinfo {author} {\bibfnamefont {W.}~\bibnamefont {Chen}}, \
  and\ \bibinfo {author} {\bibfnamefont {J.}~\bibnamefont {Nagler}},\
  }\href@noop {} {\bibfield  {journal} {\bibinfo  {journal} {New J. Phys.}\
  }\textbf {\bibinfo {volume} {18}},\ \bibinfo {pages} {013042} (\bibinfo
  {year} {2016})}\BibitemShut {NoStop}%
\bibitem [{\citenamefont {D'Souza}\ and\ \citenamefont
  {Nagler}(2015)}]{dsouza15_review}%
  \BibitemOpen
  \bibfield  {author} {\bibinfo {author} {\bibfnamefont {R.~M.}\ \bibnamefont
  {D'Souza}}\ and\ \bibinfo {author} {\bibfnamefont {J.}~\bibnamefont
  {Nagler}},\ }\href@noop {} {\bibfield  {journal} {\bibinfo  {journal} {Nat.
  Phys.}\ }\textbf {\bibinfo {volume} {11}},\ \bibinfo {pages} {531} (\bibinfo
  {year} {2015})}\BibitemShut {NoStop}%
\bibitem [{\citenamefont {Watts}\ and\ \citenamefont
  {Strogatz}(1998)}]{watts98_smallworld}%
  \BibitemOpen
  \bibfield  {author} {\bibinfo {author} {\bibfnamefont {D.~J.}\ \bibnamefont
  {Watts}}\ and\ \bibinfo {author} {\bibfnamefont {S.~H.}\ \bibnamefont
  {Strogatz}},\ }\href@noop {} {\bibfield  {journal} {\bibinfo  {journal}
  {Nature}\ }\textbf {\bibinfo {volume} {393}},\ \bibinfo {pages} {440}
  (\bibinfo {year} {1998})}\BibitemShut {NoStop}%
\bibitem [{\citenamefont {Verma}\ \emph {et~al.}(2016)\citenamefont {Verma},
  \citenamefont {Russmann}, \citenamefont {Ara{\'u}jo}, \citenamefont
  {Nagler},\ and\ \citenamefont {Herrmann}}]{verma16_emergence}%
  \BibitemOpen
  \bibfield  {author} {\bibinfo {author} {\bibfnamefont {T.}~\bibnamefont
  {Verma}}, \bibinfo {author} {\bibfnamefont {F.}~\bibnamefont {Russmann}},
  \bibinfo {author} {\bibfnamefont {N.}~\bibnamefont {Ara{\'u}jo}}, \bibinfo
  {author} {\bibfnamefont {J.}~\bibnamefont {Nagler}}, \ and\ \bibinfo {author}
  {\bibfnamefont {H.}~\bibnamefont {Herrmann}},\ }\href@noop {} {\bibfield
  {journal} {\bibinfo  {journal} {Nat. Commun.}\ }\textbf {\bibinfo {volume}
  {7}},\ \bibinfo {pages} {10441} (\bibinfo {year} {2016})}\BibitemShut
  {NoStop}%
\bibitem [{\citenamefont {Bertsekas}(1998)}]{bertsekas98_optimization}%
  \BibitemOpen
  \bibfield  {author} {\bibinfo {author} {\bibfnamefont {D.~P.}\ \bibnamefont
  {Bertsekas}},\ }\href@noop {} {\emph {\bibinfo {title} {Network optimization:
  continuous and discrete models}}}\ (\bibinfo  {publisher} {Athena Scientific,
  Belmont},\ \bibinfo {year} {1998})\BibitemShut {NoStop}%
\bibitem [{\citenamefont {Gastner}\ and\ \citenamefont
  {Newman}(2006)}]{gastner06_distribution}%
  \BibitemOpen
  \bibfield  {author} {\bibinfo {author} {\bibfnamefont {M.~T.}\ \bibnamefont
  {Gastner}}\ and\ \bibinfo {author} {\bibfnamefont {M.~E.~J.}\ \bibnamefont
  {Newman}},\ }\href@noop {} {\bibfield  {journal} {\bibinfo  {journal} {Phys.
  Rev. E}\ }\textbf {\bibinfo {volume} {74}},\ \bibinfo {pages} {016117}
  (\bibinfo {year} {2006})}\BibitemShut {NoStop}%
\bibitem [{\citenamefont {Bohn}\ and\ \citenamefont
  {Magnasco}(2007)}]{bohn07_optimaltransport}%
  \BibitemOpen
  \bibfield  {author} {\bibinfo {author} {\bibfnamefont {S.}~\bibnamefont
  {Bohn}}\ and\ \bibinfo {author} {\bibfnamefont {M.~O.}\ \bibnamefont
  {Magnasco}},\ }\href@noop {} {\bibfield  {journal} {\bibinfo  {journal}
  {Phys. Rev. Lett.}\ }\textbf {\bibinfo {volume} {98}},\ \bibinfo {pages}
  {088702} (\bibinfo {year} {2007})}\BibitemShut {NoStop}%
\bibitem [{\citenamefont {Katifori}\ \emph {et~al.}(2010)\citenamefont
  {Katifori}, \citenamefont {Sz\"oll\ifmmode~\mbox{\H{o}}\else \H{o}\fi{}si},\
  and\ \citenamefont {Magnasco}}]{katifori10_optimal}%
  \BibitemOpen
  \bibfield  {author} {\bibinfo {author} {\bibfnamefont {E.}~\bibnamefont
  {Katifori}}, \bibinfo {author} {\bibfnamefont {G.~J.}\ \bibnamefont
  {Sz\"oll\ifmmode~\mbox{\H{o}}\else \H{o}\fi{}si}}, \ and\ \bibinfo {author}
  {\bibfnamefont {M.~O.}\ \bibnamefont {Magnasco}},\ }\href@noop {} {\bibfield
  {journal} {\bibinfo  {journal} {Phys. Rev. Lett.}\ }\textbf {\bibinfo
  {volume} {104}},\ \bibinfo {pages} {048704} (\bibinfo {year}
  {2010})}\BibitemShut {NoStop}%
\bibitem [{\citenamefont {Ronellenfitsch}\ and\ \citenamefont
  {Katifori}(2016)}]{katifori16_optimal}%
  \BibitemOpen
  \bibfield  {author} {\bibinfo {author} {\bibfnamefont {H.}~\bibnamefont
  {Ronellenfitsch}}\ and\ \bibinfo {author} {\bibfnamefont {E.}~\bibnamefont
  {Katifori}},\ }\href@noop {} {\bibfield  {journal} {\bibinfo  {journal}
  {Phys. Rev. Lett.}\ }\textbf {\bibinfo {volume} {117}},\ \bibinfo {pages}
  {138301} (\bibinfo {year} {2016})}\BibitemShut {NoStop}%
\bibitem [{\citenamefont {Chklovskii}\ \emph {et~al.}(2002)\citenamefont
  {Chklovskii}, \citenamefont {Schikorski},\ and\ \citenamefont
  {Stevens}}]{chklovskii02_volumeoptimized}%
  \BibitemOpen
  \bibfield  {author} {\bibinfo {author} {\bibfnamefont {D.~B.}\ \bibnamefont
  {Chklovskii}}, \bibinfo {author} {\bibfnamefont {T.}~\bibnamefont
  {Schikorski}}, \ and\ \bibinfo {author} {\bibfnamefont {C.~F.}\ \bibnamefont
  {Stevens}},\ }\href@noop {} {\bibfield  {journal} {\bibinfo  {journal}
  {Neuron}\ }\textbf {\bibinfo {volume} {34}},\ \bibinfo {pages} {341 }
  (\bibinfo {year} {2002})}\BibitemShut {NoStop}%
\bibitem [{\citenamefont {Memmesheimer}\ and\ \citenamefont
  {Timme}(2006{\natexlab{a}})}]{memmesheimer06_designingNetworks}%
  \BibitemOpen
  \bibfield  {author} {\bibinfo {author} {\bibfnamefont {R.-M.}\ \bibnamefont
  {Memmesheimer}}\ and\ \bibinfo {author} {\bibfnamefont {M.}~\bibnamefont
  {Timme}},\ }\href@noop {} {\bibfield  {journal} {\bibinfo  {journal} {Physica
  D}\ }\textbf {\bibinfo {volume} {224}},\ \bibinfo {pages} {182 } (\bibinfo
  {year} {2006}{\natexlab{a}})}\BibitemShut {NoStop}%
\bibitem [{\citenamefont {Memmesheimer}\ and\ \citenamefont
  {Timme}(2006{\natexlab{b}})}]{memmesheimer06_designingNeural}%
  \BibitemOpen
  \bibfield  {author} {\bibinfo {author} {\bibfnamefont {R.-M.}\ \bibnamefont
  {Memmesheimer}}\ and\ \bibinfo {author} {\bibfnamefont {M.}~\bibnamefont
  {Timme}},\ }\href@noop {} {\bibfield  {journal} {\bibinfo  {journal} {Phys.
  Rev. Lett.}\ }\textbf {\bibinfo {volume} {97}},\ \bibinfo {pages} {188101}
  (\bibinfo {year} {2006}{\natexlab{b}})}\BibitemShut {NoStop}%
\bibitem [{\citenamefont {Jackson}\ and\ \citenamefont
  {Wolinsky}(1996)}]{jackson96_strategic}%
  \BibitemOpen
  \bibfield  {author} {\bibinfo {author} {\bibfnamefont {M.}~\bibnamefont
  {Jackson}}\ and\ \bibinfo {author} {\bibfnamefont {A.}~\bibnamefont
  {Wolinsky}},\ }\href@noop {} {\bibfield  {journal} {\bibinfo  {journal} {J.
  Econ. Theory}\ }\textbf {\bibinfo {volume} {71}},\ \bibinfo {pages} {44}
  (\bibinfo {year} {1996})}\BibitemShut {NoStop}%
\bibitem [{\citenamefont {Watts}(2001)}]{watts01_dynamic}%
  \BibitemOpen
  \bibfield  {author} {\bibinfo {author} {\bibfnamefont {A.}~\bibnamefont
  {Watts}},\ }\href@noop {} {\bibfield  {journal} {\bibinfo  {journal} {Games
  Econ. Behav.}\ }\textbf {\bibinfo {volume} {34}},\ \bibinfo {pages} {331}
  (\bibinfo {year} {2001})}\BibitemShut {NoStop}%
\bibitem [{\citenamefont {Jackson}\ and\ \citenamefont
  {Watts}(2002)}]{jackson02_network_evolution}%
  \BibitemOpen
  \bibfield  {author} {\bibinfo {author} {\bibfnamefont {M.}~\bibnamefont
  {Jackson}}\ and\ \bibinfo {author} {\bibfnamefont {A.}~\bibnamefont
  {Watts}},\ }\href@noop {} {\bibfield  {journal} {\bibinfo  {journal} {J.
  Econ. Theory}\ }\textbf {\bibinfo {volume} {106}},\ \bibinfo {pages} {265}
  (\bibinfo {year} {2002})}\BibitemShut {NoStop}%
\bibitem [{\citenamefont {Jackson}(2008)}]{jackson08_network_book}%
  \BibitemOpen
  \bibfield  {author} {\bibinfo {author} {\bibfnamefont {M.~O.}\ \bibnamefont
  {Jackson}},\ }\href@noop {} {\emph {\bibinfo {title} {Social and economic
  networks}}},\ Vol.~\bibinfo {volume} {3}\ (\bibinfo  {publisher} {Princeton
  University Press, Princeton},\ \bibinfo {year} {2008})\BibitemShut {NoStop}%
\bibitem [{\citenamefont {Easley}\ and\ \citenamefont
  {Kleinberg}(2010)}]{easley10_networks_crowds_markets}%
  \BibitemOpen
  \bibfield  {author} {\bibinfo {author} {\bibfnamefont {D.}~\bibnamefont
  {Easley}}\ and\ \bibinfo {author} {\bibfnamefont {J.}~\bibnamefont
  {Kleinberg}},\ }\href@noop {} {\emph {\bibinfo {title} {Networks, crowds, and
  markets: Reasoning about a highly connected world}}}\ (\bibinfo  {publisher}
  {Cambridge University Press},\ \bibinfo {year} {2010})\BibitemShut {NoStop}%
\bibitem [{\citenamefont {Bala}\ and\ \citenamefont
  {Goyal}(2000)}]{bala00_noncooperative}%
  \BibitemOpen
  \bibfield  {author} {\bibinfo {author} {\bibfnamefont {V.}~\bibnamefont
  {Bala}}\ and\ \bibinfo {author} {\bibfnamefont {S.}~\bibnamefont {Goyal}},\
  }\href@noop {} {\bibfield  {journal} {\bibinfo  {journal} {Econometrica}\
  }\textbf {\bibinfo {volume} {68}},\ \bibinfo {pages} {1181} (\bibinfo {year}
  {2000})}\BibitemShut {NoStop}%
\bibitem [{\citenamefont {K{\"o}nig}\ \emph {et~al.}(2012)\citenamefont
  {K{\"o}nig}, \citenamefont {Battiston}, \citenamefont {Napoletano},\ and\
  \citenamefont {Schweitzer}}]{konig12_efficiency}%
  \BibitemOpen
  \bibfield  {author} {\bibinfo {author} {\bibfnamefont {M.~D.}\ \bibnamefont
  {K{\"o}nig}}, \bibinfo {author} {\bibfnamefont {S.}~\bibnamefont
  {Battiston}}, \bibinfo {author} {\bibfnamefont {M.}~\bibnamefont
  {Napoletano}}, \ and\ \bibinfo {author} {\bibfnamefont {F.}~\bibnamefont
  {Schweitzer}},\ }\href@noop {} {\bibfield  {journal} {\bibinfo  {journal}
  {Games Econ. Behav.}\ }\textbf {\bibinfo {volume} {75}},\ \bibinfo {pages}
  {694} (\bibinfo {year} {2012})}\BibitemShut {NoStop}%
\bibitem [{\citenamefont {Even-dar}\ and\ \citenamefont
  {Kearns}(2007)}]{even07_smallWorldGameTheory}%
  \BibitemOpen
  \bibfield  {author} {\bibinfo {author} {\bibfnamefont {E.}~\bibnamefont
  {Even-dar}}\ and\ \bibinfo {author} {\bibfnamefont {M.}~\bibnamefont
  {Kearns}},\ }in\ \href@noop {} {\emph {\bibinfo {booktitle} {Advances in
  Neural Information Processing Systems 19}}},\ \bibinfo {editor} {edited by\
  \bibinfo {editor} {\bibfnamefont {B.}~\bibnamefont {Sch\"{o}lkopf}}, \bibinfo
  {editor} {\bibfnamefont {J.~C.}\ \bibnamefont {Platt}}, \ and\ \bibinfo
  {editor} {\bibfnamefont {T.}~\bibnamefont {Hoffman}}}\ (\bibinfo  {publisher}
  {MIT Press, Cambridge, MA},\ \bibinfo {year} {2007})\ pp.\ \bibinfo {pages}
  {385--392}\BibitemShut {NoStop}%
\bibitem [{\citenamefont {Atabati}\ and\ \citenamefont
  {Farzad}(2015)}]{atabati15_strategic}%
  \BibitemOpen
  \bibfield  {author} {\bibinfo {author} {\bibfnamefont {O.}~\bibnamefont
  {Atabati}}\ and\ \bibinfo {author} {\bibfnamefont {B.}~\bibnamefont
  {Farzad}},\ }\href@noop {} {\bibfield  {journal} {\bibinfo  {journal}
  {Computational Social Networks}\ }\textbf {\bibinfo {volume} {2}},\ \bibinfo
  {pages} {1} (\bibinfo {year} {2015})}\BibitemShut {NoStop}%
\bibitem [{\citenamefont {Krugman}(1991{\natexlab{a}})}]{krugman91_geography}%
  \BibitemOpen
  \bibfield  {author} {\bibinfo {author} {\bibfnamefont {P.~R.}\ \bibnamefont
  {Krugman}},\ }\href@noop {} {\emph {\bibinfo {title} {Geography and trade}}}\
  (\bibinfo  {publisher} {MIT press, Cambridge, MA},\ \bibinfo {year}
  {1991})\BibitemShut {NoStop}%
\bibitem [{\citenamefont {Krugman}(1991{\natexlab{b}})}]{krugman91b_geography}%
  \BibitemOpen
  \bibfield  {author} {\bibinfo {author} {\bibfnamefont {P.~R.}\ \bibnamefont
  {Krugman}},\ }\href@noop {} {\bibfield  {journal} {\bibinfo  {journal} {J.
  Pol. Econ.}\ }\textbf {\bibinfo {volume} {99}},\ \bibinfo {pages} {483}
  (\bibinfo {year} {1991}{\natexlab{b}})}\BibitemShut {NoStop}%
\bibitem [{\citenamefont {Osborne}\ and\ \citenamefont
  {Rubinstein}(1994)}]{osborne94_gameTheory_book}%
  \BibitemOpen
  \bibfield  {author} {\bibinfo {author} {\bibfnamefont {M.~J.}\ \bibnamefont
  {Osborne}}\ and\ \bibinfo {author} {\bibfnamefont {A.}~\bibnamefont
  {Rubinstein}},\ }\href@noop {} {\emph {\bibinfo {title} {A course in game
  theory}}}\ (\bibinfo  {publisher} {MIT Press, Cambridge, MA},\ \bibinfo
  {year} {1994})\BibitemShut {NoStop}%
\bibitem [{sup()}]{supplement}%
  \BibitemOpen
  \href@noop {} {}\bibinfo {note} {See Supplemental Material at [URL will be
  inserted by publisher].}\BibitemShut {Stop}%
\bibitem [{\citenamefont {Bollob{\'a}s}(1998)}]{bollobas98_random}%
  \BibitemOpen
  \bibfield  {author} {\bibinfo {author} {\bibfnamefont {B.}~\bibnamefont
  {Bollob{\'a}s}},\ }in\ \href@noop {} {\emph {\bibinfo {booktitle} {Modern
  Graph Theory}}}\ (\bibinfo  {publisher} {Springer-Verlag},\ \bibinfo {year}
  {1998})\ pp.\ \bibinfo {pages} {215--252}\BibitemShut {NoStop}%
\bibitem [{\citenamefont {Saupe}(1988)}]{saupe88_fractal_algorithms}%
  \BibitemOpen
  \bibfield  {author} {\bibinfo {author} {\bibfnamefont {D.}~\bibnamefont
  {Saupe}},\ }in\ \href@noop {} {\emph {\bibinfo {booktitle} {The science of
  fractal images}}}\ (\bibinfo  {publisher} {Springer-Verlag},\ \bibinfo {year}
  {1988})\ pp.\ \bibinfo {pages} {71--136}\BibitemShut {NoStop}%
\end{thebibliography}%

\onecolumngrid

\newpage
\null
\newpage

\setcounter{figure}{0}
\renewcommand{\thefigure}{S\arabic{figure}}

{\centering \Large  \textbf{Appendix}\\
accompanying the manuscript\\
\textbf{Hysteretic percolation from locally optimal individual decisions\\}}
{\center \normalsize{Malte Schr\"oder, Jan Nagler, Marc Timme and Dirk Witthaut}\flushright}
\quad\\
\quad\\

In the main manuscript we introduced a network formation model where link addition is based on individual decisions of local agents in a fundamental network supply problem. In general the resulting individual optimization problems are too complex to be solved efficiently for large systems. We discussed how these optimization problems map to a local percolation model with an efficient solution, bridging the gap between global optimization models of network formation and stochastic local percolation models. This mapping allowed us to efficiently study the phenomena emerging in this network formation model.

In this supplemental material we give the rigorous proofs for the mapping as well as additional details and examples for the model. First, we consider the mapping of the optimization problem to a percolation model. In particular, we discuss in detail the assumptions and requirements on the cost functions and other parameters for this mapping to be valid and give a rigorous proof. We also describe the simulation procedure resulting from these conditions. Second, we discuss in more detail the parameter choices made for the examples shown in the main manuscript in relation to properties of the model, describing how and why we take the limit of an increasing number of nodes $N \rightarrow \infty$ as a fine-graining rather than an expansion of the system in relation to the underlying network topology (for example random and small world networks). Third, we provide an additional example for an analytically solvable system in a continuous interpretation of the model, linking the equilibrium states and transitions to bifurcations of stable states of a corresponding self-consistency equation. Finally, we include a detailed description of the data used for the world transport network simulations in the main manuscript and a brief comparison of the exact solution for the world transport network to the approximate solution with our local percolation algorithm, which is shown in the main manuscript.\\

\section*{The network supply problem}

To begin, we reiterate the basic idea of the network formation model as defined in the main manuscript:
Starting with an underlying network of $N$ nodes and $M$ \emph{potential} transport links, we consider each node $i$ to be an agent with a fixed demand $D_i$ that it satisfies by purchases $S_{ki}$ from nodes $k$, possibly internally from $k=i$. These purchases incur a cost of two parts: First, the production cost $K^P_{ki} = S_{ki} p_k(S_k)$ at node $k$ depending on the total amount of production $S_k$ at that node. Second, a transaction cost (e.g., transport cost) in the underlying network, $K^T_{ki} = p_T S_{ki} T_{ki}$, where $T_{ki} = \sum_{e \in \Pi(k,i)} t_e$ is the sum over the transaction costs of all links $e$ along the shortest (cheapest) path $\Pi(k,i)$ from $k$ to $i$. The parameter $p_T$ describes the transaction costs per unit across the network, i.e., the importance of transaction costs relative to production costs. We assume that each agent only decides its purchases and the production at a node is always given as the sum of all purchases made from that node, $S_k = \sum_i S_{ki}$. As in the main manuscript, sums run over all nodes $i \in \left\{1,\dots,N\right\}$ unless noted otherwise. Similarly, each agent always satisfies its demand exactly, such that $D_i = \sum_k S_{ki}$. Each agent $i$ then individually optimizes its purchases $S_{ki}$ to minimize its cost $K_i$ while satisfying its demand:
\begin{align}
 \mathrm{minimize} \quad & \quad K_i = \sum_k S_{ki} p_k(S_k) + p_T \sum_k \left[ S_{ki} \sum_{e \in \Pi(k,i)} t_e \right] \label{eq:restrictions}\\
 \mathrm{subject\;to} \quad & \quad \sum_k S_{ki} = D_i	\nonumber\\
 		\quad & \quad S_{ki} \ge 0 \;\forall\; k \nonumber\,.
\end{align}

We consider the evolution of the network of trades described by the purchases $S_{ki}$ when the transaction cost $p_T$ decrease, starting from $p_T = \infty$ with initially only internal production $S_{ii} = D_i$ and $S_{ki} = 0$ for $i \neq k$.
Large transaction costs per unit $p_T$ mean that no external purchases are made across the network. As $p_T$ decreases, some agents start buying at cheaper neighbors and small, localized connected components (clusters) start to grow. Specifically, as in standard percolation theory, we are interested in the evolution of the size of the largest cluster $C_1(p_T)$.

To allow easier readability, we use different indices to denote nodes or agents based on their role in the current context wherever possible. We summarize the variables together with their meaning in the following table:\\

\begin{listliketab}
    \storestyleof{itemize} 
    \begin{tabular}{ll}
     $N$ & The number of nodes in the system, system size \\
     $e_{i j}$ & Link (potential transport link) between nodes $i$ and $j$ \\
     $t_e$, $t_{ij}$ & Distance for the link $e$ from node $i$ to $j$\\
     $\Pi(k,i) \quad$ & Shortest (cheapest) path from $k$ to $i$, $\Pi(k,i) = \left\{e_{k j_1}, e_{j_1 j_2}, \dots, e_{j_n i}\right\}$\\
     $T_{ki}$ & Total distance between the nodes $k$ and $i$, $T_{ki} = \sum_{e \in \Pi(k,i)} t_e$\\
     $p_T$ & Transaction cost per unit good and distance, relative importance of transaction cost\\
     \hline
	 $i,j$ & Index of an agent (currently looking to make purchases)\\
	 $k,l$ & Index of a node (currently considered as a supplier)\\
	 $i',j'$ & Index of the current supplier of agent $i,j$\\
	 $i^*,j^*$ & Index of the/an optimal supplier of agent $i,j$\\
	 $D_i$ & The (fixed) demand of agent $i$\\
	 $S_k$ & The current, total production of node $k$\\
	 $S_k^i$ & The production of node $k$ disregarding possible purchases by agent $i$, $S_k^i = S_k - S_{ki}$\\
	 $S_{ki}$ & The amount agent $i$ purchases from node $k$\\
	 $p_k\left(S_k\right)$ \quad & The production cost per unit at node $k$ with a total production $S_k$\\
	 \hline
	 $K_i$ & The total cost, including transaction costs and production costs, of all purchases of agent $i$\\
	 $K_i(k)$ & The total cost of agent $i$ when \emph{only} purchasing from node $k$\\
    \end{tabular} 
\end{listliketab}
\\

\newpage

\section{Mapping to a local percolation model}

In general the optimization problem described above is very complex and quickly grows intractable for larger systems. Here we derive a general updating scheme in terms of a local percolation model, showing that it exactly solves all individual optimization problems under certain conditions. We first consider the cost function of a single agent $i$
\begin{equation}
	K_i = \sum_k \left[ S_{ki} p_k\left(S_k^i + S_{ki}\right) + p_T S_{ki} T_{ki} \right], \label{eqn:cost}
\end{equation}
where $S_k^i = \sum_{j \neq i} S_{kj}$ denotes the production of node $k$ ignoring purchases by agent $i$, such that $S_k = S_k^i + S_{ki}$, and we write the total distance over all links from $k$ to $i$ as $T_{ki}$. The problem becomes considerably easier for the family of cost functions describing so called economies of scale, this means decreasing production costs per unit with increasing production. In this case the problem of finding the optimal $S_{ki}$ reduces to finding a single optimal supplier.

\begin{lemma}
Given an individual minimization problem of agent $i$ defined by Eq.~(\ref{eq:restrictions}) with non-increasing production costs per unit at each node, $\frac{\partial p_k}{\partial S_k}\left(S_k\right) \le 0$ for all $k \in \left\{1, \dots, N\right\}$, there is a node $i^*$ such that $K_i$ is minimal with $S_{i^*i} = D_i$ and $S_{ki} = 0$ for $k \neq i^*$.
\end{lemma}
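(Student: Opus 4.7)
My approach is to show that any feasible purchasing vector $(S_{ki})_k$ for agent $i$ can be replaced by a single-supplier one whose total cost is no larger; the lemma then follows by applying this to any optimizer. The central quantity I would introduce is the \emph{effective per-unit cost} at node $k$,
\begin{equation*}
  c_k \;:=\; p_k\!\left(S_k^i + S_{ki}\right) + p_T\, T_{ki},
\end{equation*}
so that on the support of the purchasing vector the cost in Eq.~(\ref{eqn:cost}) rewrites as $K_i = \sum_k S_{ki}\, c_k$.

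Given any feasible $(S_{ki})_k$ with support $I = \{k : S_{ki} > 0\}$, I would pick $k^* \in \arg\min_{k \in I} c_k$ and consider the consolidated strategy $S'_{k^* i} = D_i$ and $S'_{ki} = 0$ for $k \neq k^*$. The new cost is $K_i' = D_i\,\bigl[\,p_{k^*}(S_{k^*}^i + D_i) + p_T\, T_{k^* i}\,\bigr]$. Since $D_i \ge S_{k^* i}$ and $p_{k^*}$ is non-increasing by hypothesis, the bracketed per-unit cost is bounded above by $c_{k^*}$; combining this with $c_{k^*} \le c_k$ for every $k \in I$ and $\sum_k S_{ki} = D_i$, I would chain
\begin{equation*}
  K_i' \;\le\; D_i\, c_{k^*} \;=\; \sum_{k \in I} S_{ki}\, c_{k^*} \;\le\; \sum_{k \in I} S_{ki}\, c_k \;=\; K_i.
\end{equation*}
Thus $S'$ matches or improves on the original vector. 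Applying this to a minimizer (whose existence follows from continuity of the $p_k$ and compactness of the simplex $\{S_{ki} \ge 0,\, \sum_k S_{ki} = D_i\}$) produces a single-supplier optimizer $i^*$ as claimed.

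The step I expect to require the most care is the bookkeeping around which quantities are held fixed under the reassignment. The residual production $S_k^i = \sum_{j\neq i} S_{kj}$ depends only on other agents and is unchanged when agent $i$ shifts its purchases, so the two evaluations of $p_{k^*}$ being compared differ only through agent $i$'s own contribution growing from $S_{k^* i}$ to $D_i$. Once this is isolated, the economies-of-scale hypothesis $\partial p_k/\partial S_k \le 0$ applies along a single monotone increase of the argument, and the rest collapses to the two-line inequality chain displayed above.
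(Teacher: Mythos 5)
Your proposal is correct and follows essentially the same route as the paper: both arguments consolidate an arbitrary feasible purchase vector onto the node with the smallest effective per-unit cost and use the monotonicity $p_{k}(S_k^i+D_i)\le p_{k}(S_k^i+S_{ki})$ together with $\sum_k S_{ki}=D_i$ to bound the consolidated cost by the weighted average $\sum_k S_{ki}\,c_k$. The only cosmetic difference is that the paper selects the optimal node a priori by minimizing $p_k(S_k^i+D_i)+p_T T_{ki}$ over all $k$, whereas you minimize over the support of the given allocation and then invoke compactness (which is in fact unnecessary, since finiteness of the set of single-supplier strategies already yields the minimizer).
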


\begin{proof}
Choose the node $l$ for which
\begin{equation}
  p_l^{\rm eff}(S_l^i + D_i) = p_l(S_l^i + D_i) + p_T T_{l i}
\end{equation}
is smallest. Together with $\partial p_k/\partial S_k \le 0$ for all
nodes $k$ we then obtain 
\begin{equation}
   p_l^{\rm eff}(S_l^i + D_i) \le p_k^{\rm eff}(S_k^i + D_i) \le p_k^{\rm eff}(S_k^i + S_{ki})
\end{equation}
for all nodes $k \in \left\{1,\dots,N\right\}$ and arbitrary purchases $0 \le S_{ki} \le D_i$.
Using the constraint that $D_i = \sum_k S_{ki}$ this implies
\begin{align}
   & K_i(S_{1i}=0,S_{2i}=0,\ldots,S_{li}=D_i,S_{l+1,i}=0,\ldots) \nonumber \\
   & \quad = D_i p_l^{\rm eff}(S_l^i + D_i)  \nonumber \\
   & \quad \le \sum_k S_{ki} p_k^{\rm eff}(S_k^i + S_{ki}) \nonumber \\
   & \quad = K_i(S_{1i},S_{2i},S_{3i},\ldots) 
\end{align}
for all possible purchases $(S_{1i},S_{2i},S_{3i},\ldots)$.
Thus the costs $K_i$ assume a global minimum if agent $i$ satisfies its entire demand by
purchases from a single node $i^* = l$, this means for
\begin{equation}
  S_{i^*i} = D_i \quad \mbox{and} \quad S_{ki} = 0 \; \mbox{for} \; k \neq i^* \,. \nonumber
\end{equation}
\end{proof}

In particular this class of non-decreasing functions includes the affine-linear production costs per uni $p_k\left(S_k\right) = b_k - a \cdot S_k$ with $a \ge 0$ used in the main manuscript. However, we are not restricted to identical slopes or even functions of the same form for different nodes. We therefore cover a broad range of cases where the cost functions of all nodes are non-increasing in the range of possible production $S_k \in \left[0 , \sum_i D_i \right]$.\\

With the above simplification, we now consider the optimization problem of agent $i$ of finding the best $i^*$ to minimize
\begin{align}
 K_i\left(i^*\right)	&= D_i p_{i^*}\left(S_{i^*}^i + D_i\right) + p_T D_i \sum_{e \in \Pi(i^*,i)} t_e \,.
\end{align}
To find an equilibrium, we can in principle simply try all possible alternatives, as formulated in the following algorithm. However, such a brute-force approach can become quickly infeasible for large networks such that we will consider further simplifications in the following sections.
 
\begin{definition}[Equilibrium]
Consider the optimization problem defined above [Eq.~(\ref{eq:restrictions})] with non-increasing production costs per unit $p_k(S_k)$ for all agents $k \in \left\{ 1,2,\dots,N \right\}$. An agent $i$ is in equilibrium at its current supplier $i'$, if and only if there is no node $k$ such that $K_i(k) < K_i(i')$. This means the agent cannot change its supplier to reduce its costs, given that all other agents keep their current supplier.

We say that the network is in equilibrium if all agents in the network are in equilibrium.
\end{definition} 

Note that this corresponds to the notion of a Nash-equilibrium, where no agent can decrease its cost by changing only its own supplier. Note also, while the network might be in equilibrium, this does not necessarily mean that the total cost for all agents, $\sum_i K_i$, is minimal.
 
\newpage
\subsection*{General algorithm}
Consider the optimization problem defined above with \emph{non-increasing production costs per unit $p_k(S_k)$} for all agents $k = \left\{1,2,\dots,N\right\}$. Given a fixed value $p_T$, for each agent $i$, individually,
\begin{align}
 \mathrm{find}	\quad & \quad i^* \in \left\{1,\dots,N\right\} \nonumber\\
 \mathrm{minimizing} \quad & \quad K_i\left(i^*\right) = D_i p_{i^*}\left(S_{i^*}^i + D_i\right) + p_T D_i \sum_{e \in \Pi(i^*,i)} t_e \,.
\end{align}
In general, the following algorithm yields an equilibrium (though not necessarily the global optimum) when it terminates:

\begin{algorithm}[H]
\caption{General optimization}
\label{alg:alg1}
\begin{algorithmic}[1]
\Repeat
	\State $\mathcal{U} \leftarrow \varnothing$
	\ForAll{$i \in \left\{1,2,\ldots,N\right\}$}
		\ForAll{$k \in \left\{1,2,\ldots,N\right\} $}
			\State Calculate $K_i(k)$ [where $S_{ki} = D_i$ and $ S_{li} = 0$ for all $l \neq k$].
			\If{ $K_i(k) < K_i(i')$ where currently $S_{i'i} = D_i$ }
				\State $\mathcal{U} \leftarrow \mathcal{U} \cup \left\{K_i(k)\right\}$
			\EndIf
		\EndFor
	\EndFor
	\If{$\mathcal{U} \neq \varnothing$}
		\State Calculate $\left\{i,i^*\right\}$ such that $K_i(i^*) = \min_{\mathcal{U}} K_i(k)$
		\State $S_{i^* i} \leftarrow D_i$ 
		\State $S_{i' i} \leftarrow 0$
	\EndIf
\Until{$\mathcal{U} = \varnothing$}
\end{algorithmic}
\end{algorithm}

In words, for every step we calculate the cost for all agents and possible suppliers. If an agent $i$ would prefer a new supplier $k$ to its current supplier $i'$ we add the corresponding cost to the list of possible updates $\mathcal{U}$. Here, we use $K_i(k)$ to denote both the final cost for agent $i$ as well as the corresponding update itself. In the end we execute the update with the smallest final cost (clearly any update will find the currently optimal choice $i^*$ for agent $i$, since we check all possible suppliers in the network). We repeat this process until no further update is possible.

\begin{theorem}
	Algorithm \ref{alg:alg1} will always terminate in finite time. When it terminates the network will be in an equilibrium state, where no agent can reduce its costs by changing its supplier.
\end{theorem}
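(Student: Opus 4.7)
My strategy is to exhibit an \emph{ordinal potential} $\Phi$ on the finite state space of single-supplier assignments that strictly decreases at every execution of the update step. Finiteness of the state space then rules out cycles and forces termination after a bounded number of iterations; the halting condition $\mathcal{U}=\varnothing$ coincides, by direct inspection of the inner loop, with the equilibrium condition, which takes care of the second half of the theorem for free.

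\smallskip
\textbf{Choice of potential.} By Lemma~1, every intermediate state of the algorithm is encoded by an assignment mapping each agent $i$ to a single current supplier $i'$, so the state space has cardinality at most $N^N$. In the canonical setting of identical demands $D_i \equiv D$ I would use the Rosenthal-type function
\begin{equation}
    \Phi \;=\; \sum_{k=1}^{N}\, \sum_{\ell=1}^{n_k} D\, p_k(\ell D)\;+\;p_T\,D\sum_{i=1}^{N} T_{i'i},
\end{equation}
where $n_k$ is the number of agents currently purchasing from node $k$ and $i'$ denotes the current supplier of agent $i$. When the algorithm selects an update $(i,k)\in\mathcal{U}$ and reassigns agent $i$ from $i'$ to $k$, the first double sum loses its top summand at $i'$ and gains a new top summand at $k$, while the transaction-cost term merely swaps $T_{i'i}\to T_{ki}$. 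A short bookkeeping calculation telescopes this into
\begin{equation}
    \Delta \Phi \;=\; D\,p_k\!\bigl(S_k + D\bigr)\,-\,D\,p_{i'}\!\bigl(S_{i'}\bigr)\,+\,p_T\,D\,(T_{ki}-T_{i'i})\;=\;K_i(k)-K_i(i'),
\end{equation}
which is strictly negative precisely because $K_i(k)$ was added to $\mathcal{U}$.

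\smallskip
\textbf{Termination and equilibrium.} Since $\Phi$ takes only finitely many distinct values on a finite state space and strictly decreases at every iteration, the outer \texttt{repeat} loop can execute at most $N^N$ times before $\mathcal{U}$ must be empty. At that point the inner double loop has verified that $K_i(k)\ge K_i(i')$ holds for \emph{every} agent $i$ and \emph{every} candidate node $k$; this is literally the equilibrium condition stated just before the algorithm, establishing the second claim.

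\smallskip
\textbf{Where the argument is delicate.} The real obstacle is the extension to non-identical demands, which is not explicitly excluded in the hypothesis. For general $D_i$ the natural weighted analogue $\Phi_{\mathrm{w}} = \sum_k \int_0^{S_k} p_k(x)\,dx + p_T \sum_i D_i\, T_{i'i}$ is only an \emph{exact} potential when the $p_k$ are affine-linear (as in the linear cost model of the main text); for merely non-increasing $p_k$ one in general loses the identity $\operatorname{sign}(\Delta\Phi_{\mathrm{w}})=\operatorname{sign}(\Delta K_i)$, and weighted congestion-type dynamics are known to admit best-response cycles. I would therefore suspect that the specific selection rule ``execute the update attaining the smallest $K_i(k)$'' is what really precludes cycles in the general setting, and I would try to encode this by tracking a lexicographically ordered vector of current agent costs and arguing that it strictly decreases in a suitable dictionary order. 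Making this step rigorous is where I expect to spend most of the effort.
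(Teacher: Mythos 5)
Your termination argument takes a genuinely different route from the paper's. The paper does not use a potential function at all: it argues combinatorially, by contradiction, that no finite chain of updates $u_b\circ u_a\circ u_b^{-1}\circ\cdots$ can return the network to a previously visited state, because any update that would have to be reversed at the end of such a chain is rendered non-beneficial by the updates it enabled. Your Rosenthal-type exact potential is cleaner where it applies: the identity $\Delta\Phi=K_i(k)-K_i(i')$ is correct (with $S_{i'}^i+D=S_{i'}$ for the current supplier and $S_k^i=S_k$ for the new one, the bookkeeping checks out), it proves termination under \emph{any} improving-update selection rule rather than just the minimum-cost one, and as a by-product it establishes existence of an equilibrium as a minimizer of $\Phi$. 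Your treatment of the equilibrium-at-termination claim matches the paper's, modulo the (correct) appeal to Lemma~1 to restrict attention to single-supplier deviations.

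The gap is the one you half-identify yourself, and it is real: the theorem is stated for Algorithm~\ref{alg:alg1} under the sole hypothesis of non-increasing $p_k$, with \emph{no} assumption of identical demands. Homogeneous demand $D_i\equiv D$ is introduced only later, as the extra hypothesis needed for the \emph{local} algorithm (Lemma~2 and Theorem~3); indeed the paper explicitly runs Algorithm~\ref{alg:alg1} on the world transport network precisely because the demands there are heterogeneous. So your potential proves only a special case of the stated theorem. Your fallback — that the ``execute the cheapest update'' rule should induce a lexicographic decrease — is left as a conjecture rather than an argument, and it is not obviously true: weighted singleton congestion games with general non-increasing cost functions are exactly the setting where exact potentials fail and improvement cycles can occur, so the burden of proof for the heterogeneous case is substantive. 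To close the gap you would either need to carry out the lexicographic argument in detail or reproduce something like the paper's chain-reversal contradiction, which is demand-agnostic. As it stands, the proposal is a correct and elegant proof of the $D_i\equiv D$ case and an honest sketch of the rest.
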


\begin{proof}
Firstly, the algorithm only terminates when there are no possible updates, i.e., when no single agent can further reduce its costs by changing its supplier. As we assumed non-increasing production costs per unit, a single supplier is always an optimal solution. Thus no single agent $i$ can reduce its costs at all and we always end in an equilibrium.

Conversely, if the network is in equilibrium, no agent can reduce its cost. The update list will be empty and the algorithm will terminate.\\

Secondly, consider the possibility of the algorithm not terminating at all. This can only happen if the algorithm runs into a loop, executing the same updates repeating a finite set of states of the network. We now show that such a loop is impossible:

Assume an initial state $\mathcal{I}$ of the network, defined by the pairs of agents and current suppliers $\mathcal{I} = \left\{\left(i,i'\right), \dots\right\}$. A loop can only occur if the network now reaches a state $\mathcal{I}-1$, where a single agent $i$ has a different supplier $k$, but would like to return to the supplier $i'$ it has in state $\mathcal{I}$ ($K_i(i') < K_i(k)$).

To reach this state we need a chain of updates including, at least, one update $u_a$ where agent $i$ switches supplier to $k$. However, if the reverse update $u_a^{-1}$ is strictly beneficial in state $\mathcal{I}-1$, clearly $u_a$ is not beneficial in state $\mathcal{I}$. Thus the chain needs to consist of more than a single update.\\

Given that one update alone is not possible, we need another update to enable it. We thus have to consider a chain of updates $u_b \circ u_a \circ u_b^{-1}$, where $u_b$ enables $u_a$ and is reversed in the end to arrive at state $\mathcal{I}-1$. The main idea is illustrated in Fig.~\ref{fig:proof_termination}. There are only two possible ways how a beneficial update $u_b$ can enable $u_a$: 

(1) either an agent $j$ switches from its supplier $j'$ to the the new supplier $k$. But then the update $u_a$ decreases the cost $K_j(k)$ further. The reverse update $u_b^{-1}$ would then never be executed since $K_j(k) < K_j(j')$.

(2) $u_b$ could have an agent $j$ switch away from the current supplier $j' = i'$ to a different node $l$. But then $u_a$ further increases the cost $K_k(i')$ and we again find $K_j(l) < K_j(i')$, making the reverse update impossible.\\

Following this line of argument, we need another update $u_c$ to enable the update $u_b^{-1}$, e.g., a chain of updates $u_b \circ u_a \circ u_c \circ u_b^{-1} \circ u_c^{-1}$. This clearly requires a partial ordering of the updates, such that $u_b$, $u_c$, $u_b^{-1}$ and  $u_c^{-1}$ are executed in this order (otherwise the update $u_c$ will have no effect on the update $u_b^{-1}$). However, by the same logic as above $u_b^{-1}$ would change the cost unfavorably for $u_c^{-1}$. We then find the same problem that $u_c^{-1}$ would never be executed at the end of the chain. Repeating the argument, we find the same problem for every finite chain of updates.\\

Thus, we cannot construct a finite chain of updates to reach $\mathcal{I}-1$. Consequently, we cannot repeat a state and thus would visit every possible state, but then the algorithm would terminate in (one of) the equilibrium states. Together, we find that the algorithm must terminate in finite time.

\end{proof}

\begin{figure}[h]
\centering
\includegraphics[width=0.75\textwidth]{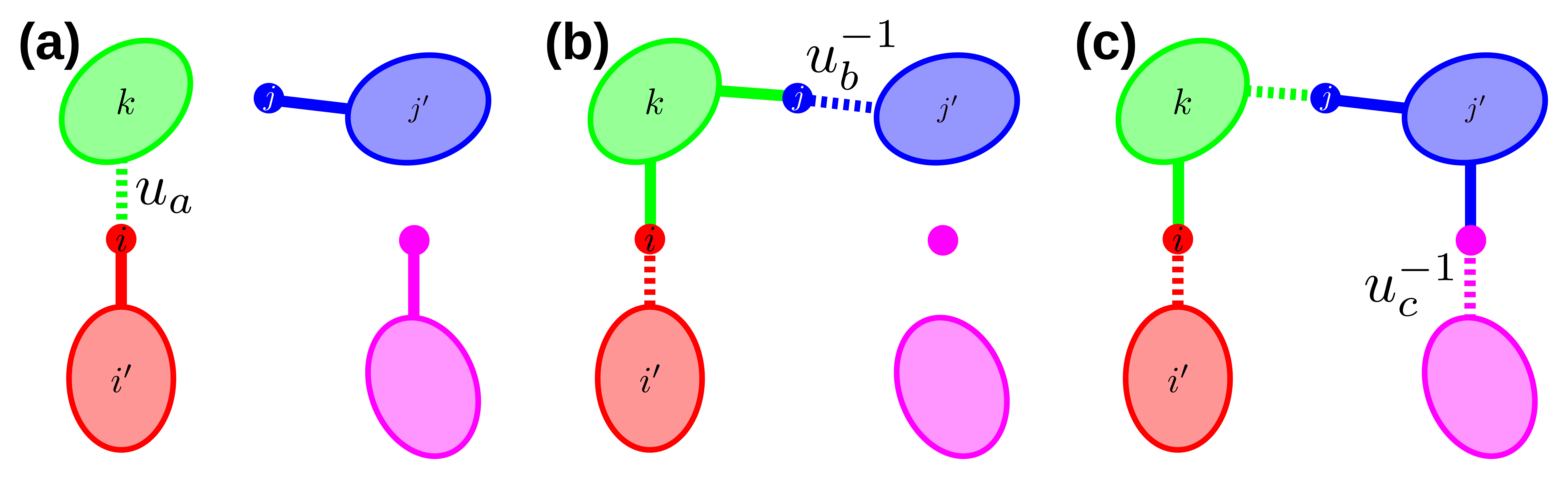} \\
\caption{Illustration of the main argument. Starting from an initial state $\mathcal{I}$, we cannot find a chain of updates that leaves us in a state $\mathcal{I} - 1$ where agent $i$ buys from $k$ but wants to switch back to $i'$. (a) Clearly, a single update is not enough, as $i$ would not want to switch to $k$. (b) Using another agent $j$ to enable the switch of $i$ to $k$ leaves $j$ buying at $k$, the reverse update required to reach state $\mathcal{I} - 1$ is not possible. (c) Following the same logic, a chain with three (or more) updates will also not be able to reach state $\mathcal{I} - 1$ (see text).}
\label{fig:proof_termination}
\end{figure}

\newpage
\subsection{When local information is sufficient}

With the above simplification of non-increasing production costs per unit, we still consider the individual optimization problem of node $i$ of finding the best $i^*$ to minimize
\begin{align}
 K_i(i^*) &= D_i p_{i^*}(S_{i^*}^i + D_i) + p_T D_i \sum_{e \in \Pi(i^*,i)} t_e \,.
\end{align}
In principal, we consider every node in the network as a possible supplier. We now show that in certain cases we can further simplify this optimization by only considering the local neighborhood of agent $i$ to find its best supplier $i^*$. 

\begin{lemma}
Given the optimization problem defined above with non-increasing production costs per unit $p_k(S_k)$ and \emph{homogeneous demand $D_i = D$} for all agents $i \in \left\{1,\dots,N\right\}$. If all other agents are in equilibrium, then either $i$ is its own optimal supplier or there exists a neighbor $j$ of $i$ such that the optimal supplier $j^*$ of $j$ is also an optimal supplier of $i$.
\end{lemma}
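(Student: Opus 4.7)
My plan is to fix an arbitrary candidate supplier $k \neq i$, route the transport through the neighbor $j$ of $i$ lying on the shortest path from $k$ to $i$, and show that $j$'s equilibrium supplier $j^*$ is always at least as good a choice for $i$ as $k$ is. Specialising to $k$ equal to a globally optimal supplier $i^*$ of $i$ then either puts us in the first alternative of the lemma ($i^* = i$) or produces a neighbor $j$ whose $j^*$ achieves the same minimum, which is the second alternative.

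First I would set up notation so that $T_{ki} = T_{kj} + t_{ji}$ with $j$ adjacent to $i$. By Lemma 1 each equilibrium agent uses a unique supplier, so $S_{j^* j} = D$ while $S_{l j} = 0$ for every $l \neq j^*$. The equilibrium condition $K_j(j^*) \le K_j(k)$ of agent $j$ then yields the starting inequality
\begin{equation}
p_{j^*}(S_{j^*}^j + D) + p_T T_{j^* j} \le p_k(S_k^j + D) + p_T T_{kj}.
\end{equation}
Adding $p_T t_{ji}$ to both sides and using the triangle-type bound $T_{j^* i} \le T_{j^* j} + t_{ji}$ on the left together with $T_{ki} = T_{kj} + t_{ji}$ on the right (then multiplying by $D$) gives
\begin{equation}
D p_{j^*}(S_{j^*}^j + D) + p_T D T_{j^* i} \le D p_k(S_k^j + D) + p_T D T_{ki}.
\end{equation}

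The core step is to convert the $j$-superscripts in the production arguments into $i$-superscripts so that the inequality becomes $K_i(j^*) \le K_i(k)$. On the left, $S_{j^*}^i \ge S_{j^*}^j$ because $j$ directs its full demand $D$ to $j^*$ while $i$ contributes at most $D$ there, and since $p_{j^*}$ is non-increasing this only shrinks the left side. On the right, if $k = j^*$ the target inequality is trivial; otherwise $S_{kj} = 0$, hence $S_k^j \ge S_k^i$, and monotonicity of $p_k$ only enlarges the right side. Combining, $K_i(j^*) \le K_i(k)$ for every $k \neq i$.

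The step I expect to be the most delicate is precisely this relabelling of superscripts: one must keep careful track of which agent has $D$ invested in which node, invoking both the monopoly nature of $j$'s equilibrium ($S_{lj} = 0$ for $l \neq j^*$) and the universal bound $S_{j^* i} \le D$. Once that bookkeeping is clean, setting $k = i^*$ finishes the proof: either $i^* = i$, or else the constructed neighbor $j$ satisfies $K_i(j^*) \le K_i(i^*)$, which by optimality of $i^*$ must be an equality, so $j^*$ is also an optimal supplier of $i$.
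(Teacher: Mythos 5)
Your proof is correct and follows essentially the same route as the paper's: pick the neighbor $j$ on the shortest path from the candidate supplier to $i$, invoke $j$'s equilibrium condition, add $p_T t_{ji}$ and use the shortest-path decomposition $T_{ki}=T_{kj}+t_{ji}$, and then swap the superscripts $j\to i$ in the production arguments via monotonicity of $p$, using $S_{j^*i}\le D = S_{j^*j}$ and $S_{kj}=0$ for $k\ne j^*$. The only cosmetic differences are that you argue directly for an arbitrary candidate $k$ and then specialize to $k=i^*$, whereas the paper fixes $k=i^*$ and argues by contradiction, combining your two monotonicity steps into a single displayed inequality.
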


\begin{proof}
Obviously, if $i^* = i$, we are done.

Otherwise, consider the external optimal supplier $i^*$ of $i$. The path from $i^*$ to $i$ will pass through a neighbor $j$ of $i$, such that
\begin{align}
   \Pi(i^*,i) &= (e_{i^* k_1}, e_{k_1 k_2}, \ldots, e_{j, i}) \\
   T_{i^*,i} &= t_{i^* k_1} + t_{k_1 k_2} + \ldots + t_{j i}\,. 
\end{align} 
We give a proof by contradiction, assuming that $j^*$ is not an optimal supplier of $i$, i.e., $j^* \neq i^*$ and $ K_i(j^*) > K_i(i^*)$.
Using the fact that $j^*$ is the optimal supplier of $j$ we obtain the inequality 
\begin{eqnarray}
	\frac{K_j(j^*)}{D} &=& p_{j^*}\left(S_{j^*}^j + D\right) + p_T  \sum_{e \in \Pi(j^*,j)} t_e \\
						 &\le& p_{i^*}\left(S_{i^*}^{i} + D\right) + p_T  \sum_{e \in \Pi(i^*,j)} t_e = \frac{K_j(i^*)}{D} \,.
\end{eqnarray}

Since $j$ is in equilibrium we have 
\begin{eqnarray}
	S_{j^*}^j + D &=& S_{j^*}	\nonumber\\
	S_{i^*}^j + D &=& S_{i^*} + D	\,,
\end{eqnarray}
and for agent $i$ we have also 
\begin{eqnarray}
	S_{j^*} \le S_{j^*}^i + D &\le& S_{j^*} + D \nonumber\\
	S_{i^*} \le S_{i^*}^i + D &\le& S_{i^*} + D\,,
\end{eqnarray}
where $S_{j^*}^j$ again denotes the production of $j^*$ ignoring purchases of agent $j$ and the second set of inequalities simply states that $i$ may currently be buying from any node, possibly even $i^*$ or $j^*$.
Using these observations together with the fact that all $p_{k}$ are non-increasing we find
\begin{eqnarray}
&& - p_{i^*}\left(S_{i^*}^{i} + D\right) + p_{i^*}\left(S_{i^*}^j + D\right) - p_{j^*}\left(S_{j^*}^j + D\right) + p_{j^*}\left(S_{j^*}^{i} + D\right) \nonumber\\
&=&\underbrace{- p_{i^*}\left(S_{i^*}^{i} + D\right) + p_{i^*}\left(S_{i^*} + D\right)}_{\le 0} \underbrace{- p_{j^*}\left(S_{j^*}\right) + p_{j^*}\left(S_{j^*}^{i} + D\right)}_{\le 0} \nonumber\\
&\le& 0	\, \label{eq:demand_condition}.
\end{eqnarray}

We can combine the above inequalities to obtain
\begin{align}
	\frac{K_i(j^*)}{D} & = p_{j^*}\left(S_{j^*}^{i} + D\right) + p_T  \sum_{e \in \Pi(j^*,i)} t_e \nonumber\\
	&\le p_{j^*}\left(S_{j^*}^{i} + D\right) + p_T  \sum_{e \in \Pi(j^*,j)} t_e + p_T t_{ji}\nonumber\\
	&= p_{j^*}\left(S_{j^*}^j + D\right) + p_T  \sum_{e \in \Pi(j^*,j)} t_e - p_{j^*}\left(S_{j^*}^j + D\right) + p_{j^*}\left(S_{j^*}^{i} + D\right) + p_T t_{ji} \nonumber\\
	&\le p_{i^*}\left(S_{i^*}^j + D\right) + p_T  \sum_{e \in \Pi(i^*,j)} t_e - p_{j^*}\left(S_{j^*}^j + D\right) + p_{j^*}\left(S_{j^*}^{i} + D\right) + p_T t_{ji} \nonumber\\
	&= p_{i^*}\left(S_{i^*}^{i} + D\right) + p_T  \sum_{e \in \Pi(i^*,j)} t_e + p_T t_{ji} \nonumber\\
	&
	\quad\quad - p_{i^*}\left(S_{i^*}^{i} + D\right) + p_{i^*}\left(S_{i^*}^j + D\right) - p_{j^*}\left(S_{j^*}^j + D\right) + p_{j^*}\left(S_{j^*}^{i} + D\right)  \nonumber \\
	&\le p_{i^*}\left(S_{i^*}^{i} + D\right) + p_T  \sum_{e \in \Pi(i^*,i)} t_e \nonumber \\
	& = \frac{K_i(i^*)}{D}\,,
	\label{eq:ineq_Ki_Kj}
\end{align}
where the first inequality is due to $j^*$ being an optimal supplier for $j$ and the second inequality follows from Eq.~(\ref{eq:demand_condition}).

Hence, we know that
\begin{equation}
    K_i(j^*) \le K_i(i^*) ,
\end{equation}  
contradicting our original assumption. Therefore $j^*$ has to be an optimal supplier of $i$ which concludes the proof.
\end{proof}
\quad\\

We note that this optimal supplier $i^* = j^*$ does not necessarily have to be unique, it is possible that there are multiple optimal suppliers with identical (minimal) costs.

We briefly summarize the results so far: if we have non-increasing production costs per unit and sufficiently homogeneous demand [Eq.~(\ref{eq:demand_condition})], we can in principle solve the optimization problem of a single agent $i$ by considering only its local, direct neighborhood. In particular, this condition is fulfilled under the stronger assumption of identical demand $D_i = D_j = D$ for all agents. However, we assumed a network in (almost) equilibrium, specifically we assumed that $j$ is already buying at $j^*$. If this is not the case, it remains to be shown that we can order simultaneous updates in such a way that we perform only local updates and still always find the optimal supplier for each agent.\\

\begin{corollary}
We call the set $\left\{i_1,i_2,\dots\right\}$ of agents with $i^*$ as their optimal (and current) supplier the \emph{cluster} $C\left(i^*\right)$ of $i^*$. In a network in equilibrium the individual clusters are connected: for every node $i \in C\left(i^*\right)$ there is a shortest (cheapest) path \mbox{$\Pi(i^*,i) = \left(i^*, j_1, j_2, \ldots, j_n, i\right)$} from $i^*$ to $i$ such that all nodes $j_1,j_2 \dots j_n \in C\left(i^*\right)$. The active links in the clusters follow the shortest path tree from the supplier.
\end{corollary}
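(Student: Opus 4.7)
The plan is to induct on the graph distance $d$ from the supplier $i^*$ to the agent $i \in C(i^*)$ in the underlying transport network. The base case $d = 0$ is trivial: then $i = i^*$ and the path has no intermediate nodes, so the condition holds vacuously. For the inductive step, fix $i \in C(i^*)$ at distance $d \ge 1$ and any shortest path $\Pi(i^*, i) = (i^*, k_1, \ldots, k_{d-1}, i)$. The goal is to show that the immediate predecessor $k_{d-1}$ belongs to $C(i^*)$, and then invoke the induction hypothesis on $k_{d-1}$.

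To establish $k_{d-1} \in C(i^*)$, I would apply the preceding Lemma to agent $i$: the entire network is in equilibrium, so in particular every agent other than $i$ is, and the Lemma produces a neighbor $j$ of $i$ whose current supplier $j^*$ is also an optimal supplier for $i$. A close reading of the Lemma's proof shows that $j$ may be taken to be the neighbor of $i$ lying on the shortest path from $i^*$, namely $j = k_{d-1}$. Since $i$ is itself in equilibrium at $i^*$, we have $K_i(i^*) \le K_i(j^*)$, and combined with $K_i(j^*) \le K_i(i^*)$ from the Lemma this forces $K_i(j^*) = K_i(i^*)$. In the generic case of no ties this implies $j^* = i^*$, so $k_{d-1} \in C(i^*)$; the induction hypothesis then provides a shortest path from $i^*$ to $k_{d-1}$ lying in $C(i^*)$, and appending the edge $e_{k_{d-1}, i}$ produces the required shortest path from $i^*$ to $i$ inside $C(i^*)$. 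Iterating this observation over all $i \in C(i^*)$ also gives the second assertion: each agent imports along the edge to a neighbor that itself lies in $C(i^*)$, so the active trade links form a directed shortest-path tree rooted at $i^*$.

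The only real obstacle is the degenerate case in which several suppliers are simultaneously optimal for some agent $j$ on the path: then the argument only yields $K_j(j^*) = K_j(i^*)$, and $j$'s nominal current supplier could be a different equally optimal node lying in a distinct cluster. I would handle this either by restricting to generic parameter values (where such coincidences occur on a measure-zero set), or by fixing a consistent tie-breaking rule of the kind already used by the local percolation algorithm: upon an exact tie an agent copies the supplier of its neighbor on the chosen shortest path. Under such a convention $j^* = i^*$ is enforced by construction, the induction closes cleanly, and each cluster acquires the stated connected shortest-path-tree structure.
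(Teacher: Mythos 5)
The paper states this corollary without proof, as an immediate consequence of the preceding lemma, and your induction along the shortest path from $i^*$ to $i$ is exactly the argument that makes this rigorous: the lemma's proof indeed takes $j$ to be the predecessor of $i$ on $\Pi(i^*,i)$, and combining $K_i(j^*)\le K_i(i^*)$ from the lemma with the optimality of $i^*$ for $i$ forces $j^*$ to be optimal for $i$, hence (generically) $j^*=i^*$. Your caveat about ties is not pedantry but a genuine gap in the corollary as literally stated: one can construct exact-tie equilibria (e.g.\ a path $i^*\!-\!j\!-\!i$ with a second supplier $j^*$ attached to $j$, equal production costs and $a=0$) in which $i$ buys from $i^*$ while the unique intermediate node $j$ buys from $j^*$, so $C(i^*)$ is disconnected; the statement therefore requires either genericity of the parameters or the tie-breaking convention you propose, which is consistent with the ordering already built into the paper's local percolation algorithm. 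The only minor point to tidy is that the induction should run on the weighted distance $T_{i^*i}$ (or on the minimal hop count over weighted shortest paths) rather than plain graph distance, so that the predecessor on a shortest path is guaranteed to fall under the induction hypothesis.
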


\newpage
\subsection{Local percolation algorithm}

Above we showed that we can always find an optimal supplier for an agent $i$, given the other agents are in equilibrium, by considering its neighbors when we have non-increasing production costs per unit and homogeneous demand $D_i = D$ for all agents $i \in \left\{ 1,\ldots,N \right\}$. We use this result to derive a simplified, local percolation algorithm to solve this optimization problem in an efficient way. Since individual purchase decisions can lead to larger changes in the network, we still need to consider updates in a network out of equilibrium. Here, one can easily construct examples where there exists an agent that cannot find its (currently) optimal supplier locally. In these cases, however, other agents will also want to update their purchases and at least one agent can find its optimal supplier locally. Below, we introduce the algorithm and the ordering of the local updates such that we always update agents who can find their optimal supplier locally. We show that this defines a \emph{local} percolation rule taking the network from \emph{any} state into an equilibrium with only local updates to find optimal suppliers.

\begin{algorithm}[H]
\caption{Local percolation algorithm}
\label{alg:alg2}
\begin{algorithmic}[1]
\Repeat
	\State $\mathcal{U} \leftarrow \varnothing$. 
	\ForAll{$i \in \left\{ 1,\ldots,N \right\}$}
		\ForAll{$j \in \mathrm{Neighborhood}(i)$}
        	\State Calculate $K_i(j')$ and $K_i(i')$, where currently, $S_{i'i} = D_i$ and $S_{j'j} = D_j$
    		\If {$ K_i(j') < K_i(i')$ where currently $S_{i'i} = D_i$ }
    			\State $\mathcal{U} \leftarrow \mathcal{U} \cup \left\{ K_i(j') \right\}$
    		\EndIf 
        \EndFor
        \State Calculate $K_i(i)$
    		\If {$ K_i(i) < K_i(i') $}
    			\State $\mathcal{U} \leftarrow \mathcal{U} \cup \left\{ K_i(i) \right\}$
    		\EndIf 
    \EndFor
	\If{$\mathcal{U} \neq \varnothing$}
		\State Calculate $\left\{i,i^*\right\}$ such that $K_i(i^*) = \min_{\mathcal{U}} K_i(k)$
		\State $S_{i^* i} \leftarrow D_i$ 
		\State $S_{i' i} \leftarrow 0$
	\EndIf
\Until{$\mathcal{U} = \varnothing$}   
\end{algorithmic}
\end{algorithm}

This algorithm is a local version of the general algorithm, where we check for every agent $i$ only the suppliers $j'$ of its neighbors $j$ and itself as possible new suppliers. As before we then execute the update with the smallest final cost and repeat the process until no further update is possible.\\

\begin{theorem}
Consider the optimization problem defined above with non-increasing production cost per unit $p_k(S_k)$ and homogeneous demand $D_i = D$ for all agents $i \in \left\{1,\ldots,N\right\}$. Algorithm 2 only executes updates finding the optimal supplier for an agent and terminates in finite time in an equilibrium state, where no agent can reduce its costs further.\
\end{theorem}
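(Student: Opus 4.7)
The plan is to prove the theorem in three linked steps: (i)~each executed update moves its agent to its globally cost-minimizing supplier, (ii)~the algorithm cannot loop and so must terminate, and (iii)~termination forces equilibrium. Step~(i) is the heart of the argument and essentially forces step~(iii); termination will then follow by the same device used earlier for Algorithm~1.

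For step~(i), I would lift the static local-sufficiency lemma (which assumed the network already in equilibrium) to the dynamic setting via a global minimality argument. Suppose the update selected from $\mathcal{U}$ acts on some agent $i^\circ$ with cost $K_{i^\circ}(k^\circ)$. Assume for contradiction that some globally beneficial switch $(i, i^\ast)$ with $K_i(i^\ast) < K_{i^\circ}(k^\circ)$ exists, and among \emph{all} globally beneficial switches $(j, k)$ choose one $(j, k^\ast)$ that minimizes $K_j(k)$. Let $l$ be the neighbor of $j$ on the shortest path from $k^\ast$ to $j$. If $l$ is already supplied by $k^\ast$, then $(j, k^\ast)$ sits directly in $\mathcal{U}$ and immediately produces the contradiction. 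Otherwise, I would reproduce the two inequalities that powered the local-sufficiency lemma,
\begin{align*}
	K_l(k^\ast) &\le K_j(k^\ast) - p_T D\, t_{l,j}, \\
	K_j(l') &\le K_l(l') + p_T D\, t_{l,j},
\end{align*}
where $l'$ denotes $l$'s current supplier. Both inequalities rely only on non-increasing $p_k$, additivity of path distances, and the identical-demand condition $D_i \equiv D$. The first inequality combined with the minimality of $K_j(k^\ast)$ among globally beneficial switches forces $K_l(l') \le K_l(k^\ast)$; substituting into the second yields $K_j(l') \le K_j(k^\ast) < K_j(j')$. Hence $(j, l')$ is a beneficial update visible in $\mathcal{U}$ with cost strictly smaller than $K_{i^\circ}(k^\circ)$, contradicting the minimality of the update actually selected.

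For step~(ii), since every executed update strictly decreases the acting agent's cost, the chain-blocking argument from the earlier theorem on Algorithm~1 carries over without change: no sequence of strict improvements can bring the network back to a previously visited configuration, so among the finitely many configurations the algorithm must halt. For step~(iii), termination is triggered by $\mathcal{U} = \varnothing$; the same reasoning used in step~(i) then equally excludes the existence of any globally beneficial switch (else a locally visible update would still populate $\mathcal{U}$), so the terminal configuration is an equilibrium. The main obstacle sits in step~(i): correctly bookkeeping the production-cost terms when deriving the two key inequalities, in particular handling the degenerate cases $k^\ast = j$, $l = k^\ast$, $l' = k^\ast$, or $j$ currently buying from $l'$; the non-increasing property $\partial p_k/\partial S_k \le 0$ together with shifts of at most $\pm D$ in $S_k$ absorbs each case, but this calls for careful case-splitting rather than a single clean computation.
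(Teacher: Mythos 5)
Your proposal is correct and follows the paper's own three-part decomposition exactly: (a) the update selected from $\mathcal{U}$ is a globally optimal one, (b) termination follows from the no-loop argument already given for Algorithm~1, and (c) $\mathcal{U}=\varnothing$ if and only if the network is in equilibrium. Where you differ is in how (a) is established. The paper's Lemma~4 fixes an arbitrary non-optimal entry $K_i(k)\in\mathcal{U}$, walks backwards along the shortest path $\Pi(i^*,i)$ until it finds the first agent $j$ for which the update to $i^*$ is locally visible, and shows $K_j(i^*)\le K_i(i^*)<K_i(k)$, so the minimum of $\mathcal{U}$ must be optimal; you instead apply an extremal principle to the set of \emph{all} globally beneficial switches (not just those in $\mathcal{U}$), pick a cost-minimizing one $(j,k^*)$, and use a single-neighbor step to produce an element of $\mathcal{U}$ strictly below the selected update. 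Both arguments rest on the same two inequalities (cost decreases monotonically toward the supplier along the path, and a neighbor's achieved cost bounds yours up to $p_T D\, t_{lj}$ plus a production-cost shift of at most $D$ absorbed by $\partial p_k/\partial S_k\le 0$), so the mathematical content is the same; your version buys a cleaner, non-iterative argument and lets the identical extremal reasoning dispose of step (c), where the paper's Lemma~5 needs a separate "repeat with $i\leftarrow j$" induction. One caveat applies to both proofs equally: when $p_T\, t_{lj}=0$ the inequality $K_l(k^*)\le K_j(k^*)-p_T D\, t_{lj}$ is not strict, so the minimality of $(j,k^*)$ does not immediately force $K_l(l')\le K_l(k^*)$; you need a secondary tie-breaking criterion (e.g.\ minimal hop distance to $k^*$ among cost-minimal beneficial switches), just as the paper must "repeat the argument with $i^*\leftarrow j'$" when $K_i(j')=K_i(i^*)$. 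You flag the degenerate cases ($k^*=j$, $l=k^*$, $l'=j'$) correctly; each is absorbed either by the algorithm's explicit check of internal production or by the non-strict monotonicity of $p_k$.
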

\quad\\

We now proof this theorem with two supporting Lemmas:

\begin{lemma}
Given algorithm \ref{alg:alg2}, for every $K_i(k) \in \mathcal{U}$, either $k = i^*$ or there is $K_j(l) \in \mathcal{U}$ with $K_j(l) < K_i(k)$.
\end{lemma}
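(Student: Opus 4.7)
My approach is to show that any suboptimal candidate $K_i(k)\in\mathcal{U}$ (i.e.\ with $k\neq i^*$) is dominated by a cheaper candidate that the algorithm has already placed in $\mathcal{U}$, found one step closer to the global optimum $i^*$ along the shortest path. The key ingredient is a simple identity: on the shortest path from $i^*$ to $i$, every agent that does not currently buy from $i^*$ sees the same production-cost term $D\,p_{i^*}(S_{i^*}+D)$ for switching to $i^*$, so the hypothetical cost of switching to $i^*$ depends on the agent only through the transaction-cost contribution $p_T D$ times its distance to $i$.

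First I would fix the notation and the baseline inequality. Since $i^*$ is $i$'s global optimum by Lemma~1, $K_i(i^*)\le K_i(k)<K_i(i')$, where $i'$ is $i$'s current supplier; hence $i'\neq i^*$ and $S_{i^*}^{i}=S_{i^*}$. Writing the shortest path as $i=a_0,a_1,\ldots,a_p=i^*$ with current suppliers $s_q$ of $a_q$, I would pick the smallest $r\ge 1$ with $s_r=i^*$, which exists because $s_p=i^*$ trivially while $s_0\neq i^*$. The path-neighbor $a_r$ of $a_{r-1}$ currently supplies from $i^*$, so Algorithm~2 evaluates $K_{a_{r-1}}(i^*)$; since no $a_q$ with $q<r$ buys from $i^*$, the key identity reads
\begin{equation}
K_{a_{r-1}}(i^*)=K_i(i^*)-p_T D\,T_{a_{r-1},i}<K_i(i^*)\le K_i(k).
\end{equation}
If in addition $K_{a_{r-1}}(i^*)<K_{a_{r-1}}(s_{r-1})$, this cost is inserted into $\mathcal{U}$ and the lemma holds with $j=a_{r-1}$, $l=i^*$.

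The delicate subcase is when $a_{r-1}$ is already paying less at its off-path supplier $s_{r-1}$, so $K_{a_{r-1}}(i^*)$ is not added to $\mathcal{U}$. Then the true optimal supplier $a_{r-1}^*$ of $a_{r-1}$ satisfies $K_{a_{r-1}}(a_{r-1}^*)\le K_{a_{r-1}}(s_{r-1})<K_i(k)$, and I would recurse on the pair $(a_{r-1},a_{r-1}^*)$: the same path construction either produces an element of $\mathcal{U}$ with cost strictly below $K_i(k)$ or an even smaller reference cost for a further recursion. The main obstacle I anticipate is making this recursion rigorous — the new optimum $a_{r-1}^*$ may lie entirely off the original path and production levels are out of equilibrium, so one must verify that some well-founded invariant (for instance the reference cost paired with the path length to the current optimum) strictly decreases at every step and that the chase cannot loop. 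A cleaner packaging is by contradiction: choose an offending $K_i(k)\in\mathcal{U}$ with $k\neq i^*$ of minimum cost, and show that the path construction above produces an element of $\mathcal{U}$ of strictly smaller cost, contradicting minimality and thereby proving the lemma in one shot.
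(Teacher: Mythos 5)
Your first half is the paper's own first half: walk the shortest path $\Pi(i^*,i)$ back from $i$, use the fact that no agent strictly between $i$ and the first purchaser of $i^*$ is currently buying at $i^*$ (so the production term $p_{i^*}(S_{i^*}+D)$ is the same for all of them), and telescope the transaction cost to get $K_j(i^*)\le K_i(i^*)\le K_i(k)$. Two problems remain, one small and one fatal. The small one: your existence claim for $r$ rests on ``$s_p=i^*$ trivially,'' but out of equilibrium $i^*$ need not be its own current supplier; the case where nobody on the path buys at $i^*$ is rescued only because Algorithm~2 always evaluates the internal-production candidate $K_{i^*}(i^*)$, which is why the paper allows $j=i^*$ explicitly.

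The fatal one is that your ``delicate subcase'' is the actual content of the lemma, and your recursion does not close it. If $a_{r-1}$ prefers its current supplier $s_{r-1}$ to $i^*$, it may well be that $s_{r-1}=a_{r-1}^*$, i.e.\ $a_{r-1}$ is already at its own optimum; then nothing concerning $a_{r-1}$ is ever placed in $\mathcal{U}$, your recursion terminates empty-handed, and the cheap cost $K_{a_{r-1}}(s_{r-1})$ you carry is not an element of $\mathcal{U}$ (only strictly improving updates are inserted). Even when the recursion does continue, your reference cost is only non-increasing, so minimality-of-$K_i(k)$ or a well-founded descent does not follow. The paper's resolution is different and sharper: from $K_j(j')\le K_j(i^*)$ it derives, via
\begin{equation}
\frac{K_i(j')}{D}\le p_{j'}\bigl(S_{j'}\bigr)+p_T T_{j'j}+p_T T_{ji}=\frac{K_j(j')}{D}+p_T T_{ji}\le \frac{K_j(i^*)}{D}+p_T T_{ji}\le \frac{K_i(i^*)}{D},
\end{equation}
the conclusion that $j$'s current supplier $j'$ would then be at least as good for $i$ as $i^*$ --- contradicting strict optimality of $i^*$, or, in the equality case, exhibiting $j'$ as another optimal supplier of $i$ with which the whole argument restarts. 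Without this implication (that the blocking supplier is itself an optimal supplier of $i$), the subcase cannot be eliminated, so the proposal as written does not prove the lemma.
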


\begin{proof}
If $k = i^*$ we are done. \\

Otherwise, assuming that $k \neq i^*$, then also $i' \neq i^*$ since $K_i(i^*) < K_i(k) < K_i(i')$, thus $i$ is not already buying at $i^*$. Then, following the path $\Pi(i^*,i)$ backwards from $i$ to $i^*$, we will find an agent $j$ (possibly $j = i^*$ or $j = i$), where a local update $K_j(i^*)$ is possible and the current supplier $j' \neq i^*$.

We now consider this update $K_j(i^*)$. First we show that this update is in the update list by contradiction. Assume $K_j(j') \le K_j(i^*)$, then similar to the above proof we find
\begin{eqnarray}
	\frac{K_i(j')}{D} &=& p_{j'}\left(S_{j'}^i + D\right) + p_T T_{j' i} \nonumber\\
			&\le& p_{j'}\left(S_{j'}^i + D\right) + p_T T_{j' j} + p_T T_{ji} \nonumber\\
			&\le& p_{j'}\left(S_{j'}\right) + p_T T_{j' j} + p_T T_{ji} \nonumber\\
			&=& K_j(j') + p_T T_{ji} \nonumber\\
			&\le& K_j(i^*) + p_T T_{ji} \nonumber\\
			&\le& p_{i^*}\left(S_{i^*} + D\right) + p_T T_{i^*j} + p_T T_{ji} \nonumber\\
			&=& \frac{K_i(i^*)}{D} \,,
\end{eqnarray}
leading to the contradiction $K_i(j') < K_i(i^*)$ ($i^*$ is not the optimal supplier). Otherwise we have $K_i(j') = K_i(i^*)$ and we can simple repeat the above argument with $i^* \leftarrow j'$ (at most until we reach $j = i$).
This shows that there is an update $K_j(i^*)$ on the list $\mathcal{U}$ for some optimal supplier $i^*$ of $i$ and an agent $j$ on the path $\Pi(i^*,i)$.\\
Finally, we find
\begin{eqnarray}
	\frac{K_j(i^*)}{D} 	&=& p_{i^*}\left(S_{i^*}^j + D\right) + p_T T_{i^* j}	\nonumber\\
						&\le& p_{i^*}\left(S_{i^*}^i + D\right) + p_T T_{i^* j} + p_T T_{j i} \nonumber\\
						&=& p_{i^*}\left(S_{i^*}^i + D\right) + p_T T_{i^* i} \nonumber\\
						&=& p_{i^*}\left(S_{i^*}^i + D\right) + p_T T_{i^* i} \nonumber\\
						&=& \frac{K_i(i^*)}{D} \,,
\end{eqnarray}
since $S_{i^*}^j = S_{i^*}^i = S_{i^*}$ since none of the two agents are buying at $i^*$. This means that the update $K_j(i^*)$ with $K_j(i^*) \le K_i(i^*) < K_i(k)$ is in the update list before $K_i(k)$, which concludes the proof.
\end{proof}
\quad\\

It then follows directly that:
\begin{corollary}
The update on the list $K_i(k) = \min_\mathcal{U} K_j(l)$ is optimal with $k=i^*$.
\end{corollary}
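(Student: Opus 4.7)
The plan is to deduce this corollary directly from the preceding Lemma, which establishes a universal dichotomy for every member of the update list $\mathcal{U}$: for each $K_i(k) \in \mathcal{U}$, either $k$ is already an optimal supplier of $i$, or there is some other $K_j(l) \in \mathcal{U}$ strictly smaller than $K_i(k)$. My approach is simply to instantiate this dichotomy at the particular entry of $\mathcal{U}$ that realizes its minimum.

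Concretely, I would let $K_i(k) = \min_{\mathcal{U}} K_j(l)$ and apply the Lemma to this entry. Because $K_i(k)$ is by definition the smallest value in $\mathcal{U}$, the second branch of the dichotomy is ruled out immediately---no element of $\mathcal{U}$ can strictly undercut it. Only the first branch of the Lemma remains, forcing $k = i^*$, which is exactly the claim. The only minor housekeeping item, not really an obstacle, is to note that $\mathcal{U}$ is finite (a subset of agent--node pairs), so the minimum is attained whenever $\mathcal{U} \neq \varnothing$, which is the only case in which the corollary is invoked by Algorithm~\ref{alg:alg2}.

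In the wider argument, this corollary is the bridge needed to close the correctness and termination analysis of Algorithm~\ref{alg:alg2}: each actually executed step corresponds to an agent switching to a \emph{globally} optimal supplier (relative to the current state of the other agents), not merely to a locally improving one. Combined with the earlier Lemma that an agent's optimal supplier can be located among the suppliers of its neighbors once those neighbors are in equilibrium, this will let the finite-time termination and Nash-equilibrium guarantees already established for Algorithm~\ref{alg:alg1} transfer verbatim to the local percolation algorithm. The hard work has been done in the preceding Lemma; what remains here is only a minimality observation, so I anticipate no real obstacle.
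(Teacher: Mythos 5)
Your proposal is correct and matches the paper's intent exactly: the paper states this corollary as following "directly" from the preceding Lemma, and the intended argument is precisely your minimality observation (the second branch of the dichotomy cannot apply to the minimum of the finite set $\mathcal{U}$, so $k=i^*$).
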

\quad\\

\begin{lemma}
The network is in an equilibrium, if and only if $\mathcal{U} = \varnothing$.
\end{lemma}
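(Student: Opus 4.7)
The statement is a biconditional and the implication ``equilibrium $\Rightarrow \mathcal{U} = \varnothing$'' is immediate: by definition of equilibrium $K_i(k) \ge K_i(i')$ for every agent $i$ and every node $k$, so in particular $K_i(i) \ge K_i(i')$ (the self-switch test) and $K_i(j') \ge K_i(i')$ for every neighbor $j$ of $i$ (the neighbor-supplier test). Thus neither comparison in Algorithm 2 attains strict inequality and $\mathcal{U}$ stays empty.

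For the converse I would argue by contrapositive: assuming the network is not in equilibrium, I would exhibit a beneficial local update that Algorithm 2 is forced to place into $\mathcal{U}$. Pick an agent $i$ that has some strictly better supplier than $i'$, and let $i^*$ denote an optimal supplier of $i$, so $K_i(i^*) < K_i(i')$ and necessarily $i^* \neq i'$. If $i^* = i$, the self-switch branch immediately puts $K_i(i)$ into $\mathcal{U}$ and we are done. Otherwise, consider the shortest path $\Pi(i^*, i) = (j_0 = i^*, j_1, \ldots, j_{n+1} = i)$ and let $m$ be the smallest index with $(j_m)' \neq i^*$; such $m$ exists because $(j_{n+1})' = i' \neq i^*$. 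If $m = 0$, then $(i^*)' \neq i^*$ and the relevant candidate is the self-switch $K_{i^*}(i^*) < K_{i^*}((i^*)')$; if $m \ge 1$, then $j_{m-1}$ is a neighbor of $j_m$ with $(j_{m-1})' = i^*$, so the algorithm explicitly compares $K_{j_m}(i^*)$ against $K_{j_m}((j_m)')$.

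It remains to show that the candidate update is strictly beneficial, and for this I would simply replay the chain of inequalities from the proof of the preceding Lemma. Assuming for contradiction $K_{j_m}((j_m)') \le K_{j_m}(i^*)$, one combines the triangle-type bound $T_{(j_m)' i} \le T_{(j_m)' j_m} + T_{j_m i}$, the monotonicity of $p_{(j_m)'}$ via $S_{(j_m)'}^{i} + D \ge S_{(j_m)'}$, and the identities $S_{i^*}^{i} = S_{i^*}^{j_m} = S_{i^*}$ (valid because neither $i$ nor $j_m$ currently purchases from $i^*$, which is exactly what the choice of $m$ buys us) to obtain $K_i((j_m)') \le K_i(i^*)$. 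This contradicts optimality of $i^*$ if strict; otherwise $(j_m)'$ is itself an optimal supplier of $i$ and one restarts the walking argument with $i^* \leftarrow (j_m)'$, which terminates as noted in the preceding Lemma. The $m = 0$ branch is handled analogously, now with $(i^*)' \neq i^*$ guaranteeing $S_{i^*}^{i^*} = S_{i^*}$. Either way a beneficial local update belongs to $\mathcal{U}$, contradicting $\mathcal{U} = \varnothing$.

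The main obstacle is the bookkeeping of $S_k$ versus $S_k^{i}$ throughout the chain: the inequalities only telescope because neither $i$ nor the walker $j_m$ currently buys from $i^*$, which is precisely what the choice of $m$ as the first index with $(j_m)' \neq i^*$ delivers. Essentially all the analytic content has already been produced in the preceding Lemma; the present statement merely swaps its hypothesis ``$K_i(k) \in \mathcal{U}$'' for the weaker ``network not in equilibrium'' and observes that the same construction still produces a member of $\mathcal{U}$.
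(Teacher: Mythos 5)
Your proof is correct and follows essentially the same route as the paper's: the forward direction is immediate from the definition, and for the converse you walk along the shortest path $\Pi(i^*,i)$ to locate an agent for which Algorithm~2 must register a strictly beneficial local update, reusing the inequality chain of the preceding lemma (you traverse the path from the $i^*$ end rather than recursing from $i$, but this is the same argument). The only point glossed over in both versions is the termination of the restart when ties produce a second optimal supplier.
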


\begin{proof}
We proof this by showing that the network is not in equilibrium, if and only if $\mathcal{U} \neq \varnothing$.\\

Clearly, if there is a possible update on the list, the network is not in equilibrium.\\

Otherwise, if the network is not in equilibrium there is an agent $i$ currently buying from $i' \neq i^*$ which is not its optimal supplier. We then have $K_i(i^*) < K_i(i')$. If there was no update on the list, then also the update $K_i(i^*)$ via $i$'s neighbor $j$ is not on the list. However, this can only be because $j$ is buying from $j' \neq i^*$. But then either
\begin{itemize}
	\item[(1)] the node $j'$ would also be an optimal supplier for $i$ (following the same chain of inequalities as above) and the update $K_i(j')$ would be on the list 
	\item[(2)] or $j$ would also want to update to buy at $i^*$.
\end{itemize}
In the second case we simply repeat the argument above with $i \leftarrow j$ (at most until $i \leftarrow i^*$ itself) until we find a possible update which must be on the list. This shows that there is always an update on the list when the network is not in equilibrium and thus concludes the proof.
\end{proof}
\quad\\

\begin{proof}[Proof of Theorem 3]
By means of Corollary 4.1 we know that the update executed is always an optimum one. Hence every iteration of the algorithm will find the optimal supplier for an agent.
Lemma 5 shows that, when the algorithm terminates as soon as $\mathcal{U}$ is empty, the network is in equilibrium. The proof that the algorithm always terminates in finite time is identical to the one given above for Algorithm 1.
Together these two results proof theorem 3.
\end{proof}

In contrast to $\mathcal{O}(N^2)$ updates that are checked in the general algorithm, we further reduce the number of possibilities to $\mathcal{O}(M)$ local updates, where $M$ is the number of links in the network. In sparse networks, where $\mathcal{O}(M) = \mathcal{O}(N)$, networks this significantly reduces the required operations of the algorithm.\\

Together, we have now shown that for non-increasing production costs per unit $p_k(S_k)$ and identical demand of all agents $D_i = D_j = D$, we can \emph{exactly} map the interacting individual optimization problems to a local percolation model when ordering simultaneous updates by their new final cost. These updates will always lead to an equilibrium state where no agent wants to change suppliers and agents never make sub-optimal updates. Note, however, that due to the individual optimization the resulting equilibrium is, in general, not the globally optimal state with minimal total cost $\sum_i K_i$ over all agents.\\

Here, we always use the algorithm in the following way: We start in an equilibrium state, usually the trivial equilibrium $i^* = i$ for all $i$ at $p_T = \infty$. Then we consider a decrease (or increase) of $p_T$ and calculate the next value $p_T^{ji}$ for which a agent would change its supplier (i.e., when $\mathcal{U} \neq \varnothing$). We then execute the update and iterate until no further updates occur at the current value of $p_T$. As we have shown that the algorithm always makes correct updates, it is also possible to study other initial conditions, reactions to changes in the network structure or sudden large changes of $p_T$.\\

Finally, in the last part of this section we briefly give small examples illustrating some assumptions in the proofs above, such as homogeneous demand, before describing the actual implementation in the next section.

\newpage

\subsection{Examples}

In order to illustrate the model, the proofs given above and especially the assumptions going into the proofs, we give two examples of simple systems. For all examples we will use the same cost per unit functions as in the main manuscript $p_k\left(S_k\right) = b_k - a \cdot S_k$ with $a=1$.

\begin{figure}[h]
\centering
\includegraphics[width=0.27\textwidth]{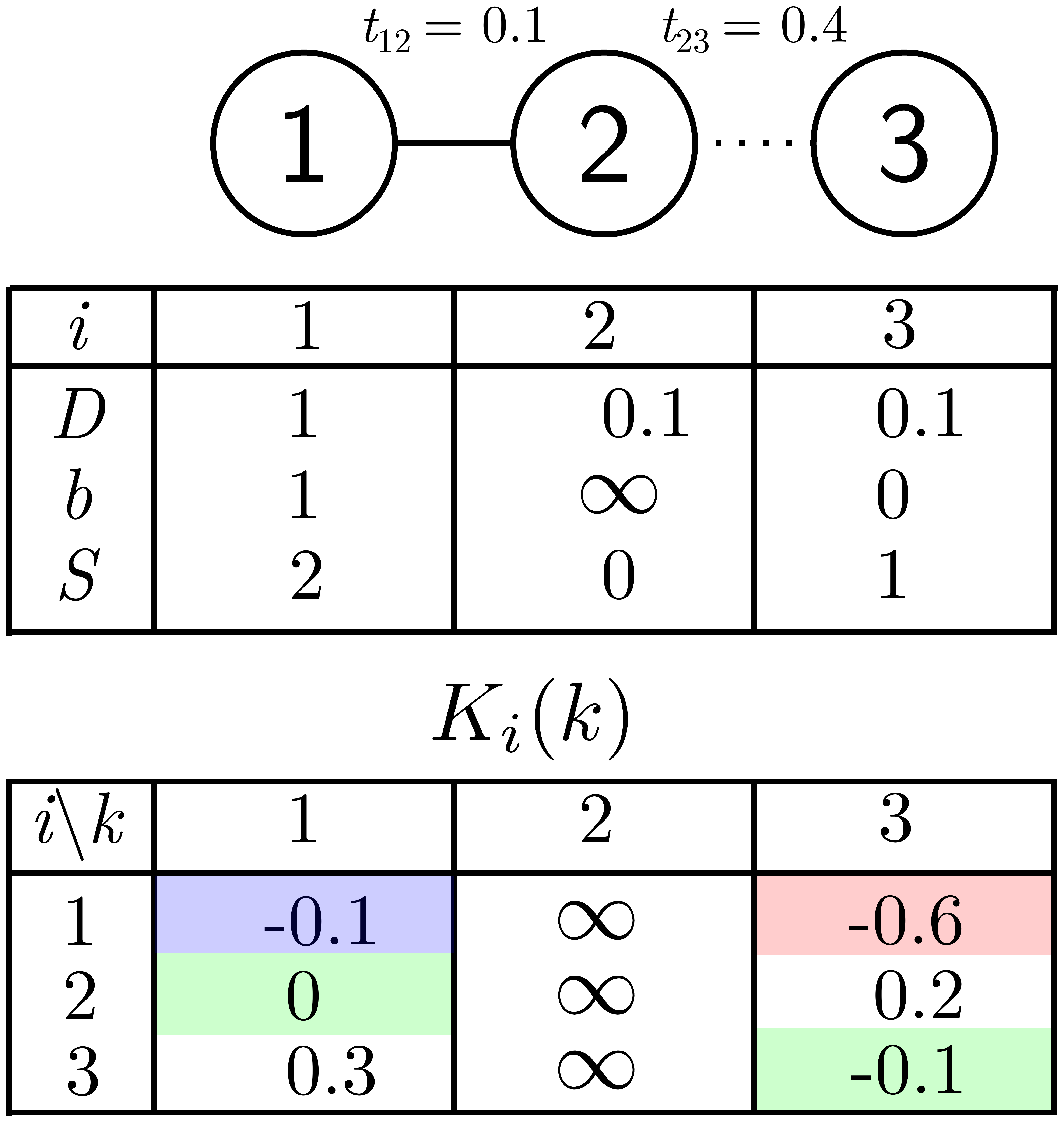} \\
\label{fig:example_homogeneous_demand}
\caption{Example for non-homogeneous demand at $p_T = 1$. The tables describe the model parameters and the costs if agent $i$ would buy (is buying) at node $k$. Current suppliers are marked in green, if they are optimal, otherwise they are marked in blue. Optimal decisions are marked in yellow or red, depending on whether they can/cannot be found by a local update, respectively. Here, agent $1$ could buy optimally from node $3$ but cannot find out about it via a local update (marked in red), since agent $2$ is buying from $1$. Conversely, agent $2$ and $3$ are already buying optimally at node $1$ and $3$, respectively (marked in green). Therefore the only possible update is $1$ buying from $3$ but it cannot be found with a local update. This illustrates the requirement of homogeneous demand for the local percolation algorithm.}
\end{figure}

\begin{figure}[h]
\centering
\includegraphics[width=0.45\textwidth]{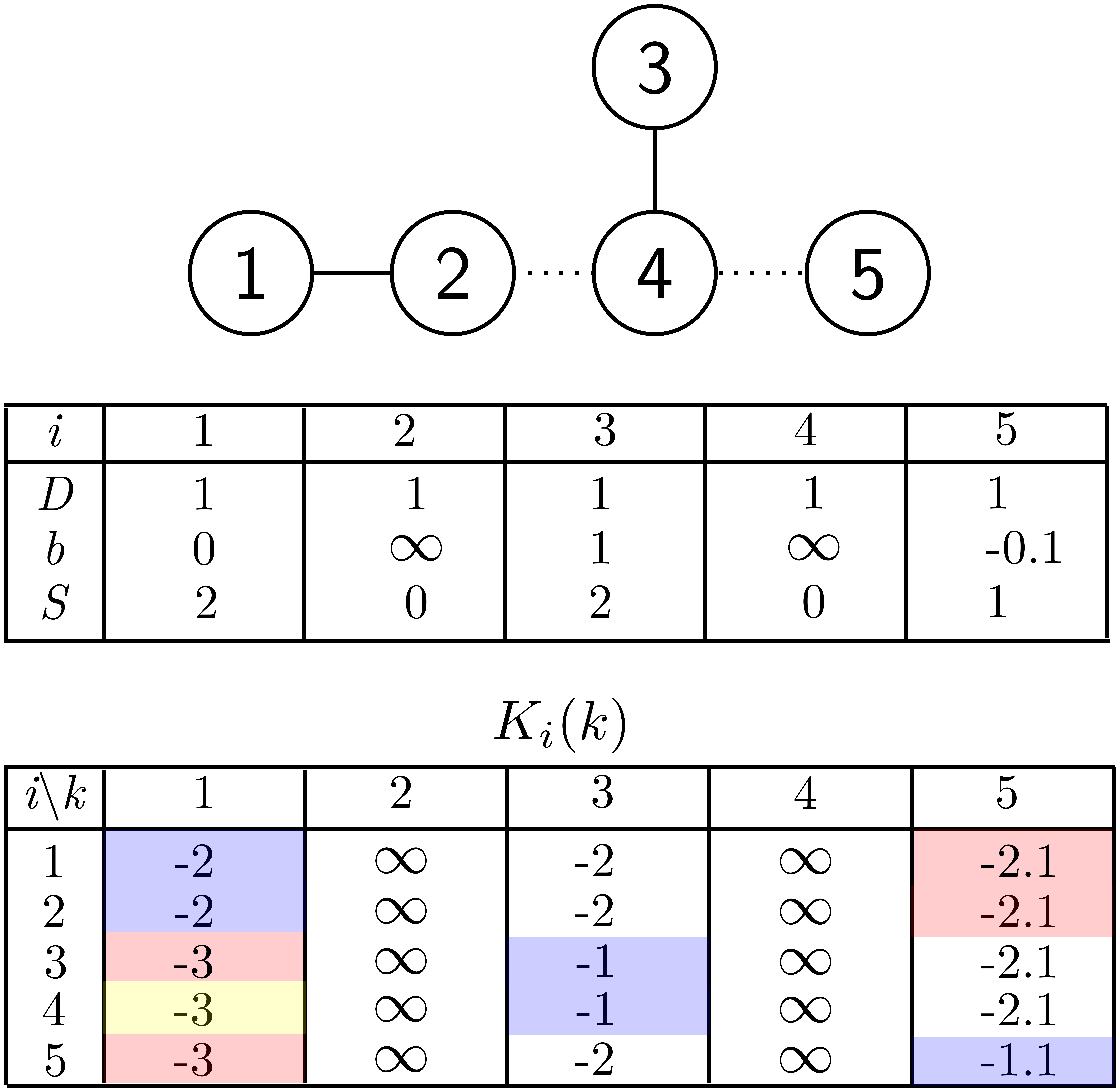} \\
\label{fig:example_update_order}
\caption{Example for homogeneous demand at $p_T = 0$ and the importance of update order. The tables describe the model parameters and the costs if agent $i$ would buy (is buying) at node $k$. Current suppliers are marked in green, if they are optimal, otherwise they are marked in blue. Optimal decisions are marked in yellow or red, depending on whether they can/cannot be found by a local update, respectively. Here, all agents want to update their current purchases (marked in blue), but only agent $4$ can find its optimal supplier locally through its neighbor (marked in yellow). Of all possible (local) updates, this update has the smallest $K_i(k)$ and would be executed first, allowing other agents to follow and find their optimal supplier locally. As shown above, with homogeneous demand if the network is not in equilibrium we will always find an optimal local update by sorting for the smallest $K_i(k)$.}
\end{figure}

\newpage
\section{Simulation}

With the algorithm introduced above we can solve the the individual optimization problems with only local updates for a given value of $p_T$ when the production costs per unit are non-increasing and the demand of all agents is identical. We now discuss how we exactly track the evolution of the system for the whole range of decreasing transaction costs $p_T$ (increasing transaction costs are handled analogously). 
First, we note that it is not necessary to calculate the distances $T_{ki}$ for all shortest (cheapest) paths, since these paths will be found automatically during the optimization.\\

We track for each agent $i$ its current supplier $i'$ and the distance $T_{i'i}$, starting from the trivial initial condition $i' = i$ for all agents at $p_T = \infty$ ($T_{ii} = 0$). For all (directed) links we then explicitly calculate the values $p_T^{ji}$ where the link would be active, i.e., when $i$ starts buying via $j$ from $j'$, where $j'$ is the supplier of $i$'s neighbor $j$. As in the algorithm above, we also explicitly consider returning to internal production as a special case. Comparing the cost of the current supplier $i'$ and the new supplier $j'$:

\begin{eqnarray}
	K_i(i') &=& D \cdot p_{i'}\left[S_{i'}^i + D\right] + D \cdot p_T \cdot T_{i'i},\label{eq:cost_ii*}\\
	K_i(j'\;\mathrm{via}\;j) 
	&=& D \cdot p_{j'}\left[S_{j'}^i + D\right] + D \cdot p_T \cdot \left(T_{j'j} + t_{ji}\right) \,, \label{eq:cost_ij}
\end{eqnarray}
we find that $i$ would prefer to buy from $j'$ when
\begin{eqnarray}
	K_i(j'\;\mathrm{via}\;j) &<& K_i(i') \nonumber\\
	p_T \cdot \underbrace{\left(t_{ji} + T_{j'j} - T_{i'i}\right)}_{\Delta T} &<& \underbrace{ p_{i'}\left[S_{i'}^i + D\right] - p_{j'}\left[S_{j'}^i + D\right] }_{\Delta p} \,. \label{eq:switch_supp}
\end{eqnarray}
With this inequality we can easily calculate  $p_T^{ji}$. We create the sorted list $\mathcal{U} = \left\{ \left(p_T^{ji}, K_i(j'), e_{ji}\right), \dots\right\}$, ordering all links in descending order in $p_T^{ji}$. If multiple links have the same $p_T^{ji}$, e.g., for simultaneous updates during cascades, we use the new final costs $K_i(j')$ as a secondary criterion for ordering the updates (increasing in $K_i(j')$), as discussed above. We then have to distinguish three different cases to determine $p_T^{ji}$:
\begin{itemize}
	\item $\Delta T > 0$\\
	in this case we simply have the condition $p_T < \frac{\Delta p}{\Delta T}$. We thus set $p_T^{ji} = \frac{\Delta p}{\Delta T}$ if the condition is not yet fulfilled and $p_T^{ji}=p_T$ otherwise. To calculate $K_i(j')$ for the secondary ordering we use the transaction costs $p_T^{ji}$, as this is the value when the ordering will be relevant. If no other changes occur and this link is not updated again, this secondary ordering is then already accurate when $p_T$ decreases far enough (and will be updated in between otherwise).
	
	\item $\Delta T = 0$\\
	in this case the switch does not depend on the transaction costs at all but the condition is $\Delta p > 0$. If this condition is fulfilled, we set $p_T^{ji} = p_T$, such that the switch will be applied immediately. As above, the secondary ordering is given by the final cost. Otherwise we set $p_T^{ji} = -1$ and the switch is never executed.
	
	\item $\Delta T < 0$\\
	in this case the condition becomes $p_T > -\frac{\Delta p}{\left|\Delta T\right|}$ (note the change in the sign of the inequality). In case the condition is true, the change is applied instantly and we set $p_T^{ji} = p_T$, otherwise we set $p_T^{ji} = -1$. This case will mostly be relevant when considering increasing transaction cost.
	
\end{itemize}

We then use this list $\mathcal{U}$ to iteratively execute single updates. The link with the largest $p_T^{ji}$ (and smallest final total cost $K_i(j')$, if applicable) will become active first, meaning $i$ will start buying via $j$ from $j$'s supplier $j'$. The algorithm then follows an event-based methodology:\\

Starting from $p_T = \infty$ and only internal production $i^* = i$ for all nodes $i \in \left\{1,\dots,N\right\}$, we first decrease the value of $p_T$ to the value of the next update $p_T \leftarrow \max_\mathcal{U} p_t^{ji}$. Then we update $i' \leftarrow j'$ and $T_{i'i} \leftarrow T_{j'j} + t_{ji}$. We also adjust the production of the involved suppliers $i'$ and $j'$ accordingly. Due to the economies of scale in the production costs per unit these changes will affect the values $p_T^{kl}$ of all links adjacent to agents currently buying from $i'$ or $j'$, for which we recalculate $p_T^{kl}$.\\

We then repeat the updating process working through all links with $p_T^{ji} = p_T$ and keep updating the purchases until no further changes occur and the next entry in the update list occurs for $p_T^{ji} < p_T$. Finally, we repeat this whole process by further decreasing $p_T$. In contrast to the algorithm for a single update step given above, we will never have $\mathcal{U} = \varnothing$. Instead, we terminate once all $p_T^{ji} \le 0$ and the transaction cost per unit cannot be reduced further.\\

The final state of the network will then be given at $p_T = 0$, where generally all agents have the same supplier and form one large cluster. Note that this updating scheme models a very slow decrease of the transaction costs. In principle it is also possible to include sudden jumps in $p_T$. Due to the individual optimization, this generally leads to a different evolution of the clusters (see Sec.~VI for examples).\\

We then reverse the process and consider increasing transaction cost in the same way by adjusting the conditions for $p_T^{ji}$ and reversing the ordering (executing updates with the smallest $p_T^{ji}$ first). In this case we terminate when $p_T^{ji} = \infty$ for all possible updates. Overall, this allows us to exactly track the evolution of all clusters in the network for both increasing as well as decreasing transaction costs per unit.\\

\newpage

\section{Network models}

\subsection{Spatially embedded random networks}

We create a spatially embedded planar random network with $N$ nodes by distributing $N$ points uniformly at random in the unit square. The links of the network are then created using the Delaunay-triangulation of these points [Fig.~\ref{fig:spatial_network}(a)]. Each link $e$ is assigned a distance $t_e$ equal to the euclidean distance between its two endpoints. We further assume an identical demand $D_i = 1/N$ for all nodes.

In order to reasonably study different system sizes (see also Sec.~IV for a detailed discussion) we define the parameters $b_i$ of the cost functions as values of a smooth random function on the unit square. The function is generated as the real part of a random periodic function calculated by spectral synthesis of $(2^10)^2$ Fourier-modes for frequency components $\omega_x,\omega_y \in \left\{0,1,\dots, 2^10-1\right\}$. For each Fourier mode we choose a Gaussian distributed amplitude with mean $0$ and variance $\left(\omega_x^2 + \omega_y^2\right)^{-2}$ and random phase \cite{saupe88_fractal_algorithms}. Inverse Fourier transformation then gives the function $b(x,y)$ at discrete points which we then interpolate at the positions of the nodes. Finally we shift and scale the values such that $b_i \in \left[0,1\right]$ with $\min_i (b_i) = 0$ and $\max_i (b_i) = 1$.

While this technically allows negative production costs per unit when including the economies of scale, a constant shift $b_k \rightarrow b_k + \mathrm{const.}$ for all nodes $k \in \left\{1,\dots,N\right\}$ does not change the solution of the minimization problem. Therefore, for simplicity we only consider normalized $b_k \in \left[0,1\right]$ in all examples.\\

\begin{figure}[h]
\centering
\includegraphics[width=0.5\textwidth]{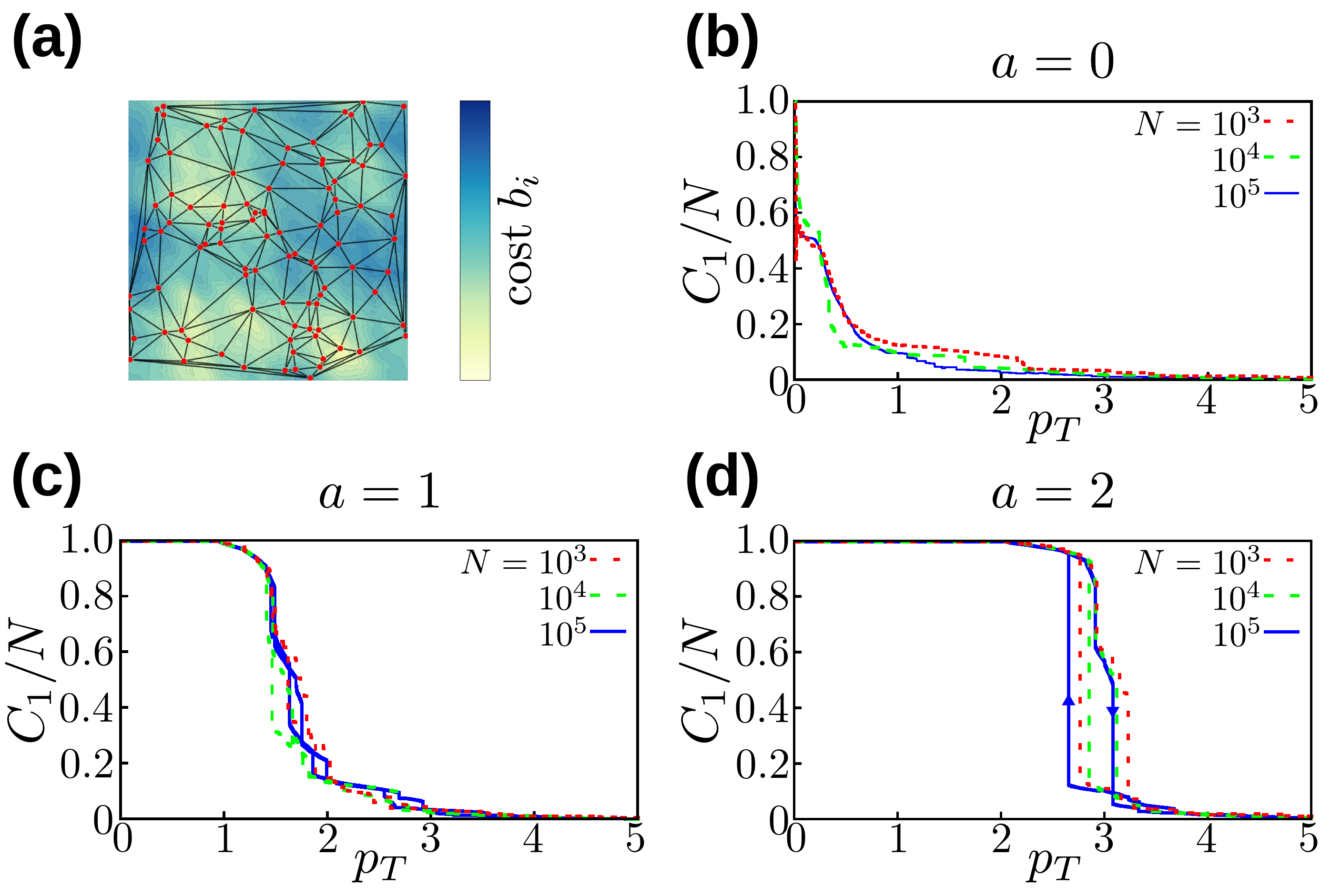}
\label{fig:spatial_network}
\caption{Spatially embedded planar random network. (a) A single realization of a network with $N = 100$ nodes together with the smooth function $b(x,y)$ determining the parameters $b_i$. (b-d) Evolution of the model for different values of the economies of scale $a$. The evolution is similar in for all system sizes and the limit of large networks $N \rightarrow \infty$ is well defined due to the identical scaling of distances and cost differences $\left|b_i - b_j\right|$ since $b(x,y)$ is a smooth function.}
\end{figure}

\subsection{Small world networks}

To illustrate the effects of the underlying network topology we consider Watts-Strogatz small-world networks \cite{watts98_smallworld}. This model allows generating networks that interpolate between regular and random topologies with a large or small network diameter, respectively: Start from a ring of $N$ nodes where each node is connected to its $k/2$ nearest neighbors on each side. Each link in the network is then rewired with probability $q_\mathrm{rew}$, i.e., disconnected on one end and connected to another node in the network, chosen uniformly at random. This procedure interpolates between regular networks with a large diameter for $q_\mathrm{rew} \rightarrow 0$ and completely random networks with a small diameter for $q_\mathrm{rew} \rightarrow 1$.

For the parameters of our model we assume identical transaction costs for all links $t_e = 1/N$ as well as identical demand $D_i = 1/N$ for all nodes. The cost function parameters are taken as $a = 1$ and uniformly randomly chosen $b_i \in \left[0,1\right]$ (see also Sec.~V).\\

\newpage

\section{Expansion vs. fine-graining}

Usually, in percolation models the thermodynamic limit of large systems $N \rightarrow \infty$ is relevant for studying the (phase-)transition to global connectedness. We will now motivate our choice to consider an increased system size $N$ as a fine-graining of the system.

Apart from the obvious inconsistencies with the models motivation arising for infinitely expanding systems, e.g., an infinite system size would also mean infinite demand, other problems arise due to the interplay between economies of scale, initial production costs per unit, and transaction costs. To illustrate this, we assume finite, bounded intervals for all parameters. Specifically we take the production costs per unit to be $p_k(S) = b_k - aS(k)$ with $b_k \in \left[b_\mathrm{min}, b_\mathrm{max}\right]$ and similarly, $D_i = D$ and $t_{ij} \in \left[t_\mathrm{min}, t_\mathrm{max}\right]$ with $t_\mathrm{min} > 0$. For simplicity, we consider the model without economies of scale in the limiting case $a=0$.\\

For a given value of the transaction cost $p_T$ we will now derive a (very loose) upper bound for the size of a cluster. Consider the longest shortest path $\Pi_{ki,\mathrm{max}}$ in one cluster between agent $i$ and its supplier $i^*$. The transaction cost along those paths cannot be higher than the difference between the production costs per unit of the two nodes:
\begin{eqnarray}
	b_{i^*} + p_T T_{i^*i} &\le& b_i \nonumber\\
	\Rightarrow \quad p_T T_{i^*i} &\le& b_i - b_{i^*} \le b_\mathrm{max} - b_\mathrm{min}\nonumber\,,
\end{eqnarray}
which means that there are at most $d_\mathrm{max}$ links involved in the path, where
\begin{equation}
	d_\mathrm{max} \le \frac{T_{i^*i}}{t_\mathrm{min}} \le \frac{b_\mathrm{max} - b_\mathrm{min}}{p_T t_\mathrm{min}} \,.
\end{equation}
If the underlying network has a maximum degree $k_\mathrm{max}$, we can then find an upper bound for number of nodes in the largest cluster as
\begin{equation}
	C_1(p_T) \le 1 + k_\mathrm{max} + k_\mathrm{max}^2 + \dots + k_\mathrm{max}^{d_\mathrm{max}} = \sum_{d=0}^{d_\mathrm{max}} k_\mathrm{max}^d < \infty
\end{equation}
for any $p_T > 0$. Thus for a given range of parameters as assumed above in any network with bounded degrees we find that in the limit of $N \rightarrow \infty$ the size of the largest cluster will be finite for any $p_T > 0$. Thus the fraction of nodes in the largest cluster will follow
\begin{align}
	C_1/N = \begin{cases}
					0 \quad \mathrm{for} \quad p_T > 0 \\
					1 \quad \mathrm{for} \quad p_T = 0 \,,
				\end{cases}
\end{align}
describing a trivial transition at $p_T = 0$. More importantly, this even holds in cases where, for example, $k_\mathrm{max} \sim \ln(N)$, such as simple random graphs. Similarly, for $a>0$ the infinite demand will (in most cases) cause an infinite hysteresis loop for the system in the limit $N \rightarrow \infty$.\\

While these are interesting results, the more reasonable assumption of fine-graining the system maintains the balance between economies of scale and transaction cost such that the transitions and all related phenomena can be observed in a finite, non-trivial parameter interval and results for different system sizes can be compared more easily. Of course this requires the correct scaling of transaction costs, demand and network structure when increasing the system size.\\

This balance holds for example for the spatially embedded planar random network. There, the differences $\left|b_i - b_j\right|$ scale in the same way as the distances with increasing system size since the function $b(x,y)$ is smooth. This means that ratios $\left|b_i - b_j\right| / t_{ij}$ that define when links become active (see Sec.~III above) are bounded from below for neighboring nodes and the transition to global connectedness will occur for $p_T > 0$. Similarly, keeping the total demand fixed with increasing network size means that the ratio (now including economies of scale) is bounded from above and the largest cluster will break down for $p_T < \infty$. This ensures that both the initial transition as well as the potential hysteresis loop can be observed on the same scale. However, for networks without an underlying geometry this can be more difficult and there is not necessarily a correct scaling for all states of the network, as we discuss below for small world networks.\\
\newpage

With a similar argument as above, we can derive a sufficient condition for a discontinuous transition in any network topology. As soon as a single purchase reduces the production costs at a node $k$ so much that it becomes the optimal choice for all nodes to buy from this supplier a discontinuous transition occurs since all nodes will join this supplier in a large cascade of decisions. This is equivalent to the condition
\begin{equation}
	p_k(3D) + p_T \, T_{ki} < p_j(2D) + p_T \, T_{ji}
\end{equation}
for the cost function $p_k$ for all other nodes $i,j \in \left\{1,2, \dots ,N\right\}$ and $i \neq j$, meaning that node $i$ would prefer to buy from node $k$ compared to all other nodes $j$. In the limit of large networks $N \rightarrow \infty$ with $D = 1/N \rightarrow 0$, we find to first order in $D$
\begin{eqnarray}
	p_k(0) + 3D \, \frac{\mathrm{d} p_k(0)}{\mathrm{d} S_k} + p_T T_{ki} &<& p_j(0) + 2D \, \frac{\mathrm{d} p_j(0)}{\mathrm{d} S_j} + p_T T_{ji} \,,\nonumber\\
	p_\mathrm{max}(0) + 3D \, \frac{\mathrm{d} p_k(0)}{\mathrm{d} S_k} + p_T T_\mathrm{max} &<& p_\mathrm{min}(0) + 2D \, \frac{\mathrm{d} p_j(0)}{\mathrm{d} S_j} 
\end{eqnarray}
by substituting the most extreme values possible (maximum cost and distances on the left side and minimal cost and distances on the right). We then arrive at a (very loose) upper bound for the economies of scale by setting $\mathrm{d} p_k(0) / \mathrm{d} S_k = \mathrm{d} p_j(0) / \mathrm{d} S_j = \left(\mathrm{d} p / \mathrm{d} S\right)_\mathrm{min}$,
\begin{equation}
	\left(\mathrm{d} p / \mathrm{d} S\right)_\mathrm{min} < - \frac{\Delta p + p_T T_\mathrm{max}}{D} \,,
\end{equation}
where $\Delta p = p_\mathrm{max}(0) - p_\mathrm{min}(0)$ and $p_T$ is at most the value of the transaction cost where the first transition takes place. If the economies of scale are stronger than given in this condition, the first external purchase necessarily triggers a cascade of decisions and thus leads to a discontinuous decision.\\
\newpage

\section{Small world networks}

In the main manuscript we studied our optimization model on an underlying small world network \cite{1998_WATTS_small_world_model}: Start from a ring of $N$ nodes where each node is connected to its $k/2$ nearest neighbors in each direction. Each link in the network is then rewired with probability $q_\mathrm{rew}$, i.e., disconnected on one end and connected to another node in the network chosen uniformly at random.\\

In the main manuscript we showed for systems of size $N = 10^4$ how the topology changes the abruptness of the transition and the number of large clusters emerging. For completeness, Fig.~\ref{fig:small_world_a} shows the (inverse) size of the largest gap for different strengths of the economies of scale $a$ for the same system parameters ($N=10^4$, $t_e = 1/N$, $D_i = 1/N$, $b_i$ uniformly random in $\left[0,1\right]$). A network with large diameter ($q_\mathrm{rew} \rightarrow 0$) allows different clusters to emerge in remote parts of the network since the distances are large. Hence the jumps of the size of the largest cluster, denoted by $\Delta C_1$, are typically smaller ($1 - \left<\mathrm{max} \Delta C_1 / N\right>$ is large). However, very strong economies of scale ($a > 1$) can compensate for the large distances. Small fluctuations are amplified and one cluster will grow in a single cascade even in the network with large diameter. In contrast, for small diameters ($q_\mathrm{rew} = 0.01$) the jump $\Delta C_1$ grows already for moderate values of $a$.\\

\begin{figure}[h]
\centering
\includegraphics[width=0.5\textwidth]{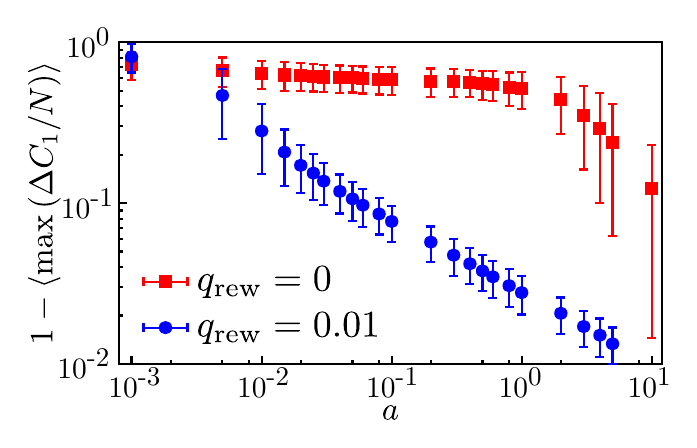}
\caption{Strong economies of scale lead to large cascades, independent of network topology. The inverse size of the largest gap in the size of the largest cluster (1 - largest gap) depending on the economies of scale $a$. For very weak economies of scale the cascades are small (the inverse is large), independent of the network topology a single agent does not have a large enough effect on the production costs. As the economies of scale become stronger, the average size of the largest cascade increases quickly in small world networks; due to the shortcuts a large cluster can affect all parts of the network. For regular networks with a large diameter ($q_\mathrm{rew} \rightarrow 0$) cascades stay small as transaction costs are still comparatively large. Unsurprisingly, for very strong economies of scale a single agent will change the production cost enough that the largest cluster grows discontinuously in a single cascade, even if the distances in the network are large.}
\label{fig:small_world_a}
\end{figure}

Additionally, we want to highlight some difficulties with this network structure in our model. As discussed above, to avoid trivial behavior in the limit $N \rightarrow \infty$, we need to scale all parameters, such that an increased system size corresponds to a fine graining of the system. However, in this case choosing a scaling of all variables to effectively fine-grain the system with increasing system size is not possible across different rewiring probabilities: the required scaling for a regular network with diameter $d \sim N$ will not be appropriate for a random network with diameter $d \sim \log(N)$.\\

Similarly, the scaling of $k$ and the choice of the $b_i$ over different system sizes is not uniquely defined. The behavior of the model in the limit of large systems $N \rightarrow \infty$ thus depends on the choice of the other parameters of the system. For example, if we choose $b_i$ uniformly at random without correlations between neighboring nodes and keep the cost per link constant, the limiting behavior will be a single cascade for very small $p_T$ where a microscopic cluster grows discontinuously to span the whole network (similar to the discussion above).

\newpage

\section{Analytically solvable example}

In addition to the complete graph discussed in the main manuscript we present another analytically solvable example. We consider a square lattice with a well defined single source and the same affine linear production costs per unit as in the main manuscript. The qualitative results are similar to those obtained for the complete graph, however, the actual evolution is more similar to the results obtained for the spatially embedded planar random network.\\

We place $N = L^2$ nodes as a square grid in the unit square. We index the nodes as node $\left(x,y\right)$ at position $\left(x/L,y/L\right)$, $x,y \in \left\{1,2,\dots ,L\right\}$. As in all other cases we have a homogeneous demand distribution $D_{\left(x,y\right)} = 1/N$. The production cost parameter $b_{\left(x,y\right)}$ at each node is given as $b_{\left(x,y\right)} = 1$ for all nodes, except for the node in the middle of the grid, where $b_{\left(\left\lfloor L/2 \right\rfloor, \left\lfloor L/2 \right\rfloor \right)} = 0$ (see Fig.~\ref{fig:single_source_square_lat_model}). Increasing the number of nodes in this system is equivalent to a fine-graining of the unit square.\\

\begin{figure}[h]
\centering
\includegraphics[width=0.5\textwidth]{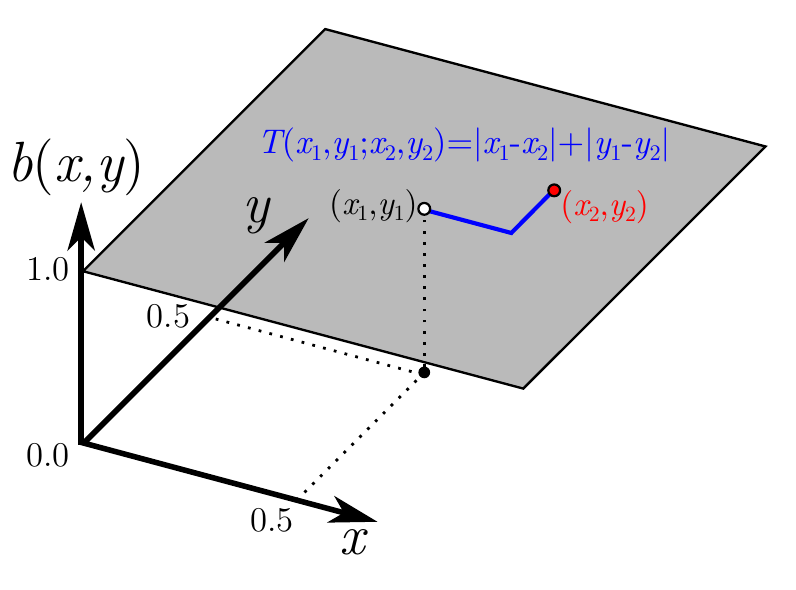}
\caption{Continuum illustration of the single source square lattice model. We have $b_{\left(x,y\right)} = 1$ for all points except for the center point in the lattice at $(1/2,1/2)$. The distance between two points $T_{x_1,y_1;x_2,y_2}$ is given by the 1-norm, corresponding to the shortest path in the square lattice. This continuum form of the model can easily be solved analytically while the corresponding lattice model can be simulated efficiently.}
\label{fig:single_source_square_lat_model}
\end{figure}

Due to the symmetry and the single source only one cluster will emerge and we can easily calculate its size. For simplicity we consider the continuous version of the problem illustrated in Fig.~\ref{fig:single_source_square_lat_model}, i.e., the limit $N \rightarrow \infty$. Due to the symmetry of the problem, all agents at the same distance $\mathrm{dist}\left[ (x,y); (1/2,1/2) \right] = \left|x - 1/2\right| + \left|y - 1/2\right|$ to the central supplier will behave identically. Note that distances are given as the $1$-norm due to the underlying square grid network structure. Thus, we can derive a self-consistency equation, describing the size of the cluster in terms of the maximum distance $d$ of an agent still willing to buy from the central supplier. For $d < 1/2$ the shape of a cluster is a diamond (due to the $1$-norm), but for $d > 1/2$ the corners are cut off, as we only consider points in the unit square. For given economies of scale $a$ and transaction costs per unit $p_T$ we determine $d$ as the value where the cost of internal production is equal to the cost of buying from the central supplier:
\begin{eqnarray}
	-a \left(2 d^2\right) + p_T d &=& 1 \quad\quad d \le 1/2 \\
	-a \left[ 2d^2 - 4\left(d-1/2\right)^2 \right] + p_T d &=& 1 \quad\quad d > 1/2 \;,
\end{eqnarray}
where the first term describes the economies of scale due to the size of the cluster and the second term is the transaction cost.\\

Solving these equations for $d$ and keeping in mind that $0 \le d(p_T) \le 1$ yields
\begin{eqnarray}
	d(p_T) = 	\begin{cases}
						\frac{p_T \pm \sqrt{p_T^2-8a}}{4a}	\quad&\mathrm{for}\quad d \le 1/2 \\
						\frac{4a - p_T \pm \sqrt{8a^2 + 8a - 8ap_T + p_T^2}}{4a} \quad&\mathrm{for}\quad d > 1/2  \,.\\
				\end{cases}
\end{eqnarray}
This equation has no, one or two possible solutions for $d$ for each $p_T$, however, each expression is valid only for some values of $d$. Therefore, as long as the bifurcation in the $d \le 1/2$ (lower branch) solution happens for $d > 1/2$, only one valid solution exists. We calculate the position of the bifurcation as $p_T^{\mathrm{crit,1}} = \sqrt{8a}$ and the corresponding distance $d_\mathrm{crit,1} = 1/\sqrt{2a}$. Thus, as long as $a < 2$ there is only one continuous solution for $d < 1/2$. We find a similar result for the solution of the $d > 1/2$ branch. Together, for $a < 2$ the two solutions join continuously and describe a single, continuous solution for $d(p_T$). As the size of the largest cluster is a continuous function of $d$, these results apply equally to $C_1$. Together, we find that the evolution of the largest cluster is continuous for $a < 2$, the slope of $C_1$ becomes infinite for $a = 2$ [see Fig.~\ref{fig:single_source_square_lat_simulation}(a)-(c)].

For $a > 2$, however, the first solution for $d < 1/2$ disappears at $p_T^{\mathrm{crit,1}} = \sqrt{8a}$ with $d_\mathrm{crit,1} = 1/\sqrt{2a} < 1/2$. There is no solution for smaller values of $p_T < p_T^{\mathrm{crit,1}}$ and two valid solutions for larger values. Similarly, the solution for the $d > 1/2$ branch does not exist for $p_T > p_T^{\mathrm{crit,2}} = 4a - \sqrt{8a(a-1)}$ with $d_\mathrm{crit,2} > 1/2$, but has two valid solutions for smaller values of $p_T$. The two solutions are still joined continuously, but now by an unstable branch: a slightly smaller cluster will rapidly shrink to a small stable size, while a slightly larger cluster will grow to the larger stable size. The transition for decreasing/increasing transaction costs becomes discontinuous and a hysteresis loop emerges in between [see Fig.~\ref{fig:single_source_square_lat_simulation}(d)].\\

We illustrate these analytical results in Fig.~\ref{fig:single_source_square_lat_simulation} together with simulations for $N = 25600$ for different values of $a$, showing good agreement with our analytical calculations. Contrary to the complete graph example in the main manuscript, the cluster grows by a finite amount already before the discontinuous transition takes place (similarly for increasing transaction costs). This is explained by the fact that close to the source in the middle of the grid, the difference in production cost per unit is constant while the transaction costs disappear for agents that are arbitrarily close to the center. Since the growth of the cluster is driven by these differences, the cluster grows as soon as $p_T < \infty$ whereas distances between two nodes are constant $t_{ij} = 1$ in the complete graph model and thus the cluster can only grow once $p_T$ decreases below a fixed, finite value.\\

\begin{figure}[h]
\centering
\includegraphics[width=0.45\textwidth]{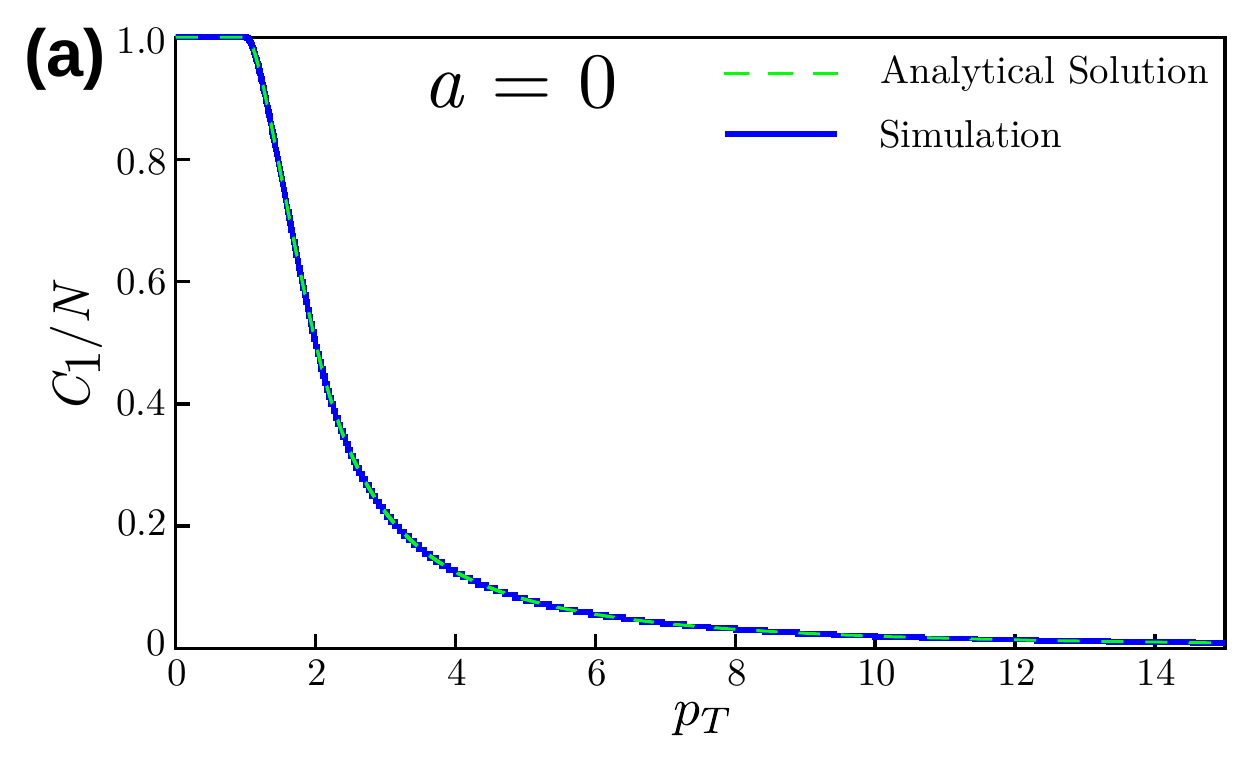}
\includegraphics[width=0.45\textwidth]{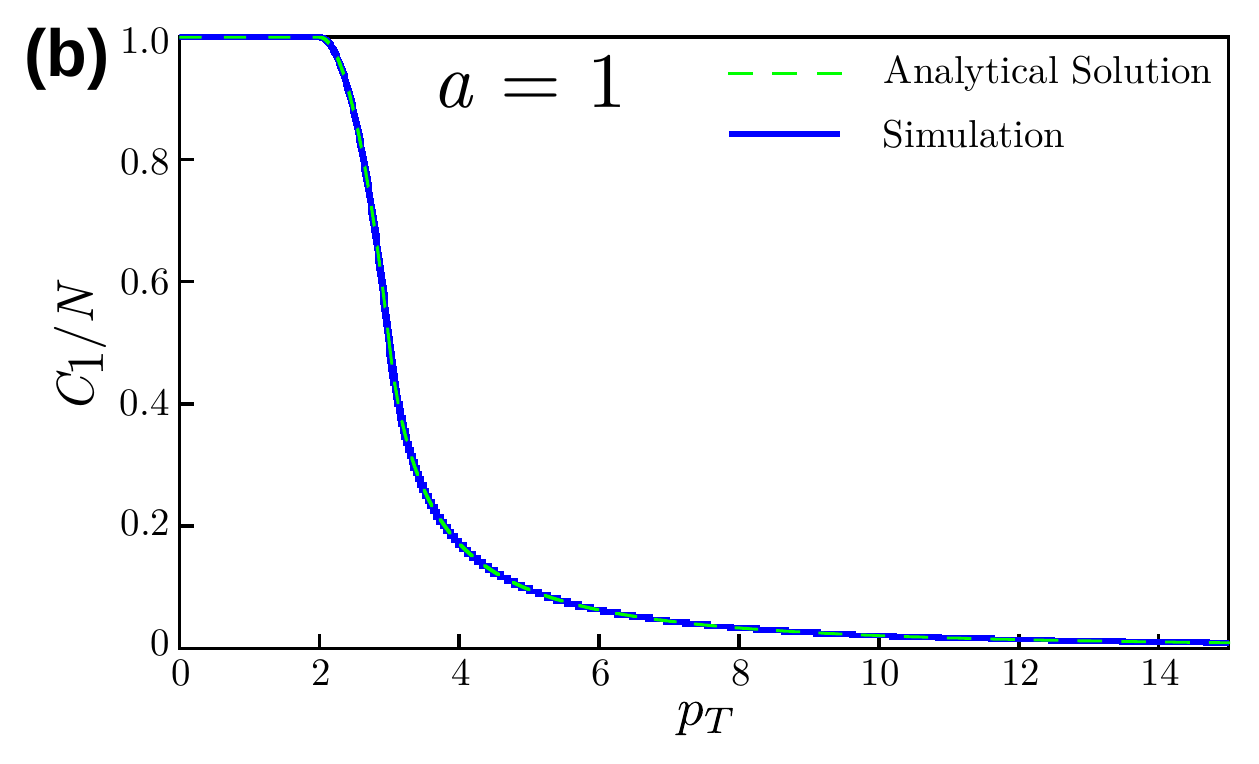}\\
\includegraphics[width=0.45\textwidth]{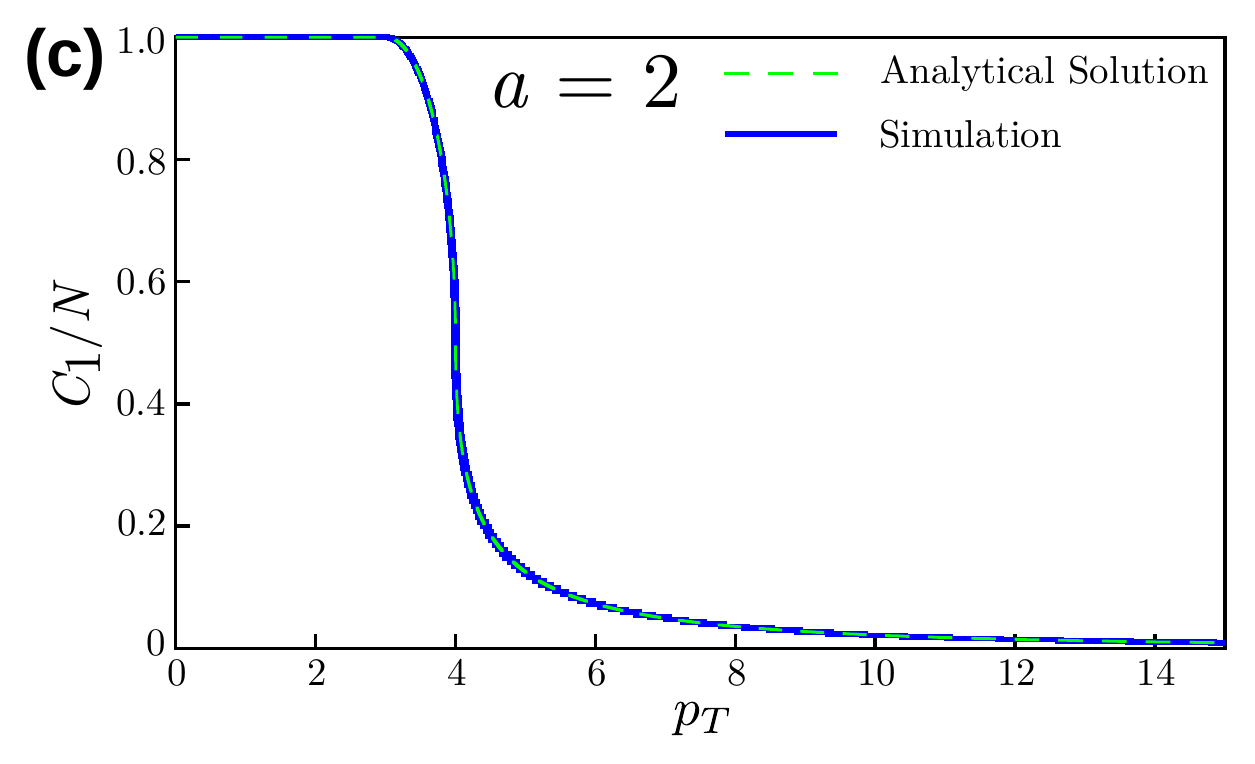}
\includegraphics[width=0.45\textwidth]{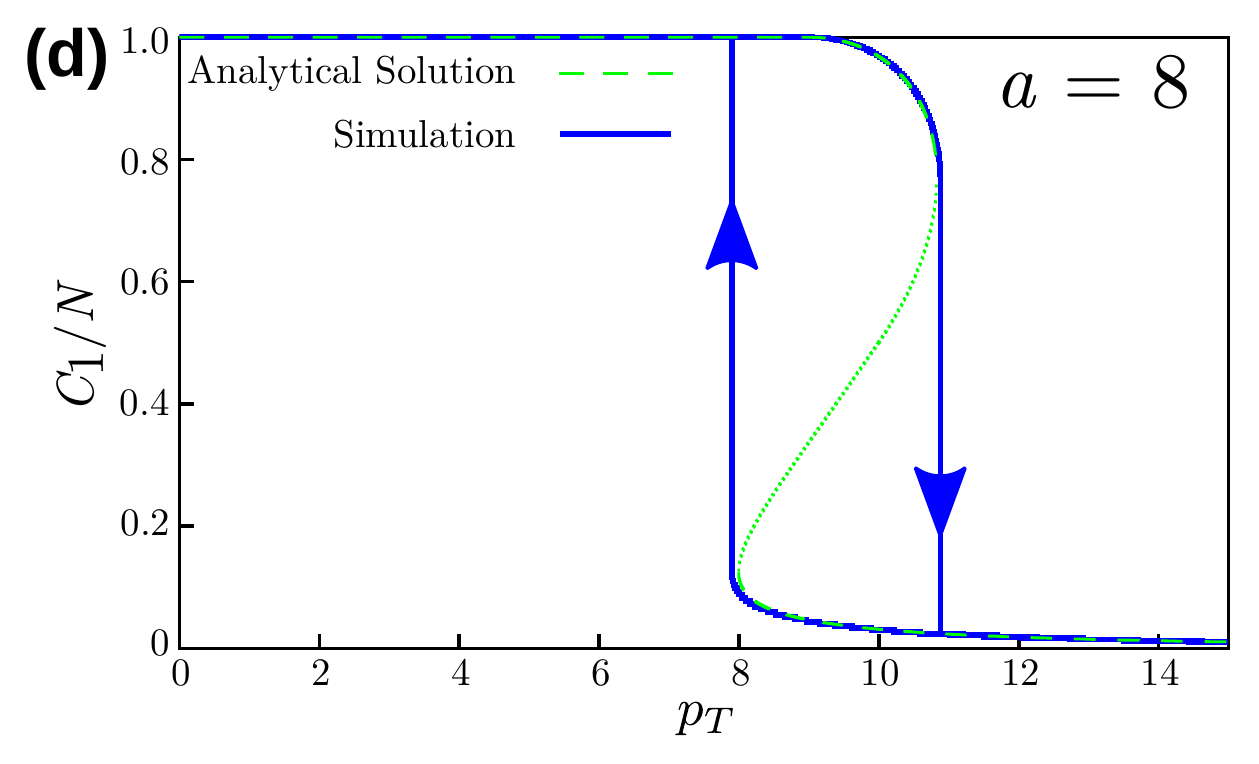}
\caption{Simulation of the single source square lattice model for $L = 160$ ($N = 25600$) and different $a$. (a),(b) The transition is continuous for small $a$, becoming steeper as $a$ increases. (c) At $a = 2$ the slope becomes infinite. (d) For larger $a$ the transition becomes discontinuous with a hysteresis loop. In the continuum self-consistency equation this structure emerges from a bifurcation of the possible solutions (dashed green lines) with an unstable branch in the middle (dotted line).}
\label{fig:single_source_square_lat_simulation}
\end{figure}

\clearpage

\section{World transport network}

In the following we describe the data used to create the world transport network. The nodes of the network describe individual countries or smaller regions of large countries (China, Russia, USA, Canada). The location of each node is given as the centroid of the respective country or the largest city in the corresponding region. For each of these countries we collected the total gross domestic product (GDP) and the total population as additional parameters. Further, we assign a harbor to each country with access to the sea, using the largest harbor in the country/region. The full world map with the location of all nodes is shown in Fig.~\ref{fig:worldmap_example}. See the Supplementary Data for all data and sources.\\

Transport links exist via land between countries sharing a border. The transaction cost of these links is simply given as the geodesic distance between the connected nodes. Transport links via sea exist between all countries/regions with a harbor. Here, the transaction cost is given as the sum of the distance of the corresponding node to the respective harbor for both countries and the length of the actual sea route, $t_{e_{ij}} = t_i^\mathrm{harbor} + t_j^\mathrm{harbor} + T_S t_\mathrm{sea}$, where $t_i^\mathrm{harbor}$ is the distance of the node corresponding to country $i$ to its largest harbor. These sea routes were determined as shortest paths across a triangulation of the worlds oceans, explicitly including important routes and channels (for example the Panama and Suez channel). In order to change the preferred mode of transport we introduce the factor $T_S$ describing the relative cost of sea travel compared to land travel. To determine the transaction cost for sea links we scale the length of the sea route with this factor, such that $T_S < 1$ effectively means that sea travel is  preferred, while for $T_S>1$ land travel is comparatively cheaper.\\

To determine the parameters for our model we used the population and GDP of each country: $b_i = b = \mathrm{const.}$, $a = 1$ and $D_i \sim 1.15 P_i + G_i$, where $P_i$ and $G_i$ are the population and GDP of the corresponding region relative to the world total. The demand is then scaled such that, as in the other examples, the total demand is $\sum_i D_i = 1$. Even though the demand is not identical for all nodes, we use the local percolation algorithm to solve the optimization problem. As discussed above, this means the algorithm does not necessarily make an optimal update each step and the solution using the exact algorithm may be different. For comparison, we show the exact solution (algorithm \ref{alg:alg1}) in Fig.~\ref{fig:worldmap_exact_result}. The qualitative and quantitative behavior is almost identical to the results shown in the main manuscript, differences to the exact solution are negligible.

\begin{figure}[h]
\centering
\includegraphics[width=0.9\textwidth]{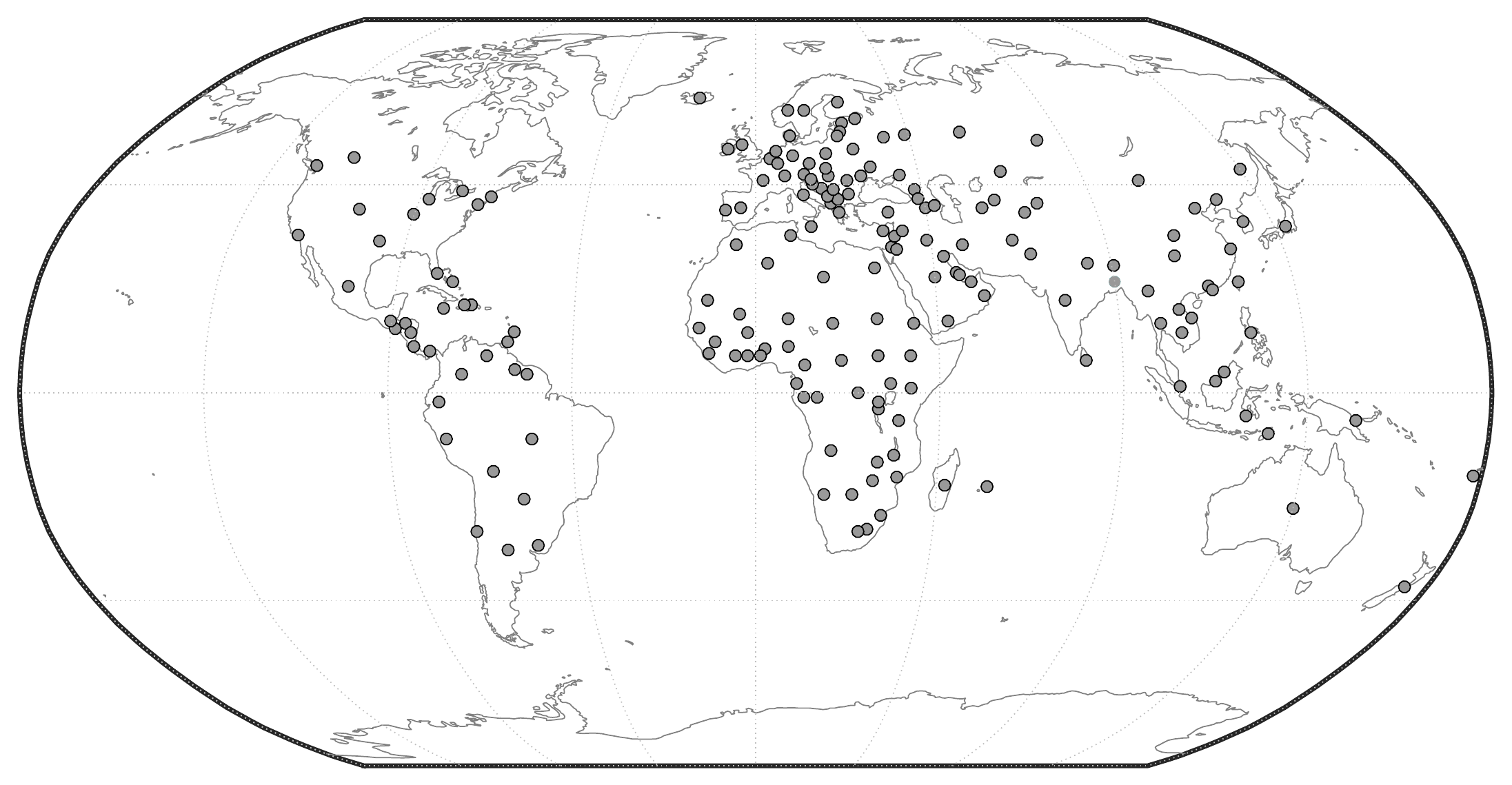}
\caption{Location of all nodes for the world transport network. Possible transport links (not shown) exist between countries sharing a border (via land) and between all countries with access to a harbor (via sea).}
\label{fig:worldmap_example}
\end{figure}

\newpage

Note, that this similarity is not a-priori obvious. Especially, differences will be larger in networks with a more heterogeneous demand distribution. Additionally, the structure of the network will affect the results: in this case, the differences are smaller when transport via sea is cheap. The shortest paths are the direct connections between nodes via sea such that the local percolation algorithm checks almost all nodes regardless. Results are more likely to differ for very sparse networks with a large diameter.\\

We also illustrate here the impact of large changes in the transaction costs. In the main manuscript and all previous discussions we assumed a slow, gradual decrease/increase of the transaction costs, but the algorithms also allow to simulate arbitrary changes. In Fig.~\ref{fig:worldmap_instant_decrease} we compare an instant decrease from very large transaction costs to a given value with the results obtained for a gradual decrease. Different suppliers emerge in both cases. More importantly, in the case of instantaneous reduction of transaction cost, the center of the emerging cluster changes, depending on the final value of the transaction costs, in a different way compared to the results for the gradual decrease. This further illustrates the impact of history on the evolution of the model.

\begin{figure}[h]
\centering
\includegraphics[width=0.9\textwidth]{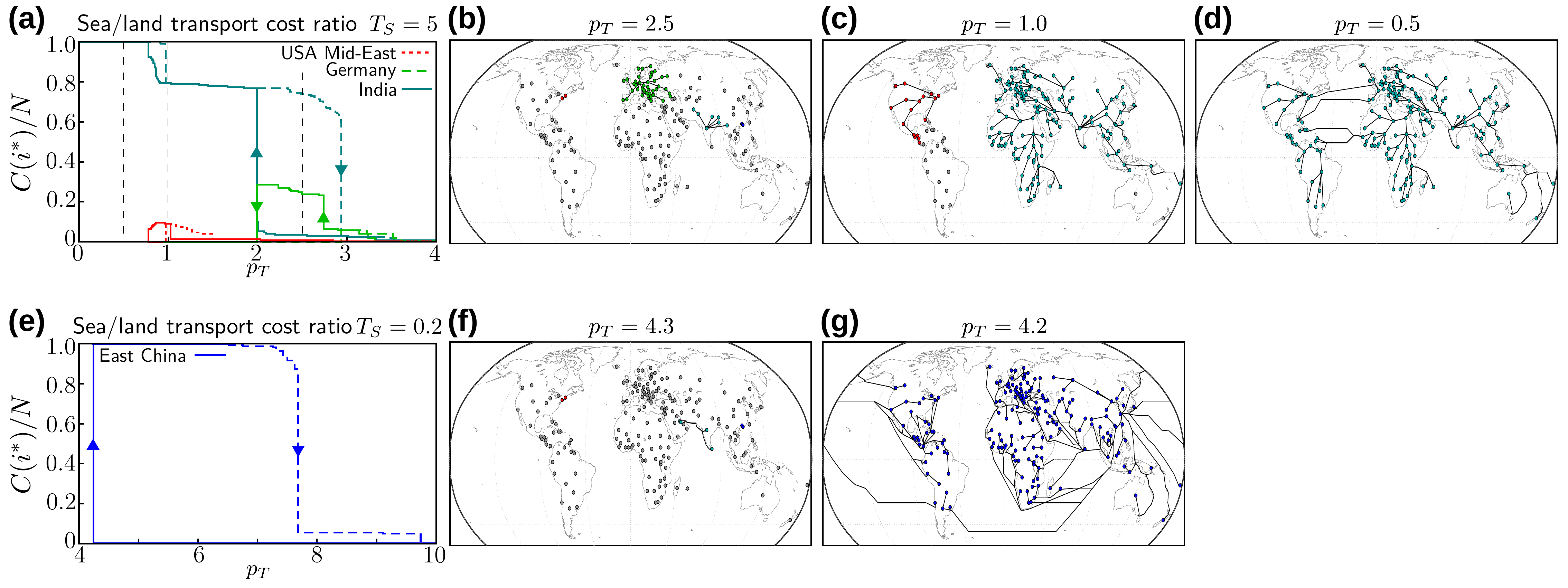}
\caption{Exact solution of the optimization model in a global transportation network. (a) Evolution of the connected components of individual suppliers for large sea travel costs ($T_S=5$). (b)-(d) Evolution of the individual clusters for different values of the transaction cost parameter $p_T$, land routes are preferred to transaction via sea.
(e) Evolution of the connected components of individual suppliers for small sea travel costs ($T_S=0.2$). (f),(g) State of the network immediately before and after the cascade.
The results are almost identical to those obtained with the local percolation algorithm (see main manuscript). The only difference is a small shift of $p_T$ for some of the transitions.}
\label{fig:worldmap_exact_result}
\end{figure}

\begin{figure}[h]
\centering
\includegraphics[width=0.9\textwidth]{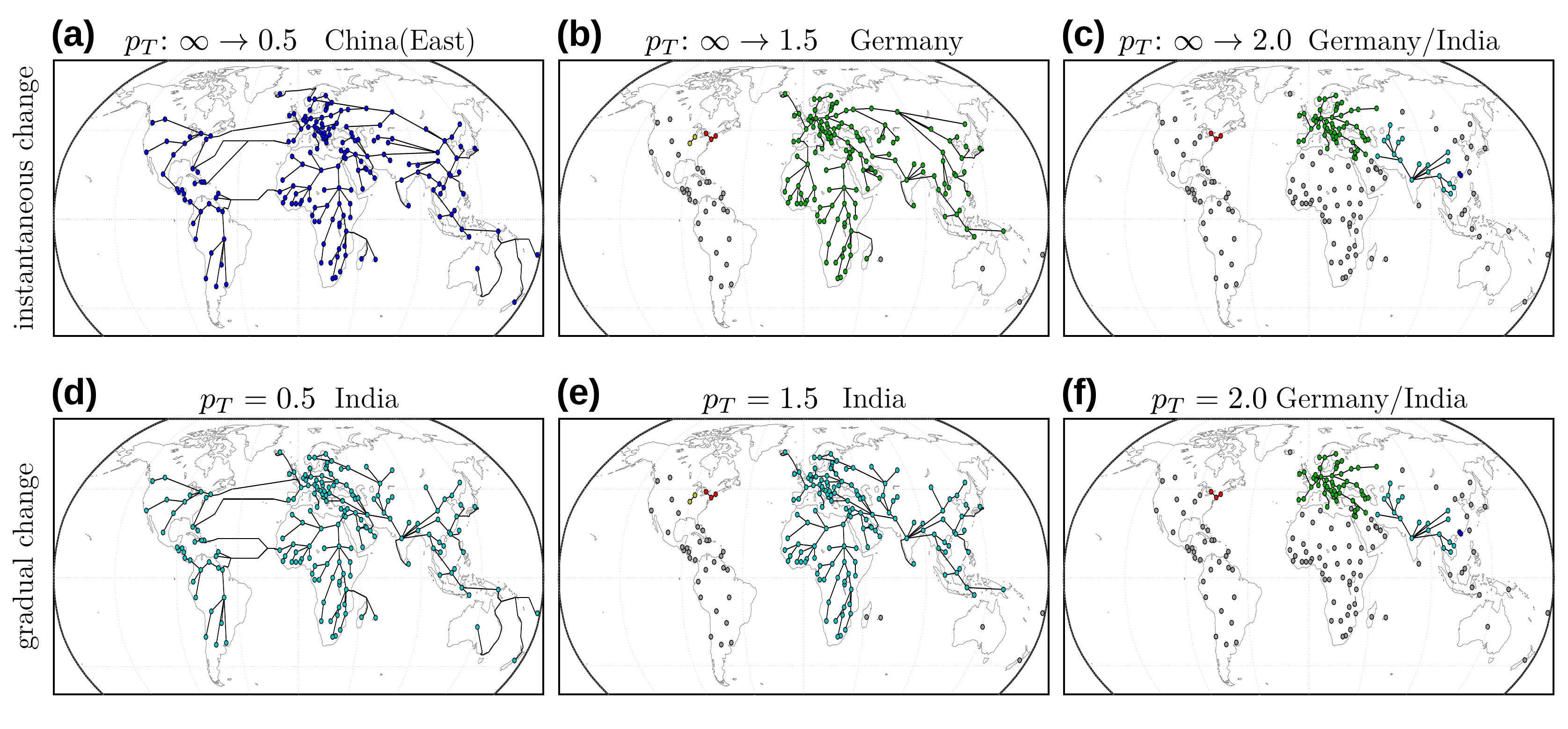}
\caption{Comparison of instantaneous and gradual decrease of the transaction cost. (a)-(c) Resulting network, when the transaction cost $p_T$ are reduced instantaneously to a given value (parameters are identical to other simulations, large sea travel costs $T_S = 5$). (d)-(f) Resulting network, when the transaction cost $p_T$ are reduced gradually. Depending on the final value of the transaction cost, different nodes emerge as suppliers of large clusters. Interestingly, if transaction costs are decreased far enough, the resulting network differs from the one obtained by gradual reduction of $p_T$ [panels (d-f)]. If only a few changes occur in the network [large transaction costs, panels (c) and (f)], there is no difference in the history and the resulting networks are identical. Otherwise, a gradual reduction leads to different updates before the final value of $p_T$ is obtained and consequently to generally different states. This further illustrates the multitude of stable states for any given value of the transaction costs.}
\label{fig:worldmap_instant_decrease}
\end{figure}

\end{document}